%% file: example_paper.tex
\documentclass{article}

\usepackage{thm-restate}

\usepackage{microtype}
\usepackage{graphicx}
\usepackage{subcaption}
\usepackage{booktabs} 

\usepackage[hypertexnames=false]{hyperref}

\usepackage[preprint]{icml2026}

\usepackage{amsmath}
\usepackage{amssymb}
\usepackage{mathtools}
\usepackage{amsthm}

\usepackage[textsize=tiny]{todonotes}

\usepackage{tikz}
\usetikzlibrary{arrows.meta,positioning}

\allowdisplaybreaks

\usepackage[capitalize,noabbrev]{cleveref}

\input{macros}

\DeclareSymbolFont{extraup}{U}{zavm}{m}{n}
\DeclareMathSymbol{\varheart}{\mathalpha}{extraup}{86}
\DeclareMathSymbol{\vardiamond}{\mathalpha}{extraup}{87}

\usepackage[textsize=tiny]{todonotes}

\icmltitlerunning{Do We Need Asynchronous SGD? On the Near-Optimality of Synchronous Methods}

\begin{document}

\twocolumn[
  \icmltitle{Do We Need Asynchronous SGD? \\ On the Near-Optimality of Synchronous Methods}



  \icmlsetsymbol{equal}{*}

\begin{icmlauthorlist}
\icmlauthor{Grigory Begunov}{axxx,msu}
\icmlauthor{Alexander Tyurin}{axxx}
\end{icmlauthorlist}

\icmlaffiliation{axxx}{AXXX, Moscow, Russia}
\icmlaffiliation{msu}{Lomonosov Moscow State University, Moscow, Russia}

\icmlcorrespondingauthor{Alexander Tyurin}{\href{alexandertiurin@gmail.com}{alexandertiurin@gmail.com}}

  \icmlkeywords{Machine Learning, ICML}

  \vskip 0.3in
]



\printAffiliationsAndNotice{}  

\begin{abstract}
    Modern distributed optimization methods mostly rely on traditional \emph{synchronous} approaches, despite substantial recent progress in \emph{asynchronous} optimization. We revisit Synchronous SGD and its robust variant, called $m$-Synchronous SGD, and \emph{theoretically} show that they are nearly optimal in many heterogeneous computation scenarios, which is somewhat unexpected. We analyze the synchronous methods under random computation times and adversarial partial participation of workers, and prove that their time complexities are optimal in many practical regimes, up to logarithmic factors. While synchronous methods are not universal solutions and there exist tasks where asynchronous methods may be necessary, we show that they are sufficient for many modern heterogeneous computation scenarios.
\end{abstract}

\section{Introduction}
\label{sec:introduction}
We consider the stochastic optimization problem of minimizing a function:
\begin{align*}
  \min\limits_{x \in \R^d} f(x),
\end{align*}
where $f\,:\, \R^d \to \R$ is a smooth function (Assumption \ref{ass:lipschitz_constant}). There are $n$ workers who can only calculate stochastic gradients $\nabla f(x; \xi)$ at any point $x \in \R^d$ with $\xi \sim \mathcal{D}$, where $\mathcal{D}$ is some distribution, such that $\nabla f(x; \xi)$ is an unbiased estimate of $\nabla f(x)$ and has $\sigma^2$--bounded variance (Assumption \ref{ass:stochastic_variance_bounded}). Our goal in the nonconvex setting is to find a (potentially random) vector $\bar{x} \in \R^d$ such that $\mathbb{E}[\norm{\nabla f(\bar{x})}^2] \leq \varepsilon$, i.e., an $\varepsilon$--stationary point.
This is a standard optimization problem in machine learning, federated learning, training deep neural networks, and large language models \citep{konevcny2016federated,mcmahan2016federated}. We focus on the setting in which multiple workers (e.g., GPUs, servers) work together to find an $\varepsilon$--stationary point, 
and we consider realistic scenarios where the computation times of stochastic gradients are not fixed and may be different for each worker, random due to natural computation fluctuations and spontaneous outages, and potentially varying over time.
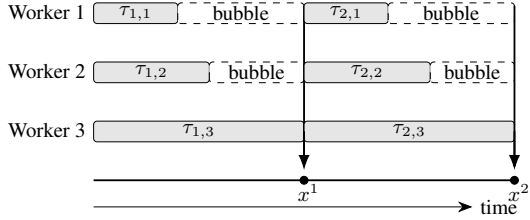
\begin{figure*}[t]
\centering
\begin{minipage}[t]{0.49\textwidth}
\begin{figure}[H]
\centering
\resizebox{0.85\linewidth}{!}{%
\begin{tikzpicture}[x=0.16cm,y=0.9cm, font=\small]

  \draw[-{Stealth[length=2.2mm]}] (0,-0.1) -- (36,-0.1) node[right] {time};

  \def\tOneStart{0}
  \def\tOneEnd{20}

  \def\tTwoStart{20}
  \def\tTwoEnd{40}

  \def\tOneWOne{8}
  \def\tOneWTwo{11}
  \def\tOneWThree{20}

  \def\tTwoWOne{28}
  \def\tTwoWTwo{32}
  \def\tTwoWThree{40}

  \def\BubbleLabel{bubble}
  \def\SyncLabel{sync}

  \tikzset{
    comp/.style={draw, rounded corners=1.5pt, minimum height=6mm, fill=black!10},
    bubble/.style={draw, rounded corners=1.5pt, minimum height=6mm, fill=white, dashed},
    sync/.style={draw, thick},
    msg/.style={-Latex, thick},
    upd/.style={circle, fill=black, draw=black, minimum size=1.2mm, inner sep=0pt}
  }

  \node[anchor=east] at (0,3.2) {Worker 1};
  \node[anchor=east] at (0,2.2) {Worker 2};
  \node[anchor=east] at (0,1.2) {Worker 3};

  \draw[comp]   (\tOneStart,3) rectangle (\tOneWOne,3.35) node[midway] {$\tau_{1,1}$};
  \draw[bubble] (\tOneWOne,3)  rectangle (\tOneEnd,3.35) node[midway] {\BubbleLabel};

  \draw[comp]   (\tOneStart,2) rectangle (\tOneWTwo,2.35) node[midway] {$\tau_{1,2}$};
  \draw[bubble] (\tOneWTwo,2)  rectangle (\tOneEnd,2.35) node[midway] {\BubbleLabel};

  \draw[comp]   (\tOneStart,1) rectangle (\tOneWThree,1.35) node[midway] {$\tau_{1,3}$};


  \draw[sync] (\tOneEnd,0.6) -- (\tOneEnd,3.3);

  \draw[comp]   (\tTwoStart,3) rectangle (\tTwoWOne,3.35) node[midway] {$\tau_{2,1}$};
  \draw[bubble] (\tTwoWOne,3)  rectangle (\tTwoEnd,3.35) node[midway] {\BubbleLabel};

  \draw[comp]   (\tTwoStart,2) rectangle (\tTwoWTwo,2.35) node[midway] {$\tau_{2,2}$};
  \draw[bubble] (\tTwoWTwo,2)  rectangle (\tTwoEnd,2.35) node[midway] {\BubbleLabel};

  \draw[comp]   (\tTwoStart,1) rectangle (\tTwoWThree,1.35) node[midway] {$\tau_{2,3}$};

  \draw[sync] (\tTwoEnd,0.6) -- (\tTwoEnd,3.3);

  \draw[thick] (\tOneStart,0.35) -- (\tTwoEnd,0.35);

  \foreach \t/\y in {\tOneWThree/1.18}{
    \draw[msg] (\t,\y) -- (\t,0.52);
    \node[upd] (u1) at (\t,0.35) {};
    \node[anchor=north, xshift=2pt, yshift=4pt] at (u1.south) {$x^1$};
  }

  \foreach \t/\y in {\tTwoWThree/1.18}{
    \draw[msg] (\t,\y) -- (\t,0.52);
    \node[upd] (u2) at (\t,0.35) {};
    \node[anchor=north, xshift=2pt, yshift=4pt] at (u2.south) {$x^2$};
  }

\end{tikzpicture}%
}
\caption{Wall-clock time visualization of Synchronous SGD (Algorithm~\ref{alg:alg_orig} or \ref{alg:alg_server_m_star}) with 3 workers and heterogeneous computation times.}
\label{fig:three_workers_bubbles}
\end{figure}
\end{minipage}\hfill
\begin{minipage}[t]{0.49\textwidth}
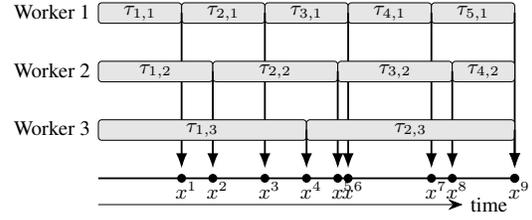
\begin{figure}[H]
\centering
\resizebox{0.85\linewidth}{!}{%
\begin{tikzpicture}[x=0.16cm,y=0.9cm, font=\small]

  \draw[-{Stealth[length=2.2mm]}] (0,-0.1) -- (35,-0.1) node[right] {time};

  \def\tEnd{40}

  \def\dWOneA{8}   \def\dWOneB{8}
  \def\dWTwoA{11}  \def\dWTwoB{12}
  \def\dWThreeA{20}\def\dWThreeB{20}

  \def\AsyncLabel{async update}
  \def\ServerLabel{Server}

  \tikzset{
    comp/.style={draw, rounded corners=1.5pt, minimum height=6mm, fill=black!10},
    serverline/.style={draw, thick},
    msg/.style={-Latex, thick},
    upd/.style={fill=black, draw=black},
    tick/.style={draw=black!60, dashed, line width=0.35pt},
  }

  \node[anchor=east] at (0,3.2) {Worker 1};
  \node[anchor=east] at (0,2.2) {Worker 2};
  \node[anchor=east] at (0,1.2) {Worker 3};

  \draw[serverline] (0,0.35) -- (\tEnd,0.35);

  \draw[comp] (0,3) rectangle (\dWOneA,3.35) node[midway] {$\tau_{1,1}$};
  \draw[comp] (\dWOneA,3) rectangle ({\dWOneA+\dWOneB},3.35) node[midway] {$\tau_{2,1}$};
  \draw[comp] ({\dWOneA+\dWOneB},3) rectangle ({\dWOneA+2*\dWOneB},3.35) node[midway] {$\tau_{3,1}$};
  \draw[comp] ({\dWOneA+2*\dWOneB},3) rectangle ({\dWOneA+3*\dWOneB},3.35) node[midway] {$\tau_{4,1}$};
  \draw[comp] ({\dWOneA+3*\dWOneB},3) rectangle (\tEnd,3.35) node[midway] {$\tau_{5,1}$};

  \foreach \t/\lab in {8/$x^1$,16/$x^3$,24/$x^6$,32/$x^7$,40/$x^9$}{
    \draw[msg] (\t,3.18) -- (\t,0.52);
    \node[upd, circle, inner sep=1.2pt] (u\t) at (\t,0.35) {};
    \node[anchor=north, xshift=2pt, yshift=4pt] at (u\t.south) {\lab};
  }

  \draw[comp] (0,2) rectangle (\dWTwoA,2.35) node[midway] {$\tau_{1,2}$};
  \draw[comp] (\dWTwoA,2) rectangle ({\dWTwoA+\dWTwoB},2.35) node[midway] {$\tau_{2,2}$};
  \draw[comp] ({\dWTwoA+\dWTwoB},2) rectangle ({\dWTwoA+\dWTwoB+\dWTwoA},2.35) node[midway] {$\tau_{3,2}$};
  \draw[comp] ({\dWTwoA+\dWTwoB+\dWTwoA},2) rectangle (\tEnd,2.35) node[midway] {$\tau_{4,2}$};

  \foreach \t/\lab in {11/$x^2$,23/$x^5$,34/$x^8$}{
    \draw[msg] (\t,2.18) -- (\t,0.52);
    \node[upd, circle, inner sep=1.2pt] (u\t) at (\t,0.35) {};
    \node[anchor=north, xshift=2pt, yshift=4pt] at (u\t.south) {\lab};
  }

  \draw[comp] (0,1) rectangle (\dWThreeA,1.35) node[midway] {$\tau_{1,3}$};
  \draw[comp] (\dWThreeA,1) rectangle (\tEnd,1.35) node[midway] {$\tau_{2,3}$};

  \foreach \t/\lab in {20/$x^4$,40/$\,$}{
  \draw[msg] (\t,1.18) -- (\t,0.52);
  \node[upd, circle, inner sep=1.2pt] (u\t) at (\t,0.35) {};
  \node[anchor=north, xshift=2pt, yshift=4pt] at (u\t.south) {\lab};
  }


\end{tikzpicture}
}
\caption{Wall-clock time visualization of Asynchronous SGD (Algorithm~\ref{alg:asgd}) with 3 workers and heterogeneous computation times.}
\label{fig:three_workers_async}
\end{figure}
\end{minipage}
\textbf{\small Surprisingly, this paper demonstrates that ``bubbles'' are not a fundamental obstacle, and that Synchronous SGD and $m$-Synchronous SGD can remain nearly optimal despite their presence under different heterogeneous computation assumptions.}
\end{figure*}

\label{sec:known_methods}
\textbf{Known methods in asynchronous optimization.} The most naive and simplest method in this setting is Synchronous SGD,
where all workers calculate \emph{one} stochastic gradient each, aggregate them, and perform stochastic gradient descent:
\begingroup
\setlength{\textfloatsep}{3pt}
\setlength{\intextsep}{3pt}
\setlength{\floatsep}{0pt}
\begin{algorithm}[H]
  \caption{Synchronous SGD}
  \label{alg:alg_orig}
  \begin{algorithmic}[1]
  \STATE \textbf{Input:} point $x^0 \in \R^d$, stepsize $\gamma > 0$
  \FOR{$k = 0, \dots, K - 1$}
  \STATE All workers calculate stochastic gradients at $x^k$
  \STATE Aggregate them: $g^k = \frac{1}{n}\sum_{i=1}^{n} \nabla f(x^k; \xi_i)$ 
  \STATE Update $x^{k+1} = x^k - \gamma g^k$
  \ENDFOR
  \end{algorithmic}
\end{algorithm}
\endgroup
The method has a clear disadvantage: in every iteration, it waits for the slowest worker, leading to ``bubbles'' in computation and idle workers (see Figure~\ref{fig:three_workers_bubbles}). The same problem applies to the synchronous ADAM and Muon methods\footnote{In this paper, we focus on the SGD update since it is theoretically optimal \citep{arjevani2022lower,tyurin2023optimal}; however, all our conclusions potentially apply to other updates.} \citep{kingma2014adam,jordan2024muon}.

In order to improve computational efficiency and avoid ``bubbles,'' several asynchronous methods have been proposed. One of the most celebrated is Asynchronous SGD \citep{tsitsiklis1986distributed,recht2011hogwild}:
\begingroup
\setlength{\textfloatsep}{3pt}
\setlength{\intextsep}{3pt}
\setlength{\floatsep}{0pt}
\begin{algorithm}[H]
    \caption{\algname{Asynchronous SGD}}
    \label{alg:asgd}
    \begin{algorithmic}
        \STATE \textbf{Input:} point $x^0 \in \R^d$, stepsizes $\gamma_k \geq 0$
        \STATE Workers start computing stochastic gradients at $x^0$
        \FOR{$k = 0, \dots, K - 1$}
            \STATE Gradient $\nabla f(x^{k-\delta^k}; \xi^{k-\delta^k}_{i})$ arrives from worker $i$
            \STATE Update $x^{k+1} = x^{k} - \gamma_k \nabla f(x^{k-\delta^k}; \xi^{k-\delta^k}_{i})$
            \STATE Worker $i$ begins calculating $\nabla f(x^{k+1}; \xi^{k+1}_{i})$
        \ENDFOR
    \end{algorithmic}
\end{algorithm}
\endgroup
The main idea is that, instead of waiting for all workers, Algorithm~\ref{alg:asgd} immediately updates the iterate once a stochastic gradient is received from any worker (see Figure~\ref{fig:three_workers_async}). The theoretical properties of it have been analyzed in \citep{lian2015asynchronous,feyzmahdavian2016asynchronous,stich2020error,sra2016adadelay,cohen2021asynchronous,koloskova2022sharper,mishchenko2022asynchronous,maranjyan2025ringmaster}.

Another method designed for asynchronous settings is Rennala SGD \citep{tyurin2023optimal}. The main idea is slightly different compared to Asynchronous SGD: instead of immediately updating the current iterate, Rennala SGD employs an asynchronous batch aggregation strategy to reduce ``bubbles'' and idle time. Additionally, there are many other asynchronous methods with state-of-the-art theoretical properties \citep{tyurin2025birch}.

\textbf{Time complexities.} Many synchronous and asynchronous methods have been developed. \emph{How can we theoretically compare them?} \citet{mishchenko2022asynchronous} proposed assuming that, for all $i \in [n],$ the $i$\textsuperscript{th} worker requires $\tau_i$ seconds to calculate a stochastic gradient. Without loss of generality (w.l.o.g.), we assume that $\tau_1 \leq \dots \leq \tau_n.$ 
Using this assumption, we can compare methods.
One can proof that the \emph{time complexity} of Synchronous SGD (Algorithm~\ref{alg:alg_orig}) is
\begin{align}
\label{eq:ELIDwwYnjnAnCj}
\textstyle \cO\left(\tau_n \max\left\{\frac{L \Delta}{\varepsilon}, \frac{\sigma^2 L \Delta}{n \varepsilon^2}\right\}\right)
\end{align}
seconds under the standard assumptions from Section~\ref{sec:assumption}, where the result follows from the fact that Synchronous SGD converges after $\max\left\{\nicefrac{L \Delta}{\varepsilon}, \nicefrac{\sigma^2 L \Delta}{n \varepsilon^2}\right\}$ iterations \citep{lan2020first}, and each iteration requires $\tau_n$ seconds, since the method waits for the slowest worker. At the same time, it was shown \citep{tyurin2023optimal,maranjyan2025ringmaster,tyurin2025birch} that the time complexity of Rennala SGD, Asynchronous SGD (Ringmaster ASGD), and other asynchronous methods is $T_{\textnormal{optimal}} \eqdef$
\begin{equation}
\begin{aligned}
    \label{eq:optimal_const}
    &\textstyle \Theta\left(\min \limits_{m \in [n]} \left[\left(\frac{1}{m}\sum\limits_{i=1}^m \frac{1}{\tau_i}\right)^{-1}\left(\max\left\{\frac{L \Delta}{\varepsilon}, \frac{\sigma^2 L \Delta}{m \varepsilon^2}\right\}\right)\right]\right),
\end{aligned}
\end{equation}
which is also \emph{optimal}. The time complexities \eqref{eq:ELIDwwYnjnAnCj} and \eqref{eq:optimal_const} formalize the intuition that Synchronous SGD is suboptimal and does not fully utilize the workers' time. In particular, \eqref{eq:ELIDwwYnjnAnCj} is never better than \eqref{eq:optimal_const}, but can be arbitrarily worse. For instance, take $\tau_n \to \infty,$ i.e., the last worker is extremely slow. In this case, $\eqref{eq:ELIDwwYnjnAnCj} \to \infty,$ while \eqref{eq:optimal_const} remains stable due to its harmonic dependence on $\{\tau_i\}.$

\textbf{Is Synchronous SGD fundamentally limited?} Indeed, Synchronous SGD (Algorithm~\ref{alg:alg_orig}) is not robust to large computation times of slow workers. This leads to our first research question: instead of designing and using asynchronous methods, would it be better to simply ignore slow workers and use Synchronous SGD with the fastest ones during optimization? To this end, consider $m$-Synchronous SGD (Algorithm~\ref{alg:alg_server_m_star}) with the only difference being that, at each iteration, it aggregates \emph{one stochastic gradient} from each of the first $m$ workers to finish, rather than from all $n$ workers:
\begingroup
\setlength{\textfloatsep}{3pt}
\setlength{\intextsep}{3pt}
\setlength{\floatsep}{0pt}
\begin{algorithm}[H]
  \caption{$m$-Synchronous SGD}
  \label{alg:alg_server_m_star}
  \begin{algorithmic}[1]
  \STATE \textbf{Input:} point $x^0 \in \R^d$, stepsize $\gamma$, \# of workers $m \in [n]$
  \FOR{$k = 0, 1, \dots, K - 1$}
  \STATE Each idle worker starts computing a stochastic gradient at $x^k$
  \STATE Aggregate from the first $m$ workers in this iteration\footnotemark: $\textstyle g^k = \frac{1}{m}\sum\limits_{i \in S_k} \nabla f(x^k; \xi_i^k),\,\,S_k \subseteq [n], \,\, \abs{S_k} = m$
  \STATE Update $x^{k+1} = x^k - \gamma g^k$
  \STATE Discard any gradients for iteration $k$ that arrive after the update
  \ENDFOR
  \end{algorithmic}
\end{algorithm}
\endgroup
\footnotetext{Not necessarily from worker $1$ to worker $m$. It can be any of the first $m$ workers from the set $[n]$ who finish computing at $x^k$.}
Notice that $m$-Synchronous SGD is still synchronous and asks for only one stochastic gradient from each of the $m$ participating workers; it reduces to Synchronous SGD when $m = n$ and can be viewed as a robust version of it that ignores $n - m$ stragglers.

A related question is the following: while $\eqref{eq:ELIDwwYnjnAnCj} \to \infty$ as $\tau_n \to \infty$, in practice $\tau_n < \infty$, and it may be the case that the resulting gap is not too large. Even if $\tau_n \gg \tau_1,$ it might be possible that $\eqref{eq:ELIDwwYnjnAnCj}$ is optimal or nearly optimal\footnote{We say that a method is nearly optimal if it is optimal up to a constant or logarithmic factor.}. Moreover, even if the computation times are random, their variance may not be large enough to make Synchronous SGD (Algorithm~\ref{alg:alg_orig}) asymptotically suboptimal.

Synchronous SGD is still the de facto and most widely used method in practice, there is empirical evidence that it is actually fast and yields better accuracy \citep{chen2016revisiting,megatron-lm,rajbhandari2020zero,grattafiori2024llama}. Can we provide a \emph{theoretical justification} for its time performance?

\subsection{Contributions}
We provide theoretical evidence that synchronous methods are nearly as fast as modern asynchronous methods in many heterogeneous computational scenarios. 

$\spadesuit$ We first show that, under Assumption~\ref{ass:fixed_computation_model}, $m$-Synchronous SGD (Algorithm~\ref{alg:alg_server_m_star}) is nearly optimal up to a $\log (n + 1)$ factor for arbitrary heterogeneous worker computation times, a somewhat unexpected result (see Section~\ref{sec:sync_optimal}).

$\clubsuit$ We then extend this result to Assumption~\ref{ass:general_heter_random_computation_model}, where worker computation times are random, and show that $m$-Synchronous SGD (Algorithm~\ref{alg:alg_server_m_star}) remains nearly optimal in many practical settings. To establish optimality, we derive a new lower bound (Theorem~\ref{thm:random_lower_bound}), which we believe is of independent interest (see Section~\ref{sec:random}).

$\vardiamond$ Even the standard Synchronous SGD method with all workers participating is nearly optimal in some important regimes: (i) when computation times are random with equal means, for example, when they follow the exponential distribution (see Section~\ref{sec:random} and Corollary~\ref{cor:random}); and (ii) when the means are not equal but follow a power law $\tau_m = \tau_1 m^{\alpha} + \delta_m$, with $0 \le \alpha \le 1$ and bounded ${\delta_m}$. In this case, Synchronous SGD is still nearly optimal, for instance, for the truncated normal distribution, even though $\tau_n \gg \tau_1$ (see Section~\ref{sec:optimal_m} and Proposition~\ref{prop:tau}).

$\varheart$ We also consider scenarios in which computation trends are non-stationary and vary over time. One important result is that $m$-Synchronous SGD (Algorithm~\ref{alg:alg_server_m_star}) is optimal under arbitrary, and even adversarial, worker participation (Assumption~\ref{ass:general_heter_random_computation_model}) (see Section~\ref{sec:univ}).

We acknowledge in Section~\ref{sec:where_needed} that synchronous methods are not a silver bullet, and that there are several scenarios where asynchronous methods are still needed. Nevertheless, somewhat counterintuitively, we prove that Synchronous SGD and $m$-Synchronous SGD are nearly optimal in a wide range of heterogeneous computation settings, which may be sufficient in practice.

\subsection{Assumptions}
\label{sec:assumption}
We consider the standard assumptions from optimization:
\begin{assumption}
    \label{ass:lipschitz_constant}
    Function $f$ is $L$--smooth:
    $\norm{\nabla f(x) - \nabla f(y)} \leq L \norm{x - y}$ for all $x, y \in \R^d.$
 \end{assumption}

 \begin{assumption}
    \label{ass:lower_bound}
    There exist $f^* \in \R$ such that $f(x) \geq f^*$ for all $x \in \R^d$. 
    We define $\Delta \eqdef f(x^0) - f^*,$ where $x^0$ is the starting point of optimization methods.
 \end{assumption}

\begin{assumption}
    \label{ass:stochastic_variance_bounded}
    The stochastic gradients $\nabla f(x; \xi)$ are unbiased and have $\sigma^2$--bounded variance: $\ExpSub{\xi}{\nabla f(x;\xi)} = \nabla f(x)$ and ${\mathbb{E}}_{\xi}[\|\nabla f(x;\xi) - \nabla f(x)\|^2] \leq \sigma^2$ for all $x \in \R^d,$ where $\sigma > 0$ is some constant.
\end{assumption} 

\section{Synchronous Methods Are Nearly Optimal under Fixed Computation Model}
\label{sec:sync_optimal}
In order to present our main result, we first recall the classical SGD result for the required number of iterations to converge:
\begin{theorem}[Iteration Complexity; \citep{lan2020first}]
    \label{thm:sync_sgd}
    Under Assumptions \ref{ass:lipschitz_constant}, \ref{ass:lower_bound}, and \ref{ass:stochastic_variance_bounded}, let $\gamma = \min\left\{ \frac{1}{2 L}, \frac{\varepsilon m}{4L\sigma^2} \right\}$ in Algorithm~\ref{alg:alg_server_m_star}. Then
    $\frac{1}{K+1}\sum \limits_{k=0}^{K} \Exp{\sqnorm{\nabla f (x^k)}} \le \varepsilon$ after
    \begin{align}
       \label{eq:HPMNVlgxoiRgUbRtj}
       \textstyle K \eqdef \ceil{16 \max\left\{\frac{L \Delta}{\varepsilon}, \frac{\sigma^2 L \Delta}{m \varepsilon^2}\right\}}
    \end{align}
    iterations, where $m \in [n]$ is an arbitrary number.
\end{theorem}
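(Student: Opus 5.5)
The plan is to run the standard one-step descent analysis of SGD, treating $g^k$ in Algorithm~\ref{alg:alg_server_m_star} as a minibatch estimator of batch size $m$.

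\textbf{Step 1: properties of $g^k$.} Conditioned on $x^k$, the set $S_k$ depends only on the workers' computation times, not on the samples $\xi_i^k$, and these samples are drawn independently at the common point $x^k$. Hence $g^k$ is an average of $m$ i.i.d.\ unbiased estimators, and Assumption~\ref{ass:stochastic_variance_bounded} gives
\[
\Exp{g^k \mid x^k} = \nabla f(x^k), \qquad \Exp{\sqnorm{g^k - \nabla f(x^k)} \mid x^k} \le \frac{\sigma^2}{m}.
\]
The subtle point is exactly this independence of the selection from the noise. Under the fixed or random computation-time models this holds because times are independent of $\xi$. Gradients computed at stale points are discarded, so every aggregated gradient is evaluated at $x^k$.

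\textbf{Step 2: one-step descent.} By $L$-smoothness (Assumption~\ref{ass:lipschitz_constant}),
\[
f(x^{k+1}) \le f(x^k) - \gamma\langle \nabla f(x^k), g^k\rangle + \frac{L\gamma^2}{2}\sqnorm{g^k}.
\]
Taking conditional expectation and using $\Exp{\sqnorm{g^k}\mid x^k} \le \sqnorm{\nabla f(x^k)} + \sigma^2/m$ gives
\[
\Exp{f(x^{k+1})} \le \Exp{f(x^k)} - \gamma\left(1 - \frac{L\gamma}{2}\right)\Exp{\sqnorm{\nabla f(x^k)}} + \frac{L\gamma^2\sigma^2}{2m}.
\]
Since $\gamma \le 1/(2L)$, we have $1 - L\gamma/2 \ge 3/4 \ge 1/2$.

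\textbf{Step 3: telescoping.} Sum over $k = 0,\dots,K$ and use $f \ge f^*$ (Assumption~\ref{ass:lower_bound}):
\[
\frac{1}{K+1}\sum_{k=0}^{K}\Exp{\sqnorm{\nabla f(x^k)}} \le \frac{2\Delta}{\gamma(K+1)} + \frac{L\gamma\sigma^2}{m}.
\]
Because $\gamma \le \varepsilon m/(4L\sigma^2)$, the second term is at most $\varepsilon/4$. For the first term,
\[
\frac{1}{\gamma} = \max\left\{2L, \frac{4L\sigma^2}{\varepsilon m}\right\},
\]
so $\frac{2\Delta}{\gamma(K+1)} \le \frac{3\varepsilon}{4}$ as soon as
\[
K + 1 \ge \frac{8}{3}\max\left\{\frac{2L\Delta}{\varepsilon}, \frac{4L\sigma^2\Delta}{m\varepsilon^2}\right\}.
\]
This is implied by $K = \ceil{16\max\left\{\frac{L\Delta}{\varepsilon}, \frac{\sigma^2 L\Delta}{m\varepsilon^2}\right\}}$ from \eqref{eq:HPMNVlgxoiRgUbRtj}, with room to spare.

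\textbf{Main obstacle.} The main obstacle is Step 1, namely justifying that selecting the first $m$ finishers does not bias $g^k$. Once that is granted, Steps 2 and 3 are routine arithmetic with the constants.
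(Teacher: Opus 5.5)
Your proposal is correct and follows the paper's route. The paper gives no separate argument: it notes that Algorithm~\ref{alg:alg_server_m_star} is SGD with batch size $m$ and invokes the standard analysis of \citet{lan2020first}. Your descent-lemma and telescoping computation is exactly that analysis, and the constants check out, since $\frac{8}{3}\max\{2a,4b\} \le \frac{32}{3}\max\{a,b\} \le 16\max\{a,b\}$ with $a = \frac{L\Delta}{\varepsilon}$ and $b = \frac{\sigma^2 L\Delta}{m\varepsilon^2}$. The point you flag in Step~1, that $S_k$ must be independent of the samples $\xi_i^k$, is left implicit in the paper. It is guaranteed by the independence clause in Assumption~\ref{ass:general_heter_random_computation_model}, and by the determinism of the times in the other computation models.
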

Theorem~\ref{thm:sync_sgd} holds since Algorithm~\ref{alg:alg_server_m_star} is SGD with batch size $m.$ At the beginning, we consider the fixed computation model, discussed in Section~\ref{sec:known_methods}:
\begin{assumption}
\label{ass:fixed_computation_model}
For all $i \in [n],$ the $i$\textsuperscript{th} worker requires $\tau_i$ seconds to calculate a stochastic gradient. W.l.o.g., we assume that $\tau_1 \leq \dots \leq \tau_n.$
\end{assumption}
Recall the optimal time complexity $T_{\textnormal{optimal}}$ from \eqref{eq:optimal_const}. In the following theorem, we prove our first result: $m$-Synchronous SGD, with a proper choice of $m$, achieves the optimal time complexity up to a logarithmic factor.
\begin{restatable}[Time Complexity under Fixed Computation Model]{theorem}{TIMECOMPLEXITY}
    \label{thm:sync_sgd_time}
    Under the setting of Theorem~\ref{thm:sync_sgd}, additionally assume that Assumption~\ref{ass:fixed_computation_model} holds. Then, the time complexity of Algorithm~\ref{alg:alg_server_m_star} to find an $\varepsilon$--stationary point is
    \begin{equation}
    \begin{aligned}
        \label{eq:VDoAilJBJwVbJEFTc}
        \textstyle T_{\textnormal{sync}} \eqdef \frac{16 L \Delta}{\varepsilon} \min\limits_{m \in [n]} \left[\tau_m\max\left\{1, \frac{\sigma^2}{m \varepsilon}\right\}\right]
    \end{aligned}
    \end{equation}
    if, in Algorithm~\ref{alg:alg_server_m_star}, $m$ is a minimizer of $\min\limits_{m \in [n]} \left[\tau_m \max\left\{1, \frac{\sigma^2}{m \varepsilon}\right\}\right].$ More importantly, the result is nearly optimal:
    \begin{align}
        \label{eq:EYaDmxXrTww}
        T_{\textnormal{sync}} = \cO\left(T_{\textnormal{optimal}} \times \log (n + 1)\right).
    \end{align}
\end{restatable}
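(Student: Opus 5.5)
The proof has two parts: an upper bound on the wall-clock time of Algorithm~\ref{alg:alg_server_m_star}, and a comparison of that bound with $T_{\textnormal{optimal}}$.

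\textbf{Step 1: time per iteration.} Under Assumption~\ref{ass:fixed_computation_model}, I claim every iteration of Algorithm~\ref{alg:alg_server_m_star} lasts at most $\tau_m$ seconds. Consider workers $1,\dots,m$. At the start of iteration $k$, each of them is either idle, in which case it starts at $x^k$ and finishes within $\tau_i \le \tau_m$ seconds, or still busy with a discarded computation from an earlier iteration. The second case needs care. One option is to define the algorithm so that stale computations are aborted, which is consistent with ``discard''. The other is an induction showing that workers $1,\dots,m$ are always idle at the start of an iteration: if the iteration lasted at most $\tau_m$ and all $m$ fastest workers started at its beginning, they all finished by its end. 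In either reading, the first $m$ gradients at $x^k$ arrive within $\tau_m$ seconds.

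With $K$ from Theorem~\ref{thm:sync_sgd}, the total time is therefore at most $K\tau_m \le \lceil 16\max\{L\Delta/\varepsilon,\sigma^2L\Delta/(m\varepsilon^2)\}\rceil\tau_m$. Since $L\Delta/\varepsilon$ is usually assumed to be at least $1$, the ceiling costs at most a factor $2$ and can be absorbed. Choosing $m$ as the minimizer gives $T_{\textnormal{sync}}$.

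\textbf{Step 2: comparison with $T_{\textnormal{optimal}}$.} This is the key inequality. Fix any $m'\in[n]$ and write $A_{m'} = \big(\frac{1}{m'}\sum_{i\le m'}1/\tau_i\big)^{-1}\max\{1,\sigma^2/(m'\varepsilon)\}$. I need to show
\[
\min_{m\in[n]}\tau_m\max\{1,\tfrac{\sigma^2}{m\varepsilon}\}\le C\log(n+1)\,A_{m'}.
\]
Let $H = \sum_{i\le m'} 1/\tau_i$. The quantity $\frac{m'}{H}\max\{1,\frac{\sigma^2}{m'\varepsilon}\}$ equals $\max\{m',\sigma^2/\varepsilon\}/H$.

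The plan is to find $j\le m'$ with $j/\tau_j \ge H/(1+\log m')$. If instead $1/\tau_j < H/(j(1+\log m'))$ held for every $j\le m'$, then summing over $j$ would give $H < H\sum_{j\le m'} 1/j \big/ (1+\log m') \le H$, a contradiction. For this $j$ we have $\tau_j \le j(1+\log m')/H$, hence
\[
\tau_j\max\{1,\tfrac{\sigma^2}{j\varepsilon}\} = \frac{\tau_j}{j}\max\{j,\tfrac{\sigma^2}{\varepsilon}\} \le \frac{1+\log m'}{H}\max\{m',\tfrac{\sigma^2}{\varepsilon}\},
\]
where the last step uses $j\le m'$. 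The right-hand side is $(1+\log m')A_{m'}$. Taking the minimum over $m'$ and noting $1+\log n = \cO(\log(n+1))$ yields \eqref{eq:EYaDmxXrTww}.

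\textbf{Main obstacle.} The main obstacle is the pigeonhole/harmonic-sum step, which produces exactly the logarithmic factor. The subtlety in Step 1 is making the stale-computation behaviour of Algorithm~\ref{alg:alg_server_m_star} precise enough that each iteration provably costs at most $\tau_m$.
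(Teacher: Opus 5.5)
Your argument is correct. Step~1 is the same as in the paper, which likewise just asserts that each iteration costs at most $\tau_m$ and silently drops the ceiling. Step~2, however, takes a different and more economical route.

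The paper's route:
\begin{itemize}
\item It fixes $m^*$, the smallest minimizer of $\tau_m\max\{1,\sigma^2/(m\varepsilon)\}$, and $\bar m$, the smallest minimizer inside $T_{\textnormal{optimal}}$, and treats $\sigma^2/\varepsilon\le 1$ separately.
\item It proves two structural facts: $\bar m\le\lceil\sigma^2/\varepsilon\rceil\le 2\sigma^2/\varepsilon$, and $\bar m\ge\max\{m^*-1,1\}\ge m^*/2$ (the second by a two-case argument).
\item Minimality of $m^*$ then gives $\tau_i\ge \tau_{m^*}\, i/(2m^*)$ for all $i\le\bar m$, which is substituted into the harmonic mean.
\end{itemize}

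Your route fixes an arbitrary $m'$ and rewrites the two objectives as $\frac{\tau_j}{j}\max\{j,\sigma^2/\varepsilon\}$ and $\max\{m',\sigma^2/\varepsilon\}/H$. It then uses the averaging inequality $H=\sum_{j\le m'}\frac{j}{\tau_j}\cdot\frac{1}{j}\le(1+\log m')\max_{j\le m'}\frac{j}{\tau_j}$ to produce a witness $j\le m'$.

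The harmonic-sum mechanism is identical in both. Your version needs no information about minimizers and no case split, because monotonicity of $\max\{j,\sigma^2/\varepsilon\}$ in $j$ covers both regimes at once. It also gives the sharper constant $1+\log\bar m$, without the paper's extra factors of $2$. The paper's version, in exchange, yields the by-product $m^*\le\bar m+1$ relating the two minimizers.

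One wording fix in your Step~1 induction. To conclude that workers $1,\dots,m$ are idle at the next update, you need the iteration to last \emph{at least} $\tau_m$, not at most. This holds because fewer than $m$ workers have $\tau_i<\tau_m$, so no $m$ gradients at $x^k$ can arrive sooner. The iteration therefore lasts exactly $\tau_m$.
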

To the best of our knowledge, this is the first result showing that a synchronous method is nearly as fast as the optimal asynchronous algorithms for stochastic optimization in the presence of heterogeneous computation times (Assumption \ref{ass:fixed_computation_model}). This result appears to have been overlooked in previous works \citep{cohen2021asynchronous,koloskova2022sharper,mishchenko2022asynchronous, tyurin2023optimal, maranjyan2025ringmaster}. 
It turns out that it is possible to design a nearly optimal \emph{synchronous} method under Assumption \ref{ass:fixed_computation_model}.

\textbf{Proof sketch.} The first term in \eqref{eq:VDoAilJBJwVbJEFTc} is straightforward and follows from Theorem~\ref{thm:sync_sgd} and the fact that the method waits for the fastest $m$ workers. Then, we choose $m \in [n]$ that minimizes $\tau_m\max\left\{1, \nicefrac{\sigma^2}{m \varepsilon}\right\}.$ Under Assumption~\ref{ass:fixed_computation_model}, the fastest $m$ workers are those with indices $\{1,\dots,m\}$.

It is left to prove \eqref{eq:EYaDmxXrTww}. For simplicity, assume that $\nicefrac{\sigma^2}{\varepsilon} \geq n,$ then $m = \arg\min_{m \in [n]} \tau_m\max\left\{1, \nicefrac{\sigma^2}{m \varepsilon}\right\} = \arg\min_{m \in [n]} \nicefrac{\tau_m}{m}.$ Thus, $\nicefrac{\tau_m}{m} \leq \nicefrac{\tau_i}{i}$ and $\tau_i \geq \frac{i}{m} \tau_m$ for all $i \in [n].$ Substituting the last inequality to \eqref{eq:optimal_const}, if $\nicefrac{\sigma^2}{\varepsilon} \geq n,$ then
\begin{align*}
    &\textstyle T_{\textnormal{optimal}} 
    = \Theta\left(\left(\frac{1}{n}\sum\limits_{i=1}^n \frac{1}{\tau_i}\right)^{-1}\frac{\sigma^2 L \Delta}{n \varepsilon^2}\right) \\
    &\textstyle \geq \Omega\left(\frac{L \Delta}{\varepsilon} \left(\sum\limits_{i=1}^n \frac{1}{i}\right)^{-1} \frac{\tau_m \sigma^2}{m \varepsilon}\right) = \Omega\left(\frac{1}{\log(n + 1)}T_{\textnormal{sync}}\right)
\end{align*}
since $\sum_{i=1}^n \frac{1}{i} = \Theta(\log(n + 1)),$ and we get \eqref{eq:EYaDmxXrTww}. The full proof is provided in Section~\ref{sec:proof}. Moreover, the logarithmic term is unavoidable in the case where $\tau_i = i$ for all $i \in [n]$.

\textbf{Why is this result surprising?} Theorem~\ref{thm:sync_sgd_time} shows that we can achieve a nearly optimal rate using standard synchronous SGD with the fastest $m$ workers, \emph{which still compute one stochastic gradient per iteration}. Therefore, the ``bubble'' problem (Figure~\ref{fig:three_workers_bubbles}) is still present, since the fastest worker with computation time $\tau_1$ has to wait for slower workers $2$, $3$, and so on, up to worker $m$. 
Nevertheless, this strategy is nearly optimal. Intuitively, it is always possible to choose $m$ in Algorithm~\ref{alg:alg_server_m_star} such that the size of the ``bubbles'' and the idle time are not too large, and it is as fast as more advanced asynchronous methods. It is also interesting to note that even Synchronous SGD (Algorithm~\ref{alg:alg_orig}) can be nearly optimal despite $\tau_n \gg \tau_1$ (see Section~\ref{sec:optimal_m}).
\section{Random Computation Model}
\label{sec:random}
The obtained result holds not only for the fixed computation model (Assumption~\ref{ass:fixed_computation_model}). One way to generalize Assumption~\ref{ass:fixed_computation_model} is to assume that the computation times are random:
\begin{assumption}
\label{ass:general_heter_random_computation_model}
For all $i \in [n],$ the $i$\textsuperscript{th} worker requires $\bar{\tau}_i$ seconds to calculate a stochastic gradient, where $\bar{\tau}_i$ is a $(\tau_i, R)$--sub-exponential random variable, i.e., $\Exp{\exp(\abs{\bar{\tau}_i - \tau_i} / R)} \leq 2$, $R > 0,$ $\Exp{\bar{\tau}_i} = \tau_i,$ and $\bar{\tau}_i \geq 0$ a.s.. We assume that the sampled times $\{\bar{\tau}_i\}$ and the stochastic random variables $\{\xi_i\}$ are statistically independent. Moreover, $\tau_1 \leq \dots \leq \tau_n$ (w.l.o.g.).
\end{assumption}
This assumption includes a wide range of distributions, including truncated normal, any bounded distributions, all sub-Gaussian distributions, as well as Gamma and exponential distributions \citep{vershynin2018high}. The latter distribution is commonly used to model time delays in real computational systems (e.g., \citep{mitliagkas2016asynchrony, dutta2018slow}). 
\begin{restatable}[Time Complexity under Random Computation Model]{theorem}{TIMECOMPLEXITYRANDOM}
  \label{thm:sync_sgd_time_random}
  Under the setting of Theorem~\ref{thm:sync_sgd}, additionally assume that Assumption~\ref{ass:general_heter_random_computation_model} holds. Then, the time complexity of Algorithm~\ref{alg:alg_server_m_star} to find an $\varepsilon$--stationary point is at most
  \begin{align}
    \label{eq:jUeog}
    \textstyle T_{\textnormal{rand}} \eqdef \sum\limits_{k=0}^{K - 1} \max_{i \in [m]} \bar{\tau}_{k,i}
  \end{align}
  seconds, where $\bar{\tau}_{k,i}$ is a $(\tau_i, R)$--sub-exponential random variable. Moreover, we have
  \begin{align}
    \label{eq:ccJgxrZmhVOoCPhU}
    \textstyle \Exp{T_{\textnormal{rand}}} = \cO\left(\frac{L \Delta}{\varepsilon}\left(\tau_m + R \log (n)\right)\max\left\{1, \frac{\sigma^2}{m \varepsilon}\right\}\right)
  \end{align}
  for all $m \in [n]$ in Algorithm~\ref{alg:alg_server_m_star}.
\end{restatable}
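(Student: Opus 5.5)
The proof has three parts: bound the wall-clock length of one iteration, bound its expectation with a sub-exponential maximal inequality, and multiply by the iteration count from Theorem~\ref{thm:sync_sgd}.

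\textbf{Per-iteration time.} In iteration $k$ every idle worker starts computing at $x^k$. Gradients for iteration $k-1$ that arrive late are discarded. To keep the argument clean, I treat each worker as starting a fresh computation at the beginning of every iteration, with computation time $\bar{\tau}_{k,i}$ drawn independently of the past and distributed as in Assumption~\ref{ass:general_heter_random_computation_model}. Workers still busy from a previous iteration only make things worse for them, so I bound the iteration length using workers $1,\dots,m$ alone. Those $m$ workers all finish by time $\max_{i\in[m]}\bar{\tau}_{k,i}$ after the iteration starts. Hence the first $m$ arrivals from the whole pool occur no later than this, and iteration $k$ lasts at most $\max_{i\in[m]}\bar{\tau}_{k,i}$ seconds. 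The $m$ stochastic gradients used are independent of the times by assumption, so the update is exactly SGD with batch size $m$. Theorem~\ref{thm:sync_sgd} then gives $\varepsilon$-stationarity after $K$ iterations, and summing the per-iteration bounds yields \eqref{eq:jUeog}.

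\textbf{Expected maximum.} Since $K$ is deterministic, linearity of expectation gives $\Exp{T_{\textnormal{rand}}}\le K\,\Exp{\max_{i\in[m]}\bar{\tau}_{k,i}}$. It remains to show $\Exp{\max_{i\in[m]}\bar{\tau}_{k,i}}\le \tau_m + \cO(R\log(n+1))$, and this is the main technical step. First I write $\bar{\tau}_{k,i}\le \tau_i + |\bar{\tau}_{k,i}-\tau_i| \le \tau_m + |\bar{\tau}_{k,i}-\tau_i|$, using $\tau_i\le\tau_m$ for $i\le m$. Next I bound the maximum of the deviations with the exponential-moment (Jensen) trick:
\[
\exp\!\Big(\Exp{\max_i |\bar{\tau}_{k,i}-\tau_i|}/R\Big) \le \Exp{\max_i \exp(|\bar{\tau}_{k,i}-\tau_i|/R)} \le \sum_{i=1}^m \Exp{\exp(|\bar{\tau}_{k,i}-\tau_i|/R)} \le 2m .
\]
Taking logarithms gives $\Exp{\max_i|\bar{\tau}_{k,i}-\tau_i|}\le R\log(2m)\le R\log(2n)$. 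This argument does not need independence across workers.

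\textbf{Conclusion.} Combining the two bounds with $K=\ceil{16\max\{L\Delta/\varepsilon,\sigma^2L\Delta/(m\varepsilon^2)\}}$ gives
\[
\Exp{T_{\textnormal{rand}}} = \cO\!\left(\tfrac{L\Delta}{\varepsilon}\,(\tau_m + R\log n)\max\{1,\tfrac{\sigma^2}{m\varepsilon}\}\right).
\]
Here the ceiling is absorbed because $L\Delta/\varepsilon \gtrsim 1$ in the nontrivial regime (otherwise $x^0$ already works). The $\log(2n)$ is absorbed as $\cO(\log n)$ for $n\ge2$. For $n=1$, $m=1$ and the maximum is a single variable with $\Exp{\bar{\tau}}=\tau_1$, so the bound holds with $\tau_m$ alone.
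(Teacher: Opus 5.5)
\textbf{The gap is in the per-iteration step.} You assume every worker restarts at the beginning of each iteration. Under Algorithm~\ref{alg:alg_server_m_star}, however, only \emph{idle} workers start computing at $x^k$, and (see the Remark in Section~\ref{sec:random}) workers cannot abort a running computation.

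When $m<n$ and the times are random, some worker $i\in[m]$ can lose the race in iteration $k-1$ to a worker $j>m$. At time $t_k$ it is still computing a gradient that will be discarded. It can deliver a gradient at $x^k$ only after the residual of that computation plus a full new one. So the claim that iteration $k$ lasts at most $\max_{i\in[m]}\bar{\tau}_{k,i}$ is false in general.

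Your remark that busy workers ``only make things worse for them'' cuts the wrong way. It shows the fresh-start process is faster than the real one, so its times are not upper bounds for the real process. The obvious per-iteration repair does not rescue you either. The $m$ contributors of iteration $k-1$ are indeed idle at $t_k$, but this only gives a conditional expected iteration length of order $\max_{i\in S_{k-1}}\tau_i+R\log(2m)$. Since $S_{k-1}$ may contain slow workers, you would not get $\tau_m$.

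\textbf{The missing idea is to bound the completion times $t_k$ cumulatively rather than iteration by iteration.} The paper does this by strong induction.
\begin{itemize}
\item Let $\bar{k}_i\le k$ be the last iteration at whose start worker $i$ was idle. From $t_{\bar{k}_i}$ on, worker $i$ computes back-to-back.
\item Every computation in this chain except the final one ends only after the next iteration has begun. Otherwise worker $i$ would have contributed and been idle at the next start. Hence the chain contains at most $k-\bar{k}_i+1$ computations up to and including the one at $x^k$.
\item Labelling these computations by consecutive indices, worker $i$ delivers at $x^k$ no later than $t_{\bar{k}_i}+\sum_{j=\bar{k}_i}^{k}\bar{\tau}_{j,i}$.
\item Apply the induction hypothesis $t_{\bar{k}_i}\le\sum_{j<\bar{k}_i}\max_{i'\in[m]}\bar{\tau}_{j,i'}$, and bound each $\bar{\tau}_{j,i}$ by $\max_{i'\in[m]}\bar{\tau}_{j,i'}$. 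This gives $t_{k+1}\le\sum_{j=0}^{k}\max_{i\in[m]}\bar{\tau}_{j,i}$, which is \eqref{eq:jUeog}.
\end{itemize}

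\textbf{The rest of your proof survives this replacement.} Your expectation step, the Jensen and union bound yielding $R\log(2m)$, is essentially the paper's. As you observe, it needs only that each $\bar{\tau}_{j,i}$ is marginally $(\tau_i,R)$--sub-exponential, which is exactly what survives the relabelling. Your remaining steps, summing over the $K$ iterations and handling the ceiling and the case $n=1$, then go through unchanged once the first part is replaced.
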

We obtain exactly the same result as in Theorem~\ref{thm:sync_sgd_time}, with the only difference being an additional ``noise'' term $R \log(n)$, and the result now holds on average.
After establishing the following lower bound, we show that the $R$ term does not dominate and that the result is nearly optimal with $m = \arg\min_{m \in [n]} \left[\tau_m \max\left\{1, \nicefrac{\sigma^2}{m \varepsilon}\right\}\right]$ in a number of relevant examples, both in theory and in practice.

\textit{Remark.} In this section, we implicitly assume that the workers are not allowed to stop calculations because modern frameworks do not allow this, and the only option is to ``kill'' the running computation process on the GPU and start the code again, which would lead to even larger delays. We refer to \citep{maranjyan2025mindflayer}, where the authors consider scenarios in which it is possible.

\textbf{Lower bound.} For Assumption~\ref{ass:general_heter_random_computation_model}, there is no lower bound in the literature. Thus, we prove a new lower bound to understand the optimality gap of \eqref{eq:ccJgxrZmhVOoCPhU}.
\begin{restatable}[Lower Bound under Random Computation Model]{theorem}{LOWERBOUND}
  \label{thm:random_lower_bound}
  For all $\Delta, L, \varepsilon, R, \sigma > 0,$ and time delays satisfying Assumption~\ref{ass:general_heter_random_computation_model} such that\footnote{The inequality is satisfied if, for instance, $\varepsilon$ is small enough, $R$ is small enough, or $\sigma$ is large enough.} $\tau_1 \max\left\{1, \nicefrac{\sigma^2}{n \varepsilon}\right\} \geq \bar{c} R \max\left\{1, \nicefrac{\sigma}{\sqrt{\varepsilon}}\right\} \log \left(n\right),$ where $\bar{c}$ is a universal constant, there exists a smooth function $f$ with $L$--Lipschitz gradients, and $f(0) - f^* \leq \Delta,$ and an oracle that returns unbiased stochastic gradients with $\sigma^2$--bounded gradient variance such that the expected time of any zero-respecting or deterministic algorithm that does not stop computations is
  $\Omega\left(T_{\textnormal{optimal}}\right)$ seconds (from \eqref{eq:optimal_const}) to get an $\varepsilon$-stationary point.
\end{restatable}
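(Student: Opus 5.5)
The plan is to reduce the random-time lower bound to the known fixed-time lower bound for $T_{\textnormal{optimal}}$ (Tyurin \& Richt\'arik, 2023). That bound uses the standard ``zero-chain'' hard function of \citet{arjevani2022lower} together with a Bernoulli-masked stochastic oracle: progress to the next coordinate happens only with probability $p \approx \min\{1, \varepsilon/\sigma^2\}$ per stochastic gradient, and the chain has length $T \approx L\Delta/\varepsilon$.

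We take exactly this construction. Let $t^\star$ be the time that the fixed-time analysis needs to guarantee $\Omega(T)$ coordinate discoveries. Under Assumption~\ref{ass:fixed_computation_model}, the number of gradients worker $i$ finishes by time $t$ is $\lfloor t/\tau_i\rfloor$. With random times, we instead count the number of i.i.d.\ $(\tau_i,R)$--sub-exponential draws $\bar\tau_{k,i}$ whose partial sums stay below $t$. Since the algorithm cannot stop computations, the times are independent of the oracle randomness and do not depend on the algorithm's choices. So the completion times of every worker form a renewal process that no algorithm can accelerate.

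\textbf{Step 1 (concentration).} Show that with high probability, for every $i\in[n]$ and every $k$ in the relevant range,
\[
\textstyle\sum_{j=1}^k \bar\tau_{j,i}\ \ge\ k\tau_i - c R\bigl(\sqrt{k\log n}+\log n\bigr).
\]
This follows from Bernstein's inequality for sub-exponential variables plus a union bound over the $n$ workers and over $k$; a maximal inequality over $k$ removes the union bound over $k$. Consequently, worker $i$ completes at most $N_i(t)\le t/\tau_i + \text{(error)}$ gradients by time $t$.

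\textbf{Step 2 (absorbing the error).} Show that the error is of lower order using the theorem's hypothesis $\tau_1\max\{1,\sigma^2/(n\varepsilon)\}\ge \bar c R\max\{1,\sigma/\sqrt{\varepsilon}\}\log n$. In the fixed-time proof, the relevant horizon per ``discovery'' has each worker doing about $k_i\approx \tau_i^{-1}\cdot t^\star/T$ computations, and $t^\star/T \gtrsim \tau_1\max\{1,\sigma^2/(n\varepsilon)\}$. The $\sqrt{k}$ fluctuation term, compared with $k\tau_i$, is where $\sigma/\sqrt\varepsilon$ enters, via $\sqrt{k}\sim\sigma/\sqrt{\varepsilon}$ in the noisy regime. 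The hypothesis with a large $\bar c$ then gives $cR(\sqrt{k\log n}+\log n)\le \tfrac12 k\tau_i$. Hence on the good event $N_i(t)\le 2t/\tau_i+1$, i.e.\ the random system is dominated by a fixed system with times $\tau_i/2$. That is, up to constants, $T_{\textnormal{optimal}}$ is unchanged.

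\textbf{Step 3 (completing the reduction).} Rerun the fixed-time argument conditioned on the good event:
\begin{itemize}
\item the number of coordinate discoveries by time $t$ is stochastically dominated by a sum of geometric waiting times over workers with computation budgets $2t/\tau_i+1$;
\item taking $t=c'\,T_{\textnormal{optimal}}$ yields fewer than $T$ discoveries with probability at least $1/2$, so $\|\nabla f\|^2>\varepsilon$;
\item the bad event has small probability and can only help, since the time is nonnegative, so it does not affect the $\Omega(\cdot)$ expectation bound.
\end{itemize}

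I expect Step 2 to be the main obstacle. Deterministic $\lfloor t/\tau_i\rfloor$ counts must be replaced uniformly over all workers and over the entire (random, algorithm-dependent) stopping horizon. The bookkeeping must also show that the $\log n$ from the union bound and the $\sigma/\sqrt\varepsilon$ factor exactly match the stated condition. I would first try a single union bound at the fixed horizon $t^\star$, using monotonicity of $N_i(t)$ in $t$.
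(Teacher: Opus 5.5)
\textbf{There is a genuine gap in Step 2, and it is more than bookkeeping.} You claim that $N_i(t)\le 2t/\tau_i+1$ for all $t$. This is equivalent to $\sum_{j\le k}\bar\tau_{j,i}\ge (k-1)\tau_i/2$ for every $k\ge 2$, including $k=2$. Sub-exponential tails guarantee that for all $n$ workers simultaneously only when roughly $\tau_i\gtrsim R\log n$.

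The hypothesis does not give you this. When $\sigma^2/\varepsilon\ge n$ it reads $\tau_1\ge \bar c R\, n\log(n)\sqrt{\varepsilon}/\sigma$, so once $\sigma/\sqrt{\varepsilon}\gg \bar c\, n\log n$ it permits $\tau_1\ll R$. For instance, let $\bar\tau_i$ equal $0$ with probability $1-q$ and $\tau_1/q$ otherwise. Then $R\asymp\tau_1/(q\log(1/q))\gg\tau_1$, and $\bar\tau_{1,i}=\bar\tau_{2,i}=0$ with probability $(1-q)^2\approx 1$. So your good event essentially never occurs.

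Restricting to ``$k$ in the relevant range'' does not help. The comparison with a fixed system must hold on the phases between consecutive discoveries. These phases start at random times and are often only a few computations long, which is exactly where your Step 1 bound is vacuous. Your scale accounting is also inconsistent. Your own estimate gives $k_1\approx\max\{1,\sigma^2/(n\varepsilon)\}$, yet you then use $\sqrt{k}\sim\sigma/\sqrt{\varepsilon}$, which is off by a factor $n$ in $k$.

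\textbf{Step 3 needs per-phase control, not cumulative control.} The fixed-time bound $\sum_{k=1}^{T}\min_i\tau_i\eta_{ki}$ rests on the fact that a computation useful for coordinate $k$ must \emph{start} after coordinate $k-1$ has been found. A bound on $N_i(t)$ counted from time $0$, or a single union bound at the horizon $t^\star$, says nothing about the number of completions in a short window starting at a discovery time. Enforcing window bounds in all $T$ phases by a union bound brings in $\log T$, which the hypothesis does not contain.

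Your horizon-level idea does suffice when $\sigma^2/\varepsilon\ge n$. There, counting total successes among the $\sum_i N_i(t)$ completed trials works, since $T_{\textnormal{optimal}}\asymp(\sum_i 1/\tau_i)^{-1}\sigma^2L\Delta/\varepsilon^2$ is purely statistical. When $\sigma^2/\varepsilon<n$, however, it misses the sequential $L\Delta/\varepsilon$ part of $T_{\textnormal{optimal}}$.

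\textbf{How the paper handles this.} It works in expectation, phase by phase.
\begin{itemize}
\item It writes the time as $\sum_{k=1}^{T}\min_i\sum_{j=1}^{\eta_{ki}}\bar\tau_{kij}$, with fresh i.i.d.\ draws in each phase.
\item It lowers this to $\sum_k\min_i\tau_i\eta_{ki}-\sum_k\max_i\sum_{j=1}^{\eta_{ki}}(\tau_i-\bar\tau_{kij})$.
\item It bounds each expected maximum by composing the sub-exponential MGF with the MGF of $\eta_{ki}\sim\textnormal{Geom}(p)$, at $\lambda\asymp\sqrt{p}/R$. This gives $O(R\log(4n)/\sqrt{p})$ per phase.
\end{itemize}
This is where $\max\{1,\sigma/\sqrt{\varepsilon}\}$ comes from: it is the fluctuation of a sum of about $1/p$ terms. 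The bare $\log n$ comes from a maximum over workers only, with linearity of expectation over phases, so no $\log T$ appears. Summing over the $T$ phases and using $T_{\textnormal{optimal}}\gtrsim\tau_1\max\{L\Delta/\varepsilon,\sigma^2L\Delta/(n\varepsilon^2)\}$ together with the hypothesis makes the error term lower order.
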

\begin{corollary}
    \label{cor:random}
    Under the setting of Theorem~\ref{thm:sync_sgd_time_random}, $m$-Synchronous SGD (Algorithm~\ref{alg:alg_server_m_star}) is nearly optimal, on average, at least in the regimes where $\tau_1 \max\left\{1, \nicefrac{\sigma^2}{n \varepsilon}\right\} \geq \Omega(R \max\left\{1, \nicefrac{\sigma}{\sqrt{\varepsilon}}\right\} \log \left(n\right))$ and\footnotemark $\tau_1 \geq \tilde{\Omega}(R)$:
    \begin{align*}
        \Exp{T_{\textnormal{rand}}} = \cO\left(\Exp{T_{\textnormal{optimal}}} \times \log(n + 1)\right)
    \end{align*}
    if $m$ is a minimizer of $\min\limits_{m \in [n]} \left[\tau_m \max\left\{1, \frac{\sigma^2}{m \varepsilon}\right\}\right].$
\end{corollary}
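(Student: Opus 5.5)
The plan is to chain the upper bound of Theorem~\ref{thm:sync_sgd_time_random}, the lower bound of Theorem~\ref{thm:random_lower_bound}, and the deterministic comparison already proved in Theorem~\ref{thm:sync_sgd_time}. Throughout, fix $m$ to be a minimizer of $\tau_m \max\{1, \nicefrac{\sigma^2}{m\varepsilon}\}$.

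\textbf{Step 1 (upper bound).} By \eqref{eq:ccJgxrZmhVOoCPhU} with this $m$,
\begin{align*}
\Exp{T_{\textnormal{rand}}} = \cO\left(\frac{L\Delta}{\varepsilon}\tau_m \max\left\{1, \frac{\sigma^2}{m\varepsilon}\right\} + \frac{L\Delta}{\varepsilon} R\log(n)\max\left\{1, \frac{\sigma^2}{m\varepsilon}\right\}\right).
\end{align*}
The first term equals $T_{\textnormal{sync}}/16$ from \eqref{eq:VDoAilJBJwVbJEFTc}, computed with the means $\tau_i$. Applying \eqref{eq:EYaDmxXrTww} to these means, which is a purely algebraic statement about the sequence $\tau_1\le\dots\le\tau_n$, gives $\cO(T_{\textnormal{optimal}}\log(n+1))$. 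Here $T_{\textnormal{optimal}}$ is evaluated at the means; it is deterministic, which is how I read $\Exp{T_{\textnormal{optimal}}}$.

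\textbf{Step 2 (absorbing the noise term).} I must show
\begin{align*}
\frac{L\Delta}{\varepsilon} R\log(n)\max\left\{1, \frac{\sigma^2}{m\varepsilon}\right\} = \cO\left(T_{\textnormal{sync}}\right),
\end{align*}
and then reuse Step 1. It suffices to have $R\log n \lesssim \tau_m$, since then this term is bounded by the first one. Because $\tau_m\ge\tau_1$, the assumption $\tau_1 \geq \tilde{\Omega}(R)$ supplies exactly this, reading the $\tilde\Omega$ as hiding a $\log n$ factor. If one wants to avoid that reading, I would instead compare to $m=n$: bound $\max\{1,\sigma^2/(m\varepsilon)\}\le \frac{n}{m}\max\{1,\sigma^2/(n\varepsilon)\}$ and use the first regime condition
\begin{align*}
R\log n \le \tau_1\max\left\{1,\frac{\sigma^2}{n\varepsilon}\right\}/\bar c.
\end{align*}
The resulting factor $n/m$ is the difficulty. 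So I will rely on $\tau_1\ge \tilde\Omega(R)$ to cover the general case; the first condition is used only in Step 3.

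\textbf{Step 3 (lower bound).} In the regime $\tau_1 \max\{1, \sigma^2/(n\varepsilon)\} \geq \bar c R\max\{1,\sigma/\sqrt{\varepsilon}\}\log n$, Theorem~\ref{thm:random_lower_bound} shows that every admissible algorithm needs expected time $\Omega(T_{\textnormal{optimal}})$. Combining with Steps 1--2 yields
\begin{align*}
\Exp{T_{\textnormal{rand}}} = \cO\left(T_{\textnormal{optimal}}\log(n+1)\right),
\end{align*}
so $m$-Synchronous SGD is within a logarithmic factor of the best achievable expected time.

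\textbf{Main obstacle.} The delicate point is Step 2: the noise term $R\log n$ must be dominated by $\tau_m$. This is where the two regime assumptions do their work, and the precise meaning of the $\tilde\Omega$, namely that it absorbs $\log n$, has to be pinned down. Everything else is direct substitution of earlier results.
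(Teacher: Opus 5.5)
Your proposal is correct and follows the route the paper implicitly takes; the paper gives no separate proof of this corollary. You absorb the $R\log(n)$ term of \eqref{eq:ccJgxrZmhVOoCPhU} into $\tau_m$ via $\tau_1 \geq \tilde{\Omega}(R)$, apply the purely algebraic bound \eqref{eq:EYaDmxXrTww} to the means, and use Theorem~\ref{thm:random_lower_bound} under the first regime condition to certify $T_{\textnormal{optimal}}$ as the benchmark. You are also right on two points: the first condition alone cannot absorb the noise term because of the $n/m$ factor, and $\tilde{\Omega}$ must be read as $\tau_1 \gtrsim R\log n$ to obtain a single $\log(n+1)$ factor.
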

\footnotetext{the inequalities hold, for example, when $\varepsilon$ is small and $R$ is small compared to $\tau_1$, e.g., when $\bar{\tau}_i \sim \mathcal{N}(\mu_i, \varsigma^2)$ truncated to $\bar{\tau}_i \ge 0$, or when ${\bar{\tau}_i}$ follow $\textnormal{Exp}(\lambda)$. See also Section~\ref{sec:more_examples}.}
Hence, even Synchronous SGD (Algorithm~\ref{alg:alg_orig}) is nearly optimal in the listed regimes if the computation times are random and the means $\{\tau_i\}$ are the same.

\textbf{Theoretical examples.} The standard examples are the truncated normal and exponential distributions. For instance, if $\bar{\tau}_i$ is from the truncated normal distribution, that is, $\bar{\tau}_i \sim \mathcal{N}(\mu_i, \varsigma^2)$ conditional on $\bar{\tau}_i \geq 0$ for any $\mu_i, \varsigma \geq 0$, then it satisfies Assumption~\ref{ass:general_heter_random_computation_model} with $R = \cO(\varsigma)$ \citep{barreto2025optimal}. Moreover, for the truncated normal distribution, $\tau_i = \Omega(\max\{\mu_i, \varsigma\}).$ Therefore, $R = \cO(\tau_i)$ for all $i \in [n],$ $R$ does not dominate in \eqref{eq:ccJgxrZmhVOoCPhU}, and the result \eqref{eq:ccJgxrZmhVOoCPhU} \emph{is nearly optimal} due to Corollary~\ref{cor:random} when, for instance, $\varepsilon$ is small.

Consider another example where $\bar{\tau}_i$ is from the exponential distribution $\textnormal{Exp}(\lambda)$ for all workers. In this case, $R = \Theta(1 / \lambda)$ and $\tau_i = \Exp{\bar{\tau}_i} = 1 / \lambda$ for all $i \in [n]$ \citep{vershynin2018high}. Therefore, $\eqref{eq:ccJgxrZmhVOoCPhU} = \tilde{\cO}\left(\nicefrac{\tau_1 L \Delta}{\varepsilon} \max\left\{1, \nicefrac{\sigma^2}{n \varepsilon}\right\}\right)$ for $m = n,$ and even though the computation times are random, 
\emph{the time complexity of Synchronous SGD (Algorithm~\ref{alg:alg_orig}) is nearly optimal for all $\lambda > 0,$ when, for instance, $\varepsilon$ is small.} See more examples in Section~\ref{sec:more_examples}.

\textbf{Practical simulations.} In practice, with NanoGPT \citep{nanogpt}, we also observe that $R \log n$ is negligibly small compared to the mean. See details in Section~\ref{sec:nanogpt}.

\section{On the Optimal Selection of Active Workers}
\label{sec:optimal_m}
In order to obtain nearly optimal time complexities for $m$-Synchronous SGD (Algorithm~\ref{alg:alg_server_m_star}) in Sections~\ref{sec:sync_optimal} and \ref{sec:random}, we have to minimize
\begin{align}
    \label{eq:mapping_g}
    \textstyle g(m) \eqdef \tau_m \max\left\{1, \frac{\sigma^2}{m \varepsilon}\right\}
\end{align}
for all $m \in [n],$ where $m$ is the number of active workers and $\{\tau_m\}$ is a non-decreasing sequence.
\begin{proposition}
\label{prop:g}
Consider the mapping $g(m)$ from \eqref{eq:mapping_g}. Then, $g(m)$ is minimized with $m \leq \min\left\{\ceil{\nicefrac{\sigma^2}{\varepsilon}}, n\right\}.$ If $\nicefrac{\sigma^2}{\varepsilon} \leq 1,$ then the optimal $m = 1.$ Otherwise, if $\nicefrac{\sigma^2}{\varepsilon} > 1,$ then $\frac{\sigma^2 h(m) }{\varepsilon} \leq g(m) \leq \frac{2 \sigma^2 h(m)}{\varepsilon}$ for all $m \leq \min\left\{\ceil{\nicefrac{\sigma^2}{\varepsilon}}, n\right\},$ where 
\begin{align}
    \label{eq:mQHhz}
    \textstyle h(m) \eqdef \frac{\tau_m}{m}.
\end{align}
\end{proposition}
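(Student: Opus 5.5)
Set $M \eqdef \ceil{\nicefrac{\sigma^2}{\varepsilon}}$. The proof is a direct case analysis built on two facts: $\tau_m$ is non-decreasing in $m$, and the factor $\max\{1, \nicefrac{\sigma^2}{m\varepsilon}\}$ equals $1$ exactly when $m \ge \nicefrac{\sigma^2}{\varepsilon}$.

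\textbf{Step 1: the minimizer lies in $\{1,\dots,\min\{M,n\}\}$.} If $M \ge n$ there is nothing to prove. Otherwise take any $m > M$. Since $m > \nicefrac{\sigma^2}{\varepsilon}$, we get $g(m) = \tau_m$. Also $M \ge \nicefrac{\sigma^2}{\varepsilon}$, so $g(M) = \tau_M$. Monotonicity gives $\tau_M \le \tau_m$, hence $g(M) \le g(m)$. So some minimizer satisfies $m \le \min\{M, n\}$; ties are harmless because we only claim a minimizer exists in this range.

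\textbf{Step 2: the case $\nicefrac{\sigma^2}{\varepsilon} \le 1$.} Here $m \ge 1 \ge \nicefrac{\sigma^2}{\varepsilon}$ for every $m$, so $g(m) = \tau_m$. Since $\tau_m$ is non-decreasing, it is minimized at $m = 1$. This is also consistent with Step 1, because $M \le 1$ in this case.

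\textbf{Step 3: the case $\nicefrac{\sigma^2}{\varepsilon} > 1$ and $m \le \min\{M,n\}$.} Write
\[
g(m) = \frac{\sigma^2}{\varepsilon}\, h(m)\, \max\left\{\frac{m\varepsilon}{\sigma^2}, 1\right\}.
\]
Since $\max\{\cdot,1\} \ge 1$, the lower bound $g(m) \ge \nicefrac{\sigma^2 h(m)}{\varepsilon}$ is immediate. For the upper bound, use $m \le M < \nicefrac{\sigma^2}{\varepsilon} + 1$. Then
\[
\frac{m\varepsilon}{\sigma^2} < 1 + \frac{\varepsilon}{\sigma^2} < 2,
\]
where the last inequality uses $\nicefrac{\sigma^2}{\varepsilon} > 1$. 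Therefore the max is at most $2$, and $g(m) \le \nicefrac{2\sigma^2 h(m)}{\varepsilon}$.

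\textbf{Main subtlety.} The only delicate point is the ceiling in Step 3. It makes $m$ possibly slightly exceed $\nicefrac{\sigma^2}{\varepsilon}$, and the assumption $\nicefrac{\sigma^2}{\varepsilon} > 1$ is exactly what keeps the factor below $2$.
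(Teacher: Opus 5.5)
Your proof is correct and follows essentially the same route as the paper. The paper also restricts to $m \le \min\{\lceil \sigma^2/\varepsilon \rceil, n\}$ because $g(m) = \tau_m$ is non-decreasing beyond that point. It likewise obtains the factor $2$ from $m \le \lceil \sigma^2/\varepsilon \rceil \le 2\sigma^2/\varepsilon$, which is exactly your bound $\lceil x \rceil < x + 1 < 2x$ for $x > 1$.
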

Proposition~\ref{prop:g} says that it is necessary and sufficient to find a minimizer of $h(m)$ for all $m \leq \min\left\{\ceil{\nicefrac{\sigma^2}{\varepsilon}}, n\right\}$ to obtain the minimal value of $g(m)$ for all $m \in [n]$ (up to a factor of $2$). While Proposition~\ref{prop:g} is arguably simple, it is very helpful. For instance, assume that we do not know $\{\tau_i\}_{i=1}^n$ a priori. Instead, we know that $\tau_{m+1} \leq \tau_m \left(1 + \nicefrac{1}{m}\right),$ meaning that $\tau_m$ is a non-decreasing sequence that increases by at most a factor of $1 + \nicefrac{1}{m}.$ Equivalently, we assume that $\frac{\tau_m}{m}$ is non-increasing. One can easily show that the sequences $\tau_m = \tau_1 \times m$ (linear increase) and $\tau_m = \tau_1 \times m^{\alpha}$ (sublinear increase) satisfy this condition for any $\alpha \in [0, 1].$ In this case, since $h(m)$ is non-increasing, we can take $m = \min\left\{\ceil{\nicefrac{\sigma^2}{\varepsilon}}, n\right\}$ as a minimizer of $h(m),$ and use this $m$ in Algorithms~\ref{alg:alg_server_m_star}. Notice that we need much weaker information to find the optimal $m.$
We can generalize the previous result: 
\begin{proposition}
    \label{prop:tau}
    Assume that 
    \begin{align}
    \label{eq:ZBNvgXux}
    \tau_m = \tau_1 m^{\alpha} + \delta_m
    \end{align}
    for any $\alpha \in [0, 1]$ and $\delta_m$ such that $0 \leq \delta_m \leq \delta.$ Then, the choice $m = \min\left\{\ceil{\nicefrac{\sigma^2}{\varepsilon}}, n\right\}$ is optimal in \eqref{eq:mapping_g} when $\min\left\{\ceil{\nicefrac{\sigma^2}{\varepsilon}}, n\right\} \geq \left(\delta / \tau_1\right)^{1 / \alpha}.$ Thus, for $\tau_m$ satisfying \eqref{eq:ZBNvgXux} and large  enough $\nicefrac{\sigma^2}{\varepsilon},$ even Synchronous SGD (Algorithm~\ref{alg:alg_orig}) is nearly optimal in Corollary~\ref{cor:random} since $m = n.$
\end{proposition}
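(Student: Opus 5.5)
The plan is to reduce everything to the auxiliary function $h(m) = \nicefrac{\tau_m}{m}$ from Proposition~\ref{prop:g}, and then show directly that $h$ at $M \eqdef \min\left\{\ceil{\nicefrac{\sigma^2}{\varepsilon}}, n\right\}$ is within a constant factor of its minimum over $m \le M$. Accordingly, ``optimal'' will be established up to a universal constant factor, which is all that is needed for near-optimality in Corollary~\ref{cor:random}.

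\emph{Trivial case.} If $\nicefrac{\sigma^2}{\varepsilon} \le 1$, then $M = 1$, and Proposition~\ref{prop:g} already says $m=1$ is optimal. So I assume $\nicefrac{\sigma^2}{\varepsilon} > 1$. Proposition~\ref{prop:g} then gives two facts: a minimizer of $g$ lies in $\{1,\dots,M\}$, and $\frac{\sigma^2}{\varepsilon} h(m) \le g(m) \le \frac{2\sigma^2}{\varepsilon} h(m)$ on that range. It therefore suffices to prove $h(M) \le 2\, h(m)$ for all $m \le M$. Chaining the inequalities then gives
\begin{align*}
g(M) \le \tfrac{2\sigma^2}{\varepsilon} h(M) \le \tfrac{4\sigma^2}{\varepsilon} h(m) \le 4 g(m) \quad \text{for all } m \le M,
\end{align*}
and hence for all $m \in [n]$.

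\emph{Key estimate.} Using \eqref{eq:ZBNvgXux}, write $h(m) = \tau_1 m^{\alpha-1} + \nicefrac{\delta_m}{m}$.
\begin{itemize}
\item Lower bound: since $\delta_m \ge 0$ and $\alpha \le 1$, for $m \le M$ we have $h(m) \ge \tau_1 m^{\alpha - 1} \ge \tau_1 M^{\alpha-1}$.
\item Upper bound: $h(M) \le \tau_1 M^{\alpha-1} + \nicefrac{\delta}{M}$. The hypothesis $M \ge (\nicefrac{\delta}{\tau_1})^{1/\alpha}$ is equivalent to $\delta \le \tau_1 M^{\alpha}$, i.e.\ $\nicefrac{\delta}{M} \le \tau_1 M^{\alpha-1}$. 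Hence $h(M) \le 2\tau_1 M^{\alpha-1} \le 2h(m)$.
\end{itemize}
The only delicate point is the boundary case $\alpha = 0$. There the condition must be read as $\delta \le \tau_1$, with the convention that $(\nicefrac{\delta}{\tau_1})^{1/0}$ equals $0$ if $\delta \le \tau_1$ and $+\infty$ otherwise. Under that reading the same inequality $\delta \le \tau_1 M^0$ holds, so the argument is unchanged. I expect this bookkeeping, rather than any real difficulty, to be the main thing to state carefully.

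\emph{Consequence for Synchronous SGD.} If $\nicefrac{\sigma^2}{\varepsilon} \ge n$, then $M = n$, so the (near-)optimal choice is $m = n$, i.e.\ Algorithm~\ref{alg:alg_server_m_star} coincides with Algorithm~\ref{alg:alg_orig}. Since $g(n) \le 4 \min_m g(m)$, the bound \eqref{eq:ccJgxrZmhVOoCPhU} with $m=n$ is within a constant factor of the bound with the exact minimizer. Corollary~\ref{cor:random} then yields near-optimality of Synchronous SGD in the stated regimes, losing only an extra constant factor.
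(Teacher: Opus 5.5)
Your proof is correct and uses the same ingredients as the paper's: reduce to $h(m)=\tau_m/m$ via Proposition~\ref{prop:g}, lower-bound $h(m)\ge\tau_1 m^{\alpha-1}$ using $\delta_m\ge 0$, and use $M\ge(\delta/\tau_1)^{1/\alpha}$ to absorb $\delta/M$ into $\tau_1 M^{\alpha-1}$. The paper splits the range at $(\delta/\tau_1)^{1/\alpha}$ and argues with $\Theta(\cdot)$, whereas you apply the lower bound uniformly on $m\le M$, which removes the case split and gives explicit constants ($h(M)\le 2h(m)$ and $g(M)\le 4\min_{m\in[n]}g(m)$); note only that the Synchronous SGD conclusion also needs the hypothesis with $M=n$, i.e.\ $n\ge(\delta/\tau_1)^{1/\alpha}$, and not just $\sigma^2/\varepsilon\ge n$.
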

Thus, if $\{\tau_m\}$ follows the power law \eqref{eq:ZBNvgXux} with $0 \leq \alpha \leq 1$ and bounded $\{\delta_m\}$, then it is sufficient to take
$m = \min\left\{\ceil{\nicefrac{\sigma^2}{\varepsilon}},\, n\right\}$ when either $n$ or $\nicefrac{\sigma^2}{\varepsilon}$ is large, \emph{meaning that even Synchronous SGD (Algorithm~\ref{alg:alg_orig}) is nearly optimal in this case when $\nicefrac{\sigma^2}{\varepsilon} \geq n,$ despite the fact that $\tau_n \gg \tau_1.$}

Notice that the regime when $\tau_m \approx \tau_1 m^{\alpha}$ and $\alpha > 1$ is not very interesting since \eqref{eq:optimal_const} reduces to $\Theta\left(\tau_1 \max\left\{\nicefrac{L \Delta}{\varepsilon}, \nicefrac{\sigma^2 L \Delta}{m \varepsilon^2}\right\}\right),$ which is the time complexity of the fastest worker, and multiple workers are therefore not needed since $\{\tau_i\}$ grows too fast. 
\section{Universal Computation Model}
\label{sec:univ}
Here, we consider another way to generalize Assumption~\ref{ass:fixed_computation_model}:
\begin{assumption}[\citet{tyurin2024tighttimecomplexitiesparallel}]
    \label{ass:speed}
    For each worker $i \in [n]$, we relate a non-negative Riemann integrable (continuous almost everywhere) \emph{computation power} $v_i \,:\, \R_{+} \rightarrow \R_{+}$.
    The number of stochastic gradients that worker $i$ can compute between times $t_0$ and $t_1$ is
    \begin{align}
    \label{eq:rSIiSfVcmivSKsfzoSA}
    N_i(t_0,t_1) \eqdef \flr{\int_{t_0}^{t_1} v_i(\tau) d \tau}. 
    \end{align}
\end{assumption}
For instance, assume that the computation times of workers are fixed: $v_i(\tau) = v_i \in \R_+.$ In this case, $N_i(t_0,t_1) = \flr{v_i \times (t_1 - t_0)},$ and if worker $i$ starts calculating at time $t_0,$ then it will compute one gradient after $t_0 + \nicefrac{1}{v_i}$ seconds because $N_i(t_0, t_0 + \nicefrac{1}{v_i}) = 1,$ two gradients after $t_0 + \nicefrac{2}{v_i}$ seconds, and so forth. In this example, Assumption~\ref{ass:speed} reduces to Assumption~\ref{ass:fixed_computation_model} with $\tau_i \equiv \nicefrac{1}{v_i}.$ 

However, Assumption~\ref{ass:speed} allows us to use virtually any computation power, capturing virtually all possible computational scenarios. For instance, it is possible that $v_i(t) > 0$ for all $t \leq \bar{t}$ and $v_i(t) = 0$ for all $t > \bar{t}$, meaning that worker $i$ is not available after time $\bar{t}$. Moreover, $v_i$ can have arbitrary fluctuations (see Figures~\ref{fig:power_1} and \ref{fig:power_2}), and, unlike Assumption~\ref{ass:general_heter_random_computation_model}, it supports non-stationary trends. See more examples in Sections~\ref{sec:theoretical_examples}, \ref{sec:strag}, and \ref{sec:numerical_examples}. 

\citet{tyurin2024tighttimecomplexitiesparallel} proved the following lower bound.
\begin{theorem}[Informal Lower Bound under Universal Computation Model; \citet{tyurin2024tighttimecomplexitiesparallel}]
  \label{thm:univ_lower_bound}
  Let Assumptions~\ref{ass:lipschitz_constant}, \ref{ass:lower_bound}, \ref{ass:stochastic_variance_bounded}, and \ref{ass:speed} hold. With high probability, it is impossible to find an $\varepsilon$--stationary point faster than $\nicefrac{1}{2} \times \underline{t}_{\underline{K}}$ seconds, where the sequence $\{\underline{t}_k\}$ is defined as 
  \begin{align}
    \textstyle \underline{t}_k = \min\left\{t \, : \, \sum\limits_{i=1}^n N_i(\underline{t}_{k-1},t) \geq c_2 \ceil{\frac{\sigma^2}{\varepsilon}}\right\},
  \label{eq:lower_bound_seq}
  \end{align}
  where $\underline{t}_{0} = 0,$ $\underline{K} \eqdef \ceil{c_1 \times \frac{L \Delta}{\varepsilon}},$ and $c_1$ and $c_2$ are universal constants.
\end{theorem}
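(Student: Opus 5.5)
This statement is imported: it is \citet{tyurin2024tighttimecomplexitiesparallel}'s lower bound, cited here in informal form, so in this paper we take it as given. Below is how we would reconstruct its proof.

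\textbf{Hard instance.} We use the standard construction of \citet{arjevani2022lower}. The function is a scaled ``zero-chain'' $f(x)=\frac{L\lambda^2}{l_1}F_T(x/\lambda)$ with $T\approx c_1 L\Delta/\varepsilon$ coordinates. The scale $\lambda$ is chosen so that $f$ is $L$-smooth, $f(0)-f^*\le\Delta$, and $\norm{\nabla f(x)}^2>\varepsilon$ whenever $\mathrm{prog}(x)<T$. The oracle is the Bernoulli one: it reveals the next coordinate with probability $p\approx \varepsilon/\sigma^2$ and otherwise returns zero there, rescaled by $1/p$ so that it is unbiased. With $p=\Theta(\varepsilon/\sigma^2)$, the variance is at most $\sigma^2$.

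\textbf{Progress bound.} For zero-respecting algorithms (and deterministic ones, by the usual rotation argument), the progress can increase by at most one per oracle call, and only when that call's Bernoulli succeeds. Fix a phase that starts once progress equals $k-1$. To gain a coordinate, some worker must compute a gradient whose computation starts after that point. Within the window $[\underline{t}_{k-1},t]$ the workers produce at most $\sum_i N_i(\underline{t}_{k-1},t)$ gradients, and this count is set by the deterministic powers $v_i$.

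\textbf{Timing argument.} The lower-bound sequence \eqref{eq:lower_bound_seq} is built so that each window $[\underline{t}_{k-1},\underline{t}_k]$ contains about $c_2\ceil{\sigma^2/\varepsilon}$ computations. A Bernoulli success needs about $1/p\approx\sigma^2/\varepsilon$ trials. We would define random hitting times $\eta_k$, the time at which progress first reaches $k$. We then show that with probability at least $1/2$, $\eta_k\ge \underline{t}_{k'}$ for a $k'$ proportional to $k$.

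This is a concentration step. Let the geometric variables $G_k\sim\mathrm{Geom}(p)$ be the number of trials needed for each increment. A Chernoff bound on $\sum_k G_k$ shows that at least half of the $T$ increments need at least $c/p$ trials. The subtle issue is that gradients started before the progress increment may still be in flight. We handle this by noting that each in-flight gradient carries at most one stale success, so it costs at most a constant factor. Chaining the windows then gives $\eta_T\ge \frac12\underline{t}_{\underline K}$ with high probability.

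\textbf{Main obstacle.} The delicate part is the combined time-and-trial accounting under arbitrary $v_i$ and parallel in-flight computations. This is where the floor in $N_i$ and the monotonicity of $N_i$ in its time arguments would be used carefully.
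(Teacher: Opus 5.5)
The paper gives no proof of this statement; it is imported from \citet{tyurin2024tighttimecomplexitiesparallel}. The closest in-paper argument is the sketch inside the proof of Theorem~\ref{thm:random_lower_bound}: a zero-chain function, a Bernoulli oracle, and a conditionally geometric number of trials per discovered coordinate. Your hard instance and progress mechanism match that route.

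\textbf{The concentration step.} The gap is in the step that turns trial counts into $\underline{t}_{\underline K}$. A Chernoff bound on $\sum_k G_k$ cannot show that half of the $G_k$ are at least $c/p$, because a large sum can come from a single large $G_k$. A sum-only bound is also too weak in this model. Write $\psi(s,m)\eqdef\min\{t\ge s:\sum_{i=1}^n N_i(s,t)\ge m\}$. From the phase recursion and $\flr{a+b}\ge\flr{a}+\flr{b}$, the sum alone only gives $\eta_T\ge\psi(0,\sum_k G_k)$. This forgets that every phase must be paid for with its own fresh computation. For $p=1$ and $v_i\equiv v$ this is about $T/(nv)$, whereas $\underline{t}_T\ge T/v$.

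What you need is a Chernoff bound on the count $|S|$, where $S\eqdef\{k\le T: G_k\ge M\}$ and $M\eqdef c_2\ceil{\sigma^2/\varepsilon}$. Each phase uses fresh Bernoulli draws, so each $G_k$ is $\mathrm{Geom}(p)$ conditionally on the past. Hence $\mathbb{P}(G_k\ge M)=(1-p)^{\ceil{M}-1}$, which is at least $3/4$ for $c_2$ small enough. A martingale Chernoff bound then gives $|S|\ge T/2$ with probability $1-e^{-\Omega(T)}$.

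\textbf{The chaining step.} This must also be done differently. In your progress paragraph the window starts at the deterministic time $\underline{t}_{k-1}$, which presupposes what is to be proved. It must start at the random hitting time $\eta_{k-1}$, so the recursion is $\eta_k\ge\psi(\eta_{k-1},G_k)$. Two facts then finish the argument. First, $\psi(\cdot,M)$ is non-decreasing, since $N_i(s,t)$ is non-increasing in $s$. Second, $\psi(s,m)\ge s$. Together they let you simply discard the phases with $G_k<M$. Induction along $S=\{k_1<k_2<\cdots\}$ gives $\eta_{k_j}\ge\underline{t}_j$, hence $\eta_T\ge\underline{t}_{|S|}\ge\underline{t}_{\underline K}$ once $T\ge 2\underline K$. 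This is how $c_1$ is fixed.

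Note that ``half of the increments'' yields $\underline{t}_{\ceil{T/2}}$, not $\frac12\underline{t}_T$. With time-varying $v_i$ these can differ arbitrarily, for example if the powers collapse later. So the halving has to be absorbed into $c_1$, not into the time bound.

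\textbf{In-flight gradients.} There is no stale success to account for. A gradient started before $\eta_k$ is evaluated at a point of progress at most $k-1$, so by the zero-chain property it cannot reveal coordinate $k+1$. Every useful trial of phase $k+1$ therefore starts after $\eta_k$. Worker $i$ completes at most $N_i(\eta_k,t)$ of them by time $t$, exactly and with no constant-factor loss.
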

This lower bound $\Omega(\underline{t}_{\underline{K}})$ is achieved by the Rennala SGD and Ringmaster ASGD asynchronous methods (up to constant factors) \citep{tyurin2023optimal,maranjyan2025ringmaster}. The result is non-explicit, and, in general, to find $\underline{t}_{\underline{K}},$ one has to first find $\underline{t}_{1},$ $\underline{t}_{2},$ and so forth, either theoretically or numerically.
Under Assumption~\ref{ass:speed}, we can prove the following theorem for $m$-Synchronous SGD:
\begin{restatable}[Time Complexity under Universal Computation Model]{theorem}{TIMECOMPLEXITYUNIVERSALBASE}
    \label{thm:sync_sgd_time_univ}
    Under the setting of Theorem~\ref{thm:sync_sgd}, additionally assume that Assumption~\ref{ass:speed} holds. Then, for all $m \in [n],$ the time complexity of $m$-Synchronous SGD (Algorithm~\ref{alg:alg_server_m_star}) is $\bar{t}_{\bar{K}}$ seconds, where the sequence $\{\bar{t}_k\}$ is defined as 
    \begin{equation}
    \begin{aligned}
        \textstyle \bar{t}_{k+1} = \min \left\{t \geq 0 \,:\, \max\limits_{S \subseteq [n], \abs{S} = m} \min\limits_{i \in S} N_i(\bar{t}_{k},t) = 2\right\},
    \end{aligned}
    \end{equation}
    $\bar{t}_0 = 0,$ and $\bar{K} \eqdef \ceil{16 \max\left\{\frac{L \Delta}{\varepsilon}, \frac{\sigma^2 L \Delta}{m \varepsilon^2}\right\}}.$
\end{restatable}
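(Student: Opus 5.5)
Since $m$-Synchronous SGD is plain SGD with batch size $m$, the iteration count is exactly the one in Theorem~\ref{thm:sync_sgd}. The convergence guarantee holds after $\bar{K}$ iterations, independently of the timing model, because the computation times under Assumption~\ref{ass:speed} are deterministic and hence independent of the samples $\xi$. So the whole proof reduces to a timing argument. By induction on $k$, I will show that iteration $k$ of Algorithm~\ref{alg:alg_server_m_star} has finished (i.e., $x^{k+1}$ has been computed) by time $\bar{t}_{k+1}$. Summing is then unnecessary: iteration $\bar{K}-1$ finishes by $\bar{t}_{\bar{K}}$, which gives the claimed time complexity.

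The key step explains why the threshold is $2$ rather than $1$. At time $\bar{t}_k$ (when $x^k$ becomes available, or earlier), a worker $i$ may be busy with a stale gradient for an earlier iterate that will be discarded, because workers are not allowed to stop computations. Assumption~\ref{ass:speed} counts gradients via $N_i(t_0,t_1)=\lfloor\int_{t_0}^{t_1}v_i\rfloor$. I will argue that if $N_i(\bar{t}_k,t)\ge 2$, then within $[\bar{t}_k,t]$ worker $i$ completes the stale computation (at most one unit of work) and then a full fresh gradient at $x^k$. To make this precise, I will write the worker's cumulative work as $\int_0^t v_i$. Each gradient occupies an interval on which the integral increases by one, and the gradient in progress at $\bar{t}_k$ finishes once the integral has grown by at most one more unit from $\bar{t}_k$. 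The fresh gradient, started at $x^k$ once the worker is idle, then needs one more unit. Since $\int_{\bar{t}_k}^{t} v_i\ge 2$ whenever $N_i(\bar{t}_k,t)\ge 2$, both fit.

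Now fix $t=\bar{t}_{k+1}$. By definition there is a set $S$ with $|S|=m$ and $\min_{i\in S}N_i(\bar{t}_k,t)= 2$. So at least $m$ distinct workers each deliver a gradient at $x^k$ by time $t$, and the server, which aggregates the first $m$ such arrivals, performs the update no later than $t$. If the update happened even earlier, at some time $s_k<\bar{t}_k$ in the inductive hypothesis, monotonicity of $N_i(\cdot,t)$ in its first argument only helps: starting the count at $s_k$ rather than $\bar{t}_k$ does not decrease it. I will therefore phrase the induction hypothesis as ``update $k-1$ occurs at time $\le \bar{t}_k$''. The base case uses $\bar{t}_0=0$, with all workers idle.

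The main obstacle is making the ``at most one stale unit'' accounting rigorous with the floor in $N_i$ and time-varying $v_i$, including the case where $v_i$ vanishes on intervals. I would handle it by working with the continuous cumulative work $W_i(t)=\int_0^t v_i$ and treating gradient completions as the level crossings of $W_i$ at successive integer increments relative to each computation's start. This makes the two-unit argument a direct inequality on $W_i(t)-W_i(\bar{t}_k)$.
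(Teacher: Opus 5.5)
Your proposal is correct and follows the paper's proof essentially step for step: an induction showing that the update producing $x^{k}$ happens no later than $\bar{t}_k$, where the threshold $N_i(\bar{t}_k,t)\ge 2$ covers at most one stale gradient the worker must finish (it cannot stop computations) plus one fresh gradient at $x^k$. You also make explicit two details the paper leaves implicit, and both are right: the deterministic $v_i$ keep the timing independent of $\xi$, so Theorem~\ref{thm:sync_sgd} applies unchanged, and an earlier actual update only helps by monotonicity of $\int v_i$.
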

Theorem~\ref{thm:sync_sgd_time_univ} is also non-explicit. 
In general, Theorem~\ref{thm:sync_sgd_time_univ} does not match the lower bound in Theorem~\ref{thm:univ_lower_bound} (see Section~\ref{sec:couter_example}); however, there are several theoretical and practical examples where it does.
\subsection{Example: different constant powers}
\label{sec:theoretical_examples}
In the case of constant powers, when $v_i(t) = v_i \in \R_+$ for all $t \geq 0,$ Theorem~\ref{thm:sync_sgd_time_univ} reduces to Theorem~\ref{thm:sync_sgd_time}. Indeed, $N_i(t_0,t_1) = \flr{(t_1 - t_0) v_i}$ and 
    $\max_{S \subseteq [n], \abs{S} = m} \min_{i \in S} N_i(\bar{t}_{k},t) = \max_{S \subseteq [n], \abs{S} = m} \min_{i \in S} \flr{(t - \bar{t}_{k}) v_i}.$
W.l.o.g., assume that $v_1 \geq \dots \geq v_n,$ then
    $\max_{S \subseteq [n], \abs{S} = m} \min_{i \in S} N_i(\bar{t}_{k},\bar{t}_{k+1}) = 2$ 
for $\bar{t}_{k+1} = \bar{t}_k + \frac{2}{v_m} = \frac{2 (k + 1)}{v_m}.$ Due to Theorem~\ref{thm:sync_sgd_time_univ}, the method converges after $\bar{t}_{\bar{K}} = \cO\left(\nicefrac{1}{v_m} \max\left\{\nicefrac{L \Delta}{\varepsilon}, \nicefrac{\sigma^2 L \Delta}{m \varepsilon^2}\right\}\right) $
seconds, and we get the same result as in Theorem~\ref{thm:sync_sgd_time} up to a constant factor if we take $m = \arg\min_{m \in [n]} \left[1 / v_m \cdot \max\left\{1, \nicefrac{\sigma^2}{m \varepsilon}\right\}\right]$ and recall that $\tau_m \equiv 1 / v_m$ in the scenario with the fixed powers. Similarly, one can show that the lower bound in Theorem~\ref{thm:univ_lower_bound} is $\underline{t}_{\underline{K}} = \eqref{eq:optimal_const}$
which corresponds to the optimal time complexity~\eqref{eq:optimal_const}. Here, as expected, $\bar{t}_{\bar{K}}$ is tight up to logarithmic factors for an appropriate choice of $m$.

\begin{figure*}[t]
\centering
\begin{minipage}[t]{0.49\textwidth}
    \centering
    \includegraphics[width=0.65\textwidth]{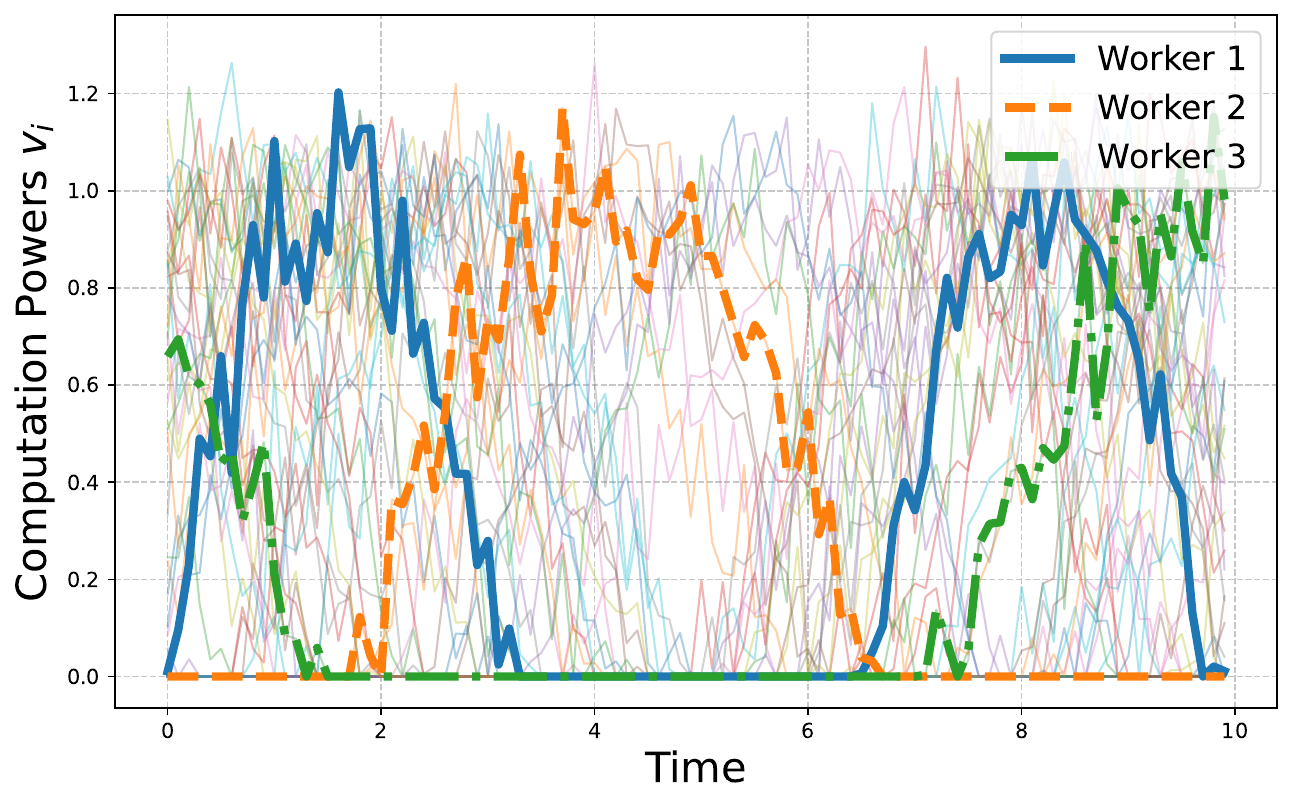}
    \captionof{figure}{Computation powers for $50$ workers, where every worker chaotically stops participating in the computation process (its computation power is zero for some time intervals). We generate worker powers as $v_i(t_k)=\max\{\sin(a_i t_k+s_i)+\varepsilon_{ik},\,0\},\varepsilon_{ik}\sim\mathcal{N}(0,0.1^2), s_i\sim\mathrm{Unif}(0,2\pi), a_i\sim\mathrm{Unif}(0.5,1),$ on a uniform grid $t_k=0.1k$, and define $v_i(t)$ by linear interpolation.}
    \label{fig:power_1}
\end{minipage}\hfill
\begin{minipage}[t]{0.49\textwidth}
    \centering
    \includegraphics[width=0.65\textwidth]{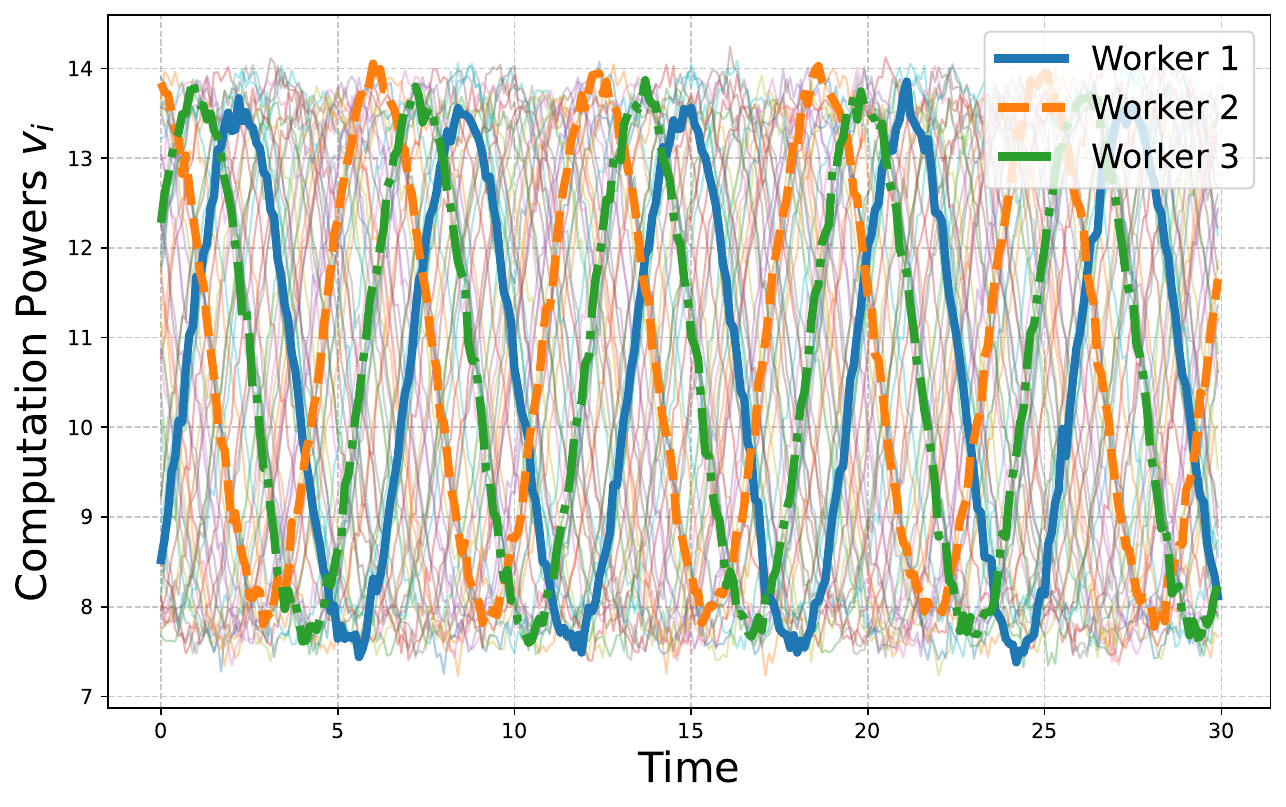}
    \captionof{figure}{Computation powers for $50$ workers, where every worker has a periodic computation power. We generate worker powers $v_i(t_k)=\max\{s_i+3\sin(t_k+\phi_i)+\varepsilon_{ik},\,0.1\}, \varepsilon_{ik}\sim\mathcal{N}(0,0.1^2),\  s_i\sim\mathrm{Unif}(10.5,11.0), \phi_i\sim\mathrm{Unif}(0,2\pi),$ on a uniform grid $t_k=0.1k,$ and define $v_i(t)$ by linear interpolation.}
    \label{fig:power_2}
\end{minipage}
\end{figure*}
\subsection{Optimization with stragglers}
\label{sec:strag}
Another example is when all workers have the same computational power, but at most $pn$ of them can be idle at any moment (for instance, $p = 0.1$ and at most 10\% are stragglers; see possible visualization in Figure~\ref{fig:power_1}). 
\begin{assumption}
\label{ass:partial}
Under Assumption~\ref{ass:speed}, for some fixed $p \in [0, 1),$ assume that for all $t \geq 0,$ there exists a subset $S_t$ of at least size $(1 - p) n$ such that $v_i(t) = v \in \R_+$ for all $i \in S_t,$ and $v_i(t) \leq v$ for all $i \not\in S_t.$ 
\end{assumption}
Notice that for all $t \geq 0,$ we do not make any assumptions about the subset $S_t$; \emph{the subsets can be fixed, random, or even adversarial to the algorithm.} Clearly, in this case, a lower bound is
\begin{align}
    \label{eq:rrhiLqThpCcViIxf}
    \textstyle \Omega\left(\frac{1}{v} \max\left\{\frac{L \Delta}{\varepsilon}, \frac{\sigma^2 L \Delta}{n \varepsilon^2}\right\}\right),
\end{align}
in the best possible case when all workers participate. We now show that the time complexity of $m$-Synchronous SGD (Algorithm~\ref{alg:alg_server_m_star}) is optimal and matches \eqref{eq:rrhiLqThpCcViIxf} when $p < 0.4$:
\begin{restatable}[Time Complexity under Partial Participation]{theorem}{TIMECOMPLEXITYPARTIAL}
  \label{thm:partial}
  Under the setting of Theorem~\ref{thm:sync_sgd}, additionally assume that Assumption~\ref{ass:partial} holds. Then, the time complexity of Algorithm~\ref{alg:alg_server_m_star} is optimal for all $p < 0.4.$ It requires at most
  \begin{align}
    \label{eq:vNzZqndZIaxqZyB}
    \textstyle \cO\left(\frac{1}{v} \max\left\{\frac{L \Delta}{\varepsilon}, \frac{\sigma^2 L \Delta}{n \varepsilon^2}\right\}\right)
  \end{align}
  seconds for all $p < 0.4$ and any $m \in \{\flr{n / 5}, \dots, \flr{(1 - 2 p) n}\}.$
\end{restatable}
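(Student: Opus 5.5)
The plan is to apply Theorem~\ref{thm:sync_sgd_time_univ} and show that, under Assumption~\ref{ass:partial}, every round length satisfies $\bar{t}_{k+1} - \bar{t}_k \le c/v$ for a universal constant $c$ whenever $m \le \flr{(1-2p)n}$. Then $\bar{t}_{\bar{K}} \le c\bar{K}/v$ with $\bar{K} = \ceil{16 \max\{\nicefrac{L\Delta}{\varepsilon}, \nicefrac{\sigma^2 L\Delta}{m\varepsilon^2}\}}$. Since $m \ge \flr{n/5}$, we have $\nicefrac{\sigma^2}{m\varepsilon} = \cO(\nicefrac{\sigma^2}{n\varepsilon})$, which gives \eqref{eq:vNzZqndZIaxqZyB}. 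Optimality follows by comparison with the lower bound \eqref{eq:rrhiLqThpCcViIxf}. Some care is needed when $n < 5$, so that $\flr{n/5}=0$; there the admissible range should be read as $m \ge 1$, and one still has $m \ge n/5$ up to a constant.

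The core step is a counting (pigeonhole) argument over one round. Fix $t_0 = \bar{t}_k$ and set $t_1 = t_0 + T/v$ with $T$ a constant to be chosen. For each $t$, worker $i$ runs at full power whenever $i \in S_t$. Integrating, we get
\[
\sum_{i=1}^n \int_{t_0}^{t_1} \mathbf{1}[i \in S_t]\,dt \ \ge\ (1-p)n (t_1 - t_0).
\]
Let $G$ be the set of workers with $\int_{t_0}^{t_1} \mathbf{1}[i\in S_t]\,dt \ge \tfrac{1}{2}(t_1-t_0)$, i.e., those present at least half the time. Each worker contributes at most $t_1 - t_0$ to the sum, so
\[
|G|(t_1-t_0) + (n-|G|)\tfrac{1}{2}(t_1-t_0) \ \ge\ (1-p)n(t_1-t_0),
\]
which gives $|G| \ge (1-2p)n \ge m$. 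Every $i \in G$ satisfies $\int_{t_0}^{t_1} v_i \ge v(t_1-t_0)/2 = T/2$. Hence $N_i(t_0,t_1) \ge \flr{T/2} \ge 2$ for $T = 4$. Taking $S \subseteq G$ of size $m$, the max--min in Theorem~\ref{thm:sync_sgd_time_univ} reaches $2$ by time $t_1$, so $\bar{t}_{k+1} \le \bar{t}_k + 4/v$.

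The main subtlety is interpreting the definition in Theorem~\ref{thm:sync_sgd_time_univ}, which uses ``$=2$''. Because $N_i$ is monotone in $t$ and increases by unit steps, the first time the max--min reaches $2$ is well defined and bounded by $t_1$. A second point is whether the model is consistent with the algorithm, in which a worker may still be busy finishing a discarded gradient from the previous round. The count $2$ in the theorem is meant to absorb exactly this: one gradient may belong to the stale computation and one to the fresh one. So I will rely on the theorem as stated rather than re-derive it.

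The remaining question is why the theorem states $p<0.4$ rather than $p<1/2$. The range of $m$ must be nonempty with $\flr{(1-2p)n} \ge \flr{n/5}$, and this requires $1-2p \ge 1/5$, i.e.\ $p \le 0.4$. That is where the $0.4$ comes from. Combining $\bar{t}_{\bar{K}} \le 4\bar{K}/v$ with $m \ge n/5$ yields the bound, and matching it to \eqref{eq:rrhiLqThpCcViIxf} proves optimality.
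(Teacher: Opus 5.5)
Your proposal is correct and follows essentially the same route as the paper: apply Theorem~\ref{thm:sync_sgd_time_univ}, use the same averaging/pigeonhole count to show that at least $(1-2p)n$ workers accumulate work at least $\tfrac12 v(t-\bar t_k)$ in any window, conclude $\bar t_{k+1}\le \bar t_k+4/v$, and finish with $m=\Theta(n)$. Your version is slightly more careful than the paper's, since you keep the floor in $N_i$ and flag the $n<5$ edge case. One cosmetic point: the paper averages $\int v_i/v$ directly rather than $\int \mathbf{1}[i\in S_t]\,dt$, which avoids any measurability question for a possibly adversarial $t\mapsto S_t$; equivalently, you could use the canonical set $\{i: v_i(t)=v\}$.
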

Comparing~\eqref{eq:rrhiLqThpCcViIxf} and~\eqref{eq:vNzZqndZIaxqZyB}, one can see that the complexity of $m$-Synchronous SGD is tight up to a constant factor when $p < 0.4$, which is a practical regime in the training of machine learning models, where it is very unlikely that more than 40\% of workers (GPUs) fail at the same moment in time. Algorithm~\ref{alg:alg_server_m_star} does not need the sets of participating workers; it is sufficient to fix any $m \in \{\flr{n / 5}, \dots, \flr{(1 - 2 p) n}\}.$ In practice, of course, one should take $m = \flr{(1 - 2p) n}$ to maximize the number of active workers. Notice that it is not necessarily workers $1$ through $m$ that participate; the set of active workers may vary.
Thus, even under the general partial participation assumption, it is sufficient to use $m$-Synchronous SGD to obtain the optimal theoretical time complexity.

\subsection{Numerical examples}
\label{sec:numerical_examples}
Since $\bar{t}_{\bar{K}}$ and $\underline{t}_{\underline{K}}$ are non-explicit, we generate several practical scenarios and \emph{numerically} evaluate the gap $\bar{t}_{\bar{K}} / \underline{t}_{\underline{K}}$ between the lower bound and the time complexity of $m$-Synchronous SGD (Algorithm~\ref{alg:alg_server_m_star}).

We start with Figure~\ref{fig:power_1}, where every worker participates in the optimization process chaotically. We numerically evaluate\footnotemark $\bar{t}_{\bar{K}}$ and $\underline{t}_{\underline{K}}$ from Theorems~\ref{thm:sync_sgd_time_univ} and~\ref{thm:univ_lower_bound} in two noise regimes. We fix $n = 50$ and $L \Delta / \varepsilon = 1$ and show that $\bar{t}_{\bar{K}} / \underline{t}_{\underline{K}} \leq 1.52$ for $\sigma^2 / \varepsilon = 100$ and $m = 15,$ and that $\bar{t}_{\bar{K}} / \underline{t}_{\underline{K}} \leq 1.85$ for $\sigma^2 / \varepsilon = 1000$ and $m = 14.$ Thus, the difference is only a factor of $1.85.$ In Figure~\ref{fig:power_2}, we consider a slightly different scenario with periodic computation powers. We fix $n = 50$ and $L \Delta / \varepsilon = 1$ and show that $\bar{t}_{\bar{K}} / \underline{t}_{\underline{K}} \leq 1.11$ for $\sigma^2 / \varepsilon = 100$ and $m = 49,$ and that $\bar{t}_{\bar{K}} / \underline{t}_{\underline{K}} \leq 1.37$ for $\sigma^2 / \varepsilon = 1000$ and $m = 50.$

Notice that the computation powers in Figures~\ref{fig:power_1} and \ref{fig:power_2} are very chaotic, and intuitively, the time complexity of $m$-Synchronous SGD should be much worse than the time complexity of asynchronous methods. Nevertheless, with a proper choice of $m$ in Algorithm~\ref{alg:alg_server_m_star}, it is possible to obtain a time complexity that is worse by at most a factor of $2.$ See our discussion in Section~\ref{sec:why} why a minor constant factor is not a problem in practice.

\section{When Asynchronicity Might Be Needed}
\label{sec:where_needed}
\label{sec:couter_exmaple_main}
In the previous sections, we explained and proved that $m$-Synchronous SGD and Synchronous SGD are almost as fast as asynchronous methods in many scenarios. At the same time, we should acknowledge that synchronous methods are not a silver bullet and that there are several scenarios and corner cases where asynchronous methods might be needed.

\textbf{Rapid changes in computation times.} In general, it is possible that the upper bound $\bar{t}_{\bar{K}}$ from Theorem~\ref{thm:sync_sgd_time_univ} is asymptotically larger than the lower bound $\underline{t}_{\underline{K}}$ from Theorem~\ref{thm:univ_lower_bound}. 
Indeed, consider an example when the computation times of workers are the same and equal $\tau$ at the beginning. In this case, we choose $m = n$ in Algorithm~\ref{alg:alg_server_m_star}. However, suddenly, the first worker becomes infinitely fast after a few iterations: the new computation time is $\tilde{\tau}_1 \approx 0$, while the other computation times remain the same, $\tilde{\tau}_i = \tau$ for all $i > 1$. In this case, the time complexity of $m$-Synchronous SGD does not change, since it depends only on $\tau$ because $m = n,$ and the first worker calculates only one gradient, even though it is extremely fast. At the same time, the asynchronous methods will immediately adapt to the sudden speed of the first worker and leverage it. This is an example when the asynchronicity helps. See the formal proof in Section~\ref{sec:couter_example}. However, this scenario is rare and might be arguably impractical.
\footnotetext{In the proof of the lower bound, $c_1$ and $c_2$ are universal but small constants. For a fair comparison of $\bar{t}_{\bar{K}}$ and $\underline{t}_{\underline{K}},$ we take $c_1 = 16$ and $c_2 = 1,$ which can be obtained for Rennala SGD.} 

\textbf{Large variance.} Recall Theorem~\ref{thm:sync_sgd_time_random} from Section~\ref{sec:random}, where we prove the time complexity \eqref{eq:ccJgxrZmhVOoCPhU} for $m$-Synchronous SGD. One potential limitation of this result is that $R$ can be large. Informally, the $R$ term comes from the fact that it is possible that $\Exp{\max_{i \in [m]} \bar{\tau}_{k,i}} \approx R \gg \tau_i$ due to the $\max_{i \in [m]}$ operation; at the same time, the asynchronous Rennala SGD method might be more robust to noise since, unlike $m$-Synchronous SGD, the method waits for several stochastic gradients, and thus the total waiting time averages the noise, potentially leading to a better asymptotic dependence on $R.$

\textbf{Heterogeneous setup.} Under Assumptions~\ref{ass:fixed_computation_model} and \ref{ass:speed}, the heterogeneous setup with asynchronous methods is also well-explored \citep{tyurin2023optimal,tyurin2024tighttimecomplexitiesparallel,maranjyan2025ringleader}. In particular, consider stochastic optimization problem of minimizing a function:
$\textstyle \min_{x \in \R^d} \left\{f(x) \eqdef \nicefrac{1}{n} \sum_{i=1}^{n}\ f_i(x) \right\}$
where $f_i : \R^d \to \R$ and $f$ is a smooth function (Assumption~\ref{ass:lipschitz_constant}). Unlike the previous homogeneous setup, here we assume that only worker $i$ can compute stochastic gradients $\nabla f_i(x; \xi)$ at any point $x \in \R^d$ with $\xi \sim \mathcal{D}_i$, where $\mathcal{D}_i$ is some distribution such that $\nabla f_i(x; \xi)$ is an unbiased estimate of $\nabla f_i(x)$ and $\sigma^2$-variance bounded. Under Assumption~\ref{ass:fixed_computation_model}, it was shown that the optimal time complexity is
\begin{align}
\label{eq:SDnfeIwQUdMd}
\textstyle \Theta\left(\tau_n \frac{L \Delta}{\varepsilon} + \left(\frac{1}{n} \sum\limits_{i=1}^{n} \tau_i\right)\frac{\sigma^2 L \Delta}{n \varepsilon^2}\right),
\end{align}
achieved by Malenia SGD \citep{tyurin2023optimal}, another asynchronous method similar to Rennala SGD. 

Unfortunately, we cannot implement Algorithm~\ref{alg:alg_server_m_star} with $m < n$, since in the heterogeneous setting the participation of all workers is required because, in the worst case, potentially all information about $f$ can be adversarially placed in the workers that are ignored. Thus, it might be the case that the asynchronicity in Malenia SGD is still needed to achieve the optimal time complexity \eqref{eq:SDnfeIwQUdMd} and improve \eqref{eq:ELIDwwYnjnAnCj}.
\newcommand{\experimentone}{final_results/final__nn_1000_dim_1000_noise_0_01_delay_sqrt_time5000.pdf}
\begin{figure}[t]
    \centering
    \includegraphics[width=0.78\linewidth]{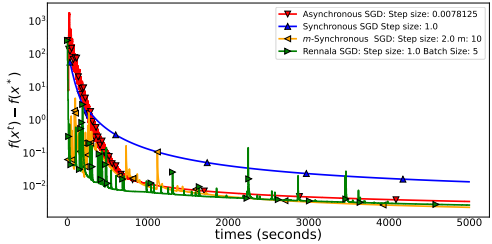}
    \caption{Quadratic optimization with $n=1000$ and $\tau_i = \sqrt{i}.$}
    \label{fig:plot1_main}
\end{figure}

However, the gap between $\tau_n$ and $\frac{1}{n} \sum_{i=1}^{n} \tau_i$ is not as bad as one might think. While it can be arbitrarily large (e.g., $\tau_1 = \dots = \tau_{n-1} \approx 0$ and $\tau_n \gg 0$), one can show that $\frac{1}{n} \sum_{i=1}^{n} \tau_i \geq \frac{1}{2} \tau_{\flr{n / 2}}$, and $\tau_n / \left(\frac{1}{n} \sum_{i=1}^{n} \tau_i\right) = \cO(\tau_n / \tau_{\flr{n / 2}}).$ In many scenarios where the ${\tau_i}$ grow reasonably fast, the gap can be a constant factor. For instance, if $\tau_m = \tau_1 m^{\alpha}$ for all $m \geq 1$ and some $0 \leq \alpha \leq 4$, then $\tau_n / \left(\frac{1}{n} \sum_{i=1}^{n} \tau_i\right) = \cO\left(1\right)$, meaning that even if $\tau_m$ grows as fast as $\tau_1 m^4$, the gap between Malenia SGD and Synchronous SGD is only a constant factor. 
\section{Experiments}
\begin{figure}[t]
    \centering
    \begin{tikzpicture}
        \node[inner sep=0] (img) {\includegraphics[width=0.78\linewidth]
        {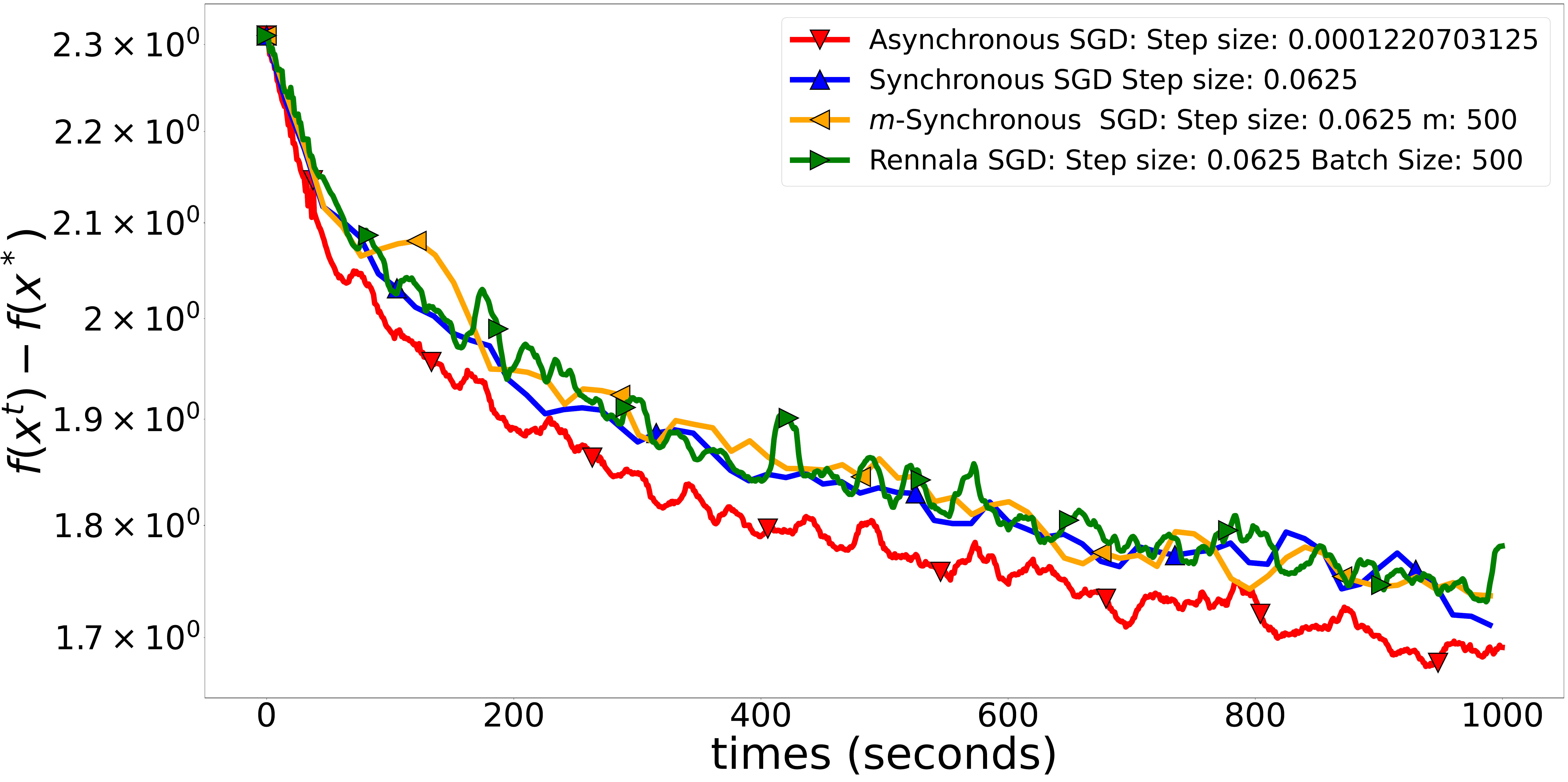}};

        \begin{scope}[overlay]
            \draw[<->, thick]
                ([xshift=1.3cm,yshift=0.8cm] img.south)
                --
                node[below, font=\tiny, inner sep=6pt] {$\approx$ 1.45x}
                ([xshift=2.9cm,yshift=0.8cm] img.south);
        \end{scope}
    \end{tikzpicture}
    \caption{CIFAR10 with Two-layer NN and $\bar{\tau}_i \sim \mathrm{Unif}(0.5,1.5).$}
    \label{fig:plot2_main}
\end{figure}

In Section~\ref{sec:exp_appendix}, we conducted numerical experiments on quadratic optimization tasks and small-scale machine learning problems in controlled heterogeneous computation environments, and on NanoGPT training in a real environment. All our experiments concur with our theoretical results, and we observe that either Synchronous SGD or $m$-Synchronous SGD performs well under heterogeneous computation times and converges comparably to the optimal Asynchronous SGD and Rennala SGD methods. In particular, in Figure~\ref{fig:plot1_main}, we run the methods on a quadratic optimization task with $n = 1000$, assuming that worker $i$ requires $\tau_i = \sqrt{i}$ seconds to compute one stochastic gradient, and ignoring other factors, including communication times. This experiment matches the discussion in Sections~\ref{sec:introduction} and \ref{sec:sync_optimal}, where we see that Synchronous SGD is slow due to the stragglers with large $\tau_i$; however, $m$-Synchronous SGD with $m = 10$, effectively ignoring stragglers, has the same convergence speed as the optimal methods despite computing one stochastic gradient per worker in every iteration. 
In Figure~\ref{fig:plot2_main}, we evaluate on CIFAR-10 \citep{krizhevsky2009learning} under a setup with random computation times and equal means. The results show that Synchronous SGD converges as fast as Rennala SGD and is at most a factor of $1.45\times$ slower than Asynchronous SGD, as theoretically predicted in Section~\ref{sec:random}. In Section~\ref{sec:why}, we explain why a constant-factor gap of $1$–$2\times$ is not an issue in real applications and environments.

\section{Conclusion}
\label{sec:why}
In this paper, we revisit the classical Synchronous SGD method and its robust variant, $m$-Synchronous SGD, and compare them with modern asynchronous optimization methods. To summarize, we show that synchronous methods are nearly optimal (i) under the fixed computation model (Assumption~\ref{ass:fixed_computation_model}); (ii) under the random computation model (Assumption~\ref{ass:general_heter_random_computation_model}), at least when $\sigma^2 / \varepsilon$ is large and the times are truncated normal or exponential (see also Section~\ref{sec:more_examples}); (iii) under partial participation (Assumption~\ref{ass:partial}); and (iv) in several numerical examples (Section~\ref{sec:numerical_examples}).
Our work provides theoretical evidence for the popularity of synchronous optimization methods in modern training.

One might argue that asynchronous methods can still be better by a constant or a $\log n$ factor under Assumptions~\ref{ass:fixed_computation_model}, \ref{ass:general_heter_random_computation_model}, \ref{ass:partial}, and in Figure~\ref{fig:plot2_main}. We agree with this. However, in practice, engineers implement various optimization and caching tricks to speed up synchronization \citep{sergeev2018horovod}, which can potentially mitigate the constant and $\log n$ factor losses relative to asynchronous methods. This is also supported by our NanoGPT experiment in a real environment (see Section~\ref{sec:nanogpt_another}). On top of all this, Asynchronous SGD is not all-reduce or communication-friendly \citep{tyurin2025birch}, unlike synchronous methods.

\bibliography{example_paper}
\bibliographystyle{icml2026}

\newpage
\appendix
\onecolumn

\section{Notations}
\begin{table}[h]
\centering
\caption{List of notations used throughout the paper.}
\begin{tabular}{cl}
\toprule
\textbf{Notation} & \textbf{Meaning} \\
\midrule
$\R_+$ & Denotes $[0, \infty).$ \\ 
$[n]$ & Denotes a finite set $\{1, \dots, n\}$. \\
$\norm{x}$ & Euclidean norm of a vector $x$. \\
$g = \cO(f)$ & There exists $C > 0$ such that $g(z) \le C \, f(z)$ for all $z \in \cZ$. \\
$g = \Omega(f)$ & There exists $C > 0$ such that $g(z) \ge C \, f(z)$ for all $z \in \cZ$. \\
$g = \Theta(f)$ & There exist $C_1, C_2 > 0$ such that $C_1 f(z) \le g(z) \le C_2 f(z)$ for all $z \in \cZ$. \\
$\tilde{\cO},\tilde{\Omega},$ and $\tilde{\Theta}$ & The same as $\cO$, $\Omega,$ and $\Theta,$ but up to logarithmic factors. \\
$\Exp{\cdot}$ & Full expectation. \\ 
\bottomrule
\end{tabular}
\end{table}

\section{Proof of Theorem~\ref{thm:sync_sgd_time}}
\label{sec:proof}

\TIMECOMPLEXITY*

\begin{proof}
    The first term in the theorem follows from the fact that every iteration takes at most $\tau_m$ seconds; thus, the total time is at most $\tau_m \times 16 \max\left\{\frac{L \Delta}{\varepsilon}, \frac{\sigma^2 L \Delta}{m \varepsilon^2}\right\}$ seconds due to Theorem~\ref{thm:sync_sgd}. 

    Let $\frac{\sigma^2}{\varepsilon} \leq 1.$ Then
    \begin{align*}
        \min_{m \in [n]} \left[\tau_m\max\left\{\frac{L \Delta}{\varepsilon}, \frac{\sigma^2 L \Delta}{m \varepsilon^2}\right\}\right] = \tau_1 \frac{L \Delta}{\varepsilon}
    \end{align*}
    and 
    \begin{align*}
        T_{\textnormal{optimal}} \eqdef \Theta\left(\min \limits_{m \in [n]} \left[\left(\frac{1}{m}\sum\limits_{i=1}^m \frac{1}{\tau_i}\right)^{-1}\left(\max\left\{\frac{L \Delta}{\varepsilon}, \frac{\sigma^2 L \Delta}{m \varepsilon^2}\right\}\right)\right]\right) = \Theta\left(\tau_1 \frac{L \Delta}{\varepsilon}\right).
    \end{align*}
    since both terms are minimized with $m = 1.$ Thus, the theorem is correct. Further we assume that $\frac{\sigma^2}{\varepsilon} > 1.$
    
    Let us define $m^*$ as the smallest minimum of $\min\limits_{m \in [n]} \left[\tau_m \max\left\{1, \frac{\sigma^2}{m \varepsilon}\right\}\right].$ Then
    \begin{align}
        \label{eq:fcpaNqu}
        \tau_m \max\left\{1, \frac{\sigma^2}{m \varepsilon}\right\} \geq \tau_{m^*}\max \left\{1, \frac{\sigma^2}{m^* \varepsilon}\right\}
    \end{align} for all $m \in [n].$ 

    The optimal time complexity is 
    \begin{align}
        \label{eq:AZbAlB}
        T_{\textnormal{optimal}} \eqdef \Theta\left(\min \limits_{m \in [n]} \left[\left(\frac{1}{m}\sum\limits_{i=1}^m \frac{1}{\tau_i}\right)^{-1}\left(\max\left\{\frac{L \Delta}{\varepsilon}, \frac{\sigma^2 L \Delta}{m \varepsilon^2}\right\}\right)\right]\right).
    \end{align}
    Let us define $\bar{m}$ as the smallest minimizer in \eqref{eq:AZbAlB}. Then 
    \begin{align}
        \label{eq:UIIDsdAeyvdJqkL}
        T_{\textnormal{optimal}} 
        &= \Theta\left(\left(\frac{1}{\bar{m}}\sum\limits_{i=1}^{\bar{m}} \frac{1}{\tau_i}\right)^{-1}\left(\max\left\{\frac{L \Delta}{\varepsilon}, \frac{\sigma^2 L \Delta}{\bar{m} \varepsilon^2}\right\}\right)\right).
    \end{align}
    and $\bar{m} \leq \ceil{\frac{\sigma^2}{\varepsilon}} \leq \frac{2 \sigma^2}{\varepsilon}$ since $\left(\frac{1}{m}\sum\limits_{i=1}^{m} \frac{1}{\tau_i}\right)^{-1}$ is a non-decreasing sequence as a function of $m$ and $\frac{\sigma^2}{\varepsilon} > 1.$ Moreover, $\bar{m} \geq \max\{m^* - 1, 1\}.$ It is true if $m^* = 1.$ Assuming $m^* > 1,$ let us consider two cases to prove $\bar{m} \geq \max\{m^* - 1, 1\}:$ \\
    i) if $1 \leq \frac{\sigma^2}{m^* \varepsilon},$ then
    \begin{align*}
        g(m) \eqdef \left(\frac{1}{m}\sum\limits_{i=1}^m \frac{1}{\tau_i}\right)^{-1}\left(\max\left\{\frac{L \Delta}{\varepsilon}, \frac{\sigma^2 L \Delta}{m \varepsilon^2}\right\}\right) = \left(\sum\limits_{i=1}^m \frac{1}{\tau_i}\right)^{-1}\left(\frac{\sigma^2 L \Delta}{\varepsilon^2}\right) \\
    \end{align*}
    for all $m \leq m^*,$ and $g(m)$ can only potentially decrease if we take $m > m^* \geq \max\{m^* - 1, 1\}.$ \\
    ii) if $1 > \frac{\sigma^2}{m^* \varepsilon},$ then $1 < \frac{\sigma^2}{(m^* - 1) \varepsilon}$ because $m^*$ is the smallest minimum of $\min\limits_{m \in [n]} \left[\tau_m \max\left\{1, \frac{\sigma^2}{m \varepsilon}\right\}\right].$ Similarly to the previous case, since $1 < \frac{\sigma^2}{(m^* - 1) \varepsilon},$ we can conclude that $\bar{m} \geq m^* - 1 \geq \max\{m^* - 1, 1\}.$
    

    Due to \eqref{eq:fcpaNqu}, for all $m \in [\bar{m}],$
    \begin{align*}
        \frac{2 \tau_m \sigma^2}{m \varepsilon} \overset{m \leq \bar{m} \leq \frac{2 \sigma^2}{\varepsilon}}{\geq} \tau_m \max\left\{1, \frac{\sigma^2}{m \varepsilon}\right\} \geq \tau_{m^*}\max \left\{1, \frac{\sigma^2}{m^* \varepsilon}\right\} \geq \frac{\tau_{m^*} \sigma^2}{m^* \varepsilon}
    \end{align*}
    and 
    \begin{align*}
        \tau_m \geq \frac{\tau_{m^*} m}{2 m^*}.
    \end{align*}
    Substituting this bound to \eqref{eq:UIIDsdAeyvdJqkL}, we get
    \begin{align*}
        T_{\textnormal{optimal}} 
        &= \Omega\left(\left(\frac{m^*}{\tau_{m^*} \bar{m}}\sum\limits_{i=1}^{\bar{m}} \frac{1}{i}\right)^{-1}\left(\max\left\{\frac{L \Delta}{\varepsilon}, \frac{\sigma^2 L \Delta}{\bar{m} \varepsilon^2}\right\}\right)\right).
    \end{align*}
    Using the well-known inequality $\sum_{i=1}^{\bar{m}} \frac{1}{i} \leq 1 + \log \bar{m} \leq 2 \log(n + 1),$
    \begin{align*}
        T_{\textnormal{optimal}} 
        &= \Omega\left(\frac{\tau_{m^*} \bar{m}}{m^* \log(n + 1)}\left(\max\left\{\frac{L \Delta}{\varepsilon}, \frac{\sigma^2 L \Delta}{\bar{m} \varepsilon^2}\right\}\right)\right) = \Omega\left(\frac{1}{\log(n + 1)} \times \tau_{m^*}\left(\max\left\{\frac{\bar{m} L \Delta}{m^* \varepsilon}, \frac{\sigma^2 L \Delta}{m^* \varepsilon^2}\right\}\right)\right).
    \end{align*}
    It is left to recall that $\bar{m} \geq \max\{m^* - 1, 1\} \geq \frac{m^*}{2}$ to get
    \begin{align*}
        T_{\textnormal{optimal}} = \Omega\left(\frac{1}{\log(n + 1)} \times \tau_{m^*}\left(\max\left\{\frac{L \Delta}{\varepsilon}, \frac{\sigma^2 L \Delta}{m^* \varepsilon^2}\right\}\right)\right) = \Omega\left(\frac{1}{\log(n + 1)} \times \min_{m \in [n]} \left[\tau_m\max\left\{\frac{L \Delta}{\varepsilon}, \frac{\sigma^2 L \Delta}{m \varepsilon^2}\right\}\right]\right)
    \end{align*}
    and the result of the theorem.
\end{proof}

\section{Proof of Theorem~\ref{thm:sync_sgd_time_random}}

\TIMECOMPLEXITYRANDOM*

\begin{proof}
    We run Algorithm~\ref{alg:alg_server_m_star} at most 
    \begin{align*}
        K \eqdef \ceil{16 \max\left\{\frac{L \Delta}{\varepsilon}, \frac{\sigma^2 L \Delta}{m \varepsilon^2}\right\}}
    \end{align*}
    iterations. Let $t_k$ be the time when the $k$\textsuperscript{th} iteration is finished. Workers $1, \dots, m$ finish computing the first stochastic gradients after at most
    \begin{align*}
        \max_{i \in [m]} \bar{\tau}_{0,i} \geq t_1
    \end{align*}
    seconds. Using mathematical induction, assume that $\sum_{k=0}^{K - 1} \max_{i \in [m]} \bar{\tau}_{k,i} \geq t_{K}.$ Iteration $K + 1$ starts at time $t_{K}.$ Let $\bar{k}_i \in \{0, \dots, K\}$ be the last iteration when worker $i$ finishes before the end of this iteration. Then worker $i$ starts calculating the first stochastic gradient after time $t_k,$ after at most
    \begin{align*}
        t_{\bar{k}_i} + \sum_{k=\bar{k}_i}^{K - 1} \bar{\tau}_{k,i}
    \end{align*}
    seconds. Therefore, workers $1, \dots, m$ finish their first stochastic gradient after time $t_k,$ after at most
    \begin{align*}
        &\max_{i \in [m]} \left(t_{\bar{k}_i} + \sum_{k=\bar{k}_i}^{K - 1} \bar{\tau}_{k,i} + \bar{\tau}_{K,i}\right) \leq \max_{i \in [m]} \left(\sum_{k=0}^{\bar{k}_i - 1} \max_{i \in [m]} \bar{\tau}_{k,i} + \sum_{k=\bar{k}_i}^{K} \bar{\tau}_{k,i}\right) \leq \max_{i \in [m]} \left(\sum_{k=0}^{\bar{k}_i - 1} \max_{i \in [m]} \bar{\tau}_{k,i} + \sum_{k=\bar{k}_i}^{K} \max_{i \in [m]} \bar{\tau}_{k,i}\right) = \sum_{k=0}^{K} \max_{i \in [m]} \bar{\tau}_{k,i}
    \end{align*}
    seconds, meaning that $t_{K+1} \leq \sum_{k=0}^{K} \max\limits_{i \in [m]} \bar{\tau}_{k,i}.$

    It is left the prove the second part of the theorem. We have
    \begin{align*}
        \Exp{\max_{i \in [m]} \left(\bar{\tau}_{k,i} - \tau_i\right)} = \frac{1}{\lambda} \Exp{\log \left(\exp\left(\lambda \max_{i \in [m]} \left(\bar{\tau}_{k,i} - \tau_i\right)\right)\right)}
    \end{align*}
    for all $\lambda > 0.$ Since $\log x$ is concave, 
    \begin{align*}
        \Exp{\max_{i \in [m]} \left(\bar{\tau}_{k,i} - \tau_i\right)} 
        &\leq \frac{1}{\lambda} \log \left(\Exp{\exp\left(\lambda \max_{i \in [m]} \left(\bar{\tau}_{k,i} - \tau_i\right)\right)}\right)
        = \frac{1}{\lambda} \log \left(\Exp{\max_{i \in [m]} \exp\left(\lambda \left(\bar{\tau}_{k,i} - \tau_i\right)\right)}\right) \\
        &\leq \frac{1}{\lambda} \log \left(\Exp{\sum_{i=1}^{m} \exp\left(\lambda \left(\bar{\tau}_{k,i} - \tau_i\right)\right)}\right)
        \leq \frac{1}{\lambda} \log \left(\sum_{i=1}^{m} \Exp{\exp\left(\lambda \abs{\bar{\tau}_{k,i} - \tau_i}\right)}\right).
    \end{align*}
    Taking $\lambda = \frac{1}{R},$ we get
    \begin{align*}
        \Exp{\max_{i \in [m]} \left(\bar{\tau}_{k,i} - \tau_i\right)} 
        &\leq R \log \left(\sum_{i=1}^{m} \Exp{\exp\left(\abs{\bar{\tau}_{k,i} - \tau_i} / R\right)}\right) \leq R \log (2 m)
    \end{align*}
    since $\bar{\tau}_{k,i}$ is a $(\tau_i, R)$--sub-exponential random variable.
    It is left to notice that 
    \begin{align*}
        \max_{i \in [m]} \bar{\tau}_{k,i} \leq \max_{i \in [m]} \left(\bar{\tau}_{k,i} - \tau_i\right) + \max_{i \in [m]} \tau_i.
    \end{align*}
\end{proof}

\section{Proof of Theorem~\ref{thm:random_lower_bound}}
\LOWERBOUND*
\begin{proof}
    We are slightly concise in our descriptions, since the proof largely repeats the ideas from \citep{arjevani2020second,carmon2021lower,arjevani2022lower,tyurin2023optimal,tyurin2024optimalgraph,tyurin2024shadowheart}. Basically, the proofs reduce to the analysis of the concentration of random variables. In particular, \citet{tyurin2023optimal} showed that the time required to find as $\varepsilon$--stationary point is lower bounded by
    \begin{align}
        \label{eq:LFBNKdEHXwdaNVHFdgvZ}
        \sum_{k=1}^{T} \min_{i \in [n]} \tau_i \eta_{ki}
    \end{align}
    if the computation times satisfy Assumption~\ref{ass:fixed_computation_model}, where $T = \Theta\left(\frac{L \Delta}{\varepsilon}\right)$ and $\eta_{ki},$ conditioned on $\{\eta_{k'i}\}_{i \in [n], k' < k},$ satisfies the geometric distribution with parameter $p = \Theta\left(\min\{\frac{\varepsilon}{\sigma^2}, 1\}\right)$ 
    for all $k \geq 1$ and $i \in [n].$ 

    Then, 
    \begin{align}
        \label{eq:pzrMcFiRrsWDAjcsN}
        \Exp{\sum_{k=1}^{T} \min_{i \in [n]} \tau_i \eta_{ki}} \geq \Omega\left(\min \limits_{m \in [n]} \left[\left(\frac{1}{m}\sum\limits_{i=1}^m \frac{1}{\tau_i}\right)^{-1}\left(\max\left\{\frac{L \Delta}{\varepsilon}, \frac{\sigma^2 L \Delta}{m \varepsilon^2}\right\}\right)\right]\right),
    \end{align}
    which follows from Theorem 6.4 \citep{tyurin2023optimal} since $\Exp{\mu} \geq t \times \Prob{\mu \geq t}$ for all non-negative $\mu$ random variables.
    
    However, under Assumption~\ref{ass:general_heter_random_computation_model}, we have to analyze 
    \begin{align}
        \label{eq:YGmxDqrrWlkrGVbG}
        T_{\textnormal{lower}} \eqdef \sum_{k=1}^{T} \min_{i \in [n]} \sum_{j=1}^{\eta_{ki}} \bar{\tau}_{kij},
    \end{align}
    where $\{\bar{\tau}_{kij}\}$ are i.i.d. random variables that satisfy Assumption~\ref{ass:general_heter_random_computation_model}. The lower bound \eqref{eq:YGmxDqrrWlkrGVbG} reduces to \eqref{eq:LFBNKdEHXwdaNVHFdgvZ}, when $\bar{\tau}_{kij} = \tau_i$ almost surely. Similarly to \citep{tyurin2023optimal}, the lower bound follows from the fact that workers have to ``uncover'' $T$ coordinates of the ``worst-case'' function \citep{carmon2020lower} sequentially, and the time to discover one coordinate is lower bounded by $\min_{i \in [n]} \sum_{j=1}^{\eta_{ki}} \bar{\tau}_{kij}$ since each worker calculate stochastic gradients in parallel and wait for the moment when one of them draw a ``lucky'' stochastic gradient. See details in \citep{arjevani2022lower,tyurin2023optimal}.
    
    It is left to bound $\Exp{T_{\textnormal{lower}}}.$ Notice that
    \begin{align}
        \label{eq:uxwqaGghZSqiICiKvI}
        \Exp{T_{\textnormal{lower}}} \geq \Exp{\sum_{k=1}^{T} \min_{i \in [n]} \tau_{i} \eta_{ki}} + \Exp{\sum_{k=1}^{T} \min_{i \in [n]} \sum_{j=1}^{\eta_{ki}} (\bar{\tau}_{kij} - \tau_{i})} 
        = \Exp{\sum_{k=1}^{T} \min_{i \in [n]} \tau_{i} \eta_{ki}} - \Exp{\sum_{k=1}^{T} \max_{i \in [n]} \sum_{j=1}^{\eta_{ki}} \left(\tau_{i} - \bar{\tau}_{kij}\right)}.
    \end{align}
    Using \eqref{eq:pzrMcFiRrsWDAjcsN}, we can bound the first term. Consider one of the elements from the second term. Using simple algebra,
    \begin{align*}
        &\Exp{\max_{i \in [n]} \sum_{j=1}^{\eta_{ki}} (\tau_{i} - \bar{\tau}_{kij})} = \frac{1}{\lambda}\Exp{\lambda \max_{i \in [n]} \sum_{j=1}^{\eta_{ki}} (\tau_{i} - \bar{\tau}_{kij})} \\
        &=\frac{1}{\lambda} \log \exp \left(\Exp{\lambda \max_{i \in [n]} \sum_{j=1}^{\eta_{ki}} (\tau_{i} - \bar{\tau}_{kij})}\right) \leq \frac{1}{\lambda} \log \Exp{\exp \left(\lambda \max_{i \in [n]} \sum_{j=1}^{\eta_{ki}} (\tau_{i} - \bar{\tau}_{kij})\right)},
    \end{align*}
    for all $\lambda > 0,$ where we use Jensen's inequality. Next, 
    \begin{align}
        \label{eq:TLfjg}
        &\Exp{\max_{i \in [n]} \sum_{j=1}^{\eta_{ki}} (\tau_{i} - \bar{\tau}_{kij})} \leq \frac{1}{\lambda} \log \left(\Exp{\sum_{i=1}^{n} \exp \left(\lambda \sum_{j=1}^{\eta_{ki}} (\tau_{i} - \bar{\tau}_{kij})\right)}\right) = \frac{1}{\lambda} \log \left(\sum_{i=1}^{n} \Exp{\exp \left(\lambda \sum_{j=1}^{\eta_{ki}} (\tau_{i} - \bar{\tau}_{kij})\right)}\right).
    \end{align}
    Next, 
    \begin{align*}
        \Exp{\exp \left(\lambda \sum_{j=1}^{\eta_{ki}} (\tau_{i} - \bar{\tau}_{kij})\right)} = \Exp{\ExpCond{\prod_{j=1}^{\eta_{ki}}\exp \left(\lambda (\tau_{i} - \bar{\tau}_{kij})\right)}{\eta_{ki}}}.
    \end{align*}
    By construction, $\eta_{ki}$ is the smallest index in a sequence of random Bernoulli variables that is constructed independently of ${\bar{\tau}_{kij}}.$ Moreover, ${\bar{\tau}_{kij}}$ are i.i.d. due to Assumption~\ref{ass:general_heter_random_computation_model}. Thus,
    \begin{align*}
        \Exp{\exp \left(\lambda \sum_{j=1}^{\eta_{ki}} (\tau_{i} - \bar{\tau}_{kij})\right)} = \Exp{\prod_{j=1}^{\eta_{ki}}\Exp{\exp \left(\lambda (\tau_{i} - \bar{\tau}_{kij})\right)}}.
    \end{align*}
    Since $\bar{\tau}_{kij}$ is sub-exponential and $\Exp{\bar{\tau}_{kij}} = \tau_{i}$, there exist a universal constant $c_1 \geq 1$ such that 
    \begin{align*}
        \Exp{\exp \left(\lambda \sum_{j=1}^{\eta_{ki}} (\tau_{i} - \bar{\tau}_{kij})\right)} \leq \Exp{\prod_{j=1}^{\eta_{ki}}e^{c_1^2 \lambda^2 R^2}} = \Exp{e^{c_1^2 \lambda^2 R^2 \eta_{ki}}}
    \end{align*}
    for all $\lambda \leq \frac{1}{c_1 R}$ \citep{vershynin2018high}. Now, since $\eta_{ki} \sim \textnormal{Geom}(p),$ we get
    \begin{align*}
        \Exp{\exp \left(\lambda \sum_{j=1}^{\eta_{ki}} (\tau_{i} - \bar{\tau}_{kij})\right)} \leq \Exp{e^{c_1^2 \lambda^2 R^2 \eta_{ki}}} = \frac{p e^{c_1^2 \lambda^2 R^2}}{1 - (1 - p) e^{c_1^2 \lambda^2 R^2}}
    \end{align*}
    for all $\lambda > 0$ such that $\lambda \leq \frac{1}{c_1 R}$ and $c_1^2 \lambda^2 R^2 < - \log(1 - p),$ where we take the moment-generating function (MGF) of $\eta_{ki},$ which is only defined for $c_1^2 \lambda^2 R^2 < - \log(1 - p).$ Taking $\lambda = \frac{\sqrt{p}}{2 c_1 R},$ we ensure that $\lambda \leq \frac{1}{c_1 R}$ and $c_1^2 \lambda^2 R^2 < - \log(1 - p).$ For this choice of $\lambda,$
    \begin{align*}
        \Exp{\exp \left(\lambda \sum_{j=1}^{\eta_{ki}} (\tau_{i} - \bar{\tau}_{kij})\right)} \leq 4
    \end{align*}
    because $1 - (1 - p) e^{c_1^2 \lambda^2 R^2} = 1 - (1 - p) e^{\frac{p}{4}} \geq 1 - (1 - p) (1 + \frac{p}{2}) \geq \frac{p}{2}$ for all $p \in (0, 1].$ Substituting this inequality and our choice of $\lambda$ to \eqref{eq:TLfjg},
    \begin{align*}
        &\Exp{\max_{i \in [n]} \sum_{j=1}^{\eta_{ki}} (\tau_{i} - \bar{\tau}_{kij})} \leq \frac{2 c_1 R}{\sqrt{p}} \log \left(4 n\right). 
    \end{align*}
    Using \eqref{eq:pzrMcFiRrsWDAjcsN}, \eqref{eq:uxwqaGghZSqiICiKvI}, $T = \Theta\left(\frac{L \Delta}{\varepsilon}\right),$ $p = \Theta\left(\min\left\{1, \frac{\varepsilon}{\sigma^2}\right\}\right),$ and the last inequality,
    \begin{align}
        \label{eq:GLKpSufk}
        \Exp{T_{\textnormal{lower}}} \geq \Omega\left(\min \limits_{m \in [n]} \left[\left(\frac{1}{m}\sum\limits_{i=1}^m \frac{1}{\tau_i}\right)^{-1}\left(\max\left\{\frac{L \Delta}{\varepsilon}, \frac{\sigma^2 L \Delta}{m \varepsilon^2}\right\}\right)\right]\right) - \Theta\left(\frac{R L \Delta}{\varepsilon}  \max\left\{1, \frac{\sigma}{\sqrt{\varepsilon}}\right\} \log \left(4 n\right)\right).
    \end{align}
    Since 
    \begin{align*}
        \min \limits_{m \in [n]} \left[\left(\frac{1}{m}\sum\limits_{i=1}^m \frac{1}{\tau_i}\right)^{-1}\left(\max\left\{\frac{L \Delta}{\varepsilon}, \frac{\sigma^2 L \Delta}{m \varepsilon^2}\right\}\right)\right] \geq \tau_1 \max\left\{\frac{L \Delta}{\varepsilon}, \frac{\sigma^2 L \Delta}{n \varepsilon^2}\right\}
    \end{align*}
    and we assume that
    \begin{align*}
        \tau_1 \max\left\{1, \frac{\sigma^2}{n \varepsilon}\right\} \geq \bar{c} \times R \max\left\{1, \frac{\sigma}{\sqrt{\varepsilon}}\right\} \log \left(n\right),
    \end{align*}
    where $\bar{c}$ is universal constant, then the term with $\Omega$ dominates the term with $\Theta$ in \eqref{eq:GLKpSufk} and we get 
    \begin{align*}
        \Exp{T_{\textnormal{lower}}} \geq \Omega\left(\min \limits_{m \in [n]} \left[\left(\frac{1}{m}\sum\limits_{i=1}^m \frac{1}{\tau_i}\right)^{-1}\left(\max\left\{\frac{L \Delta}{\varepsilon}, \frac{\sigma^2 L \Delta}{m \varepsilon^2}\right\}\right)\right]\right).
    \end{align*}
\end{proof}

\subsection{More theoretical examples}
\label{sec:more_examples}
Here we present additional examples for Section~\ref{sec:random}.

\paragraph{Chi-square distributions.}
Suppose that $\bar{\tau}_i \sim \chi^2_{k_i}$ (chi-square distributions) with $k_i \in \N$ for all $i \in [n]$. Then $\Exp{\bar{\tau}_i} = \tau_i = k_i$, and the chi-square distribution is $(\tau_i, R_i)$--sub-exponential with $R_i = \cO(\sqrt{k_i})$. As long as $\min_{i \in [n]} k_i \geq \max_{i \in [n]} \sqrt{k_i}$, we have $R \equiv \max_{i \in [n]} R_i = \cO(\tau_i)$ for all $i\in[n]$. Then, due to Corollary~\ref{cor:random}, $m$-Synchronous SGD is nearly optimal if, for instance, $\varepsilon$ is small.

\paragraph{Gamma distribution.}
Suppose that $\bar{\tau}_i \sim \text{Gamma}(a_i,b_i)$ with $a_i,b_i > 0$. In this case, $\Exp{\bar{\tau}_i} = \tau_i = a_i b_i$, and $\bar{\tau}_i$ is $(\tau_i, R_i)$--sub-exponential random variable with $R_i = \cO(\max\{\sqrt{a_i}, 1\}b_i)$ \citep{boucheron2003concentration}. Then, due to Corollary~\ref{cor:random}, $m$-Synchronous SGD is nearly optimal when $\min_{i \in [n]} {a_i b_i} \geq \max_{i \in [n]}\max\{\sqrt{a_i}, 1\}b_i$ and $\varepsilon$ is small, where the former true if, for instance, $\{a_i\}$ are large.

\paragraph{Shifted exponential distribution.}
Assume that $\bar{\eta}_i$ is from the exponential distribution $\textnormal{Exp}(\lambda_i)$ for all $i \in [n]$. We define the shifted exponential random variable as $\bar{\tau}_i = \mu_i + \bar{\eta}_i$ for some $\mu_i \geq 0.$ In this case, we can take $R = \Theta(\max_{i \in [n]} (1 / \lambda_i))$ and $\tau_i = \mu_i + 1 / \lambda_i$ for all $i \in [n].$ Thus, as long as $\max_{i \in [n]} (1 / \lambda_i) \leq \min_{i \in [n]} (\mu_i + 1 / \lambda_i),$ $m$-Synchronous SGD is nearly optimal when, for instance, $\varepsilon$ is small.


\section{Proof of Proposition \ref{prop:g}}
\begin{proof}
    Consider 
    \begin{align*}
      g(m) \eqdef \tau_m \max\left\{1, \frac{\sigma^2}{m \varepsilon}\right\}
    \end{align*}
    for all $m \in [n].$ 
    For all $m \geq \ceil{\frac{\sigma^2}{\varepsilon}},$ 
    $g(m) = \tau_m,$ which is non-decreasing. Thus, it is sufficient to restrict the function to $m \leq \min\left\{\ceil{\nicefrac{\sigma^2}{\varepsilon}}, n\right\}$ to find its minimizer. 
    
    For all $m \leq \min\left\{\ceil{\nicefrac{\sigma^2}{\varepsilon}}, n\right\},$ if $\frac{\sigma^2}{\varepsilon} \leq 1,$ then $m = 1.$ Let $\frac{\sigma^2}{\varepsilon} > 1.$ If $m \leq \min\left\{\ceil{\nicefrac{\sigma^2}{\varepsilon}}, n\right\},$ then clearly $m \leq \ceil{\nicefrac{\sigma^2}{\varepsilon}}.$ Notice that
    \begin{align*}
      \tau_m \frac{\sigma^2}{m \varepsilon} \leq \tau_m \max\left\{1, \frac{\sigma^2}{m \varepsilon}\right\} \leq 2 \tau_m \frac{\sigma^2}{m \varepsilon}
    \end{align*}
    since $m \leq \ceil{\nicefrac{\sigma^2}{\varepsilon}} \leq \nicefrac{2 \sigma^2}{\varepsilon}.$
\end{proof}

\section{Proof of Theorems~\ref{thm:sync_sgd_time_univ}}
\TIMECOMPLEXITYUNIVERSALBASE*
\begin{proof}
    Algorithm~\ref{alg:alg_server_m_star} starts at time $\bar{t}_0 = t_0 = 0.$ Let us define $t_k$ as the first time moment when Algorithm~\ref{alg:alg_server_m_star} finishes the $k$\textsubscript{th} iteration. Then, if 
    \begin{align*}
        \bar{t}_1 = \min \left\{t \geq 0 \,:\, \max_{S \subseteq [n], \abs{S} = m} \min_{i \in S} N_i(\bar{t}_{0},t) = 2\right\},
    \end{align*}
    then $\bar{t}_1 \geq t_1$ because it is sufficient for the algorithm to wait for the fastest $m$ workers. Using mathematical induction, assume that $\bar{t}_k \geq t_k.$ The $k + 1$\textsuperscript{th} iteration starts at time $t_k,$ and the algorithm waits for the first $m$ workers to finish their calculations. In the worst case, worker $i$ has to finish calculating the previous stochastic gradient (requested before time $t_k$ and not used), and start calculating the new stochastic gradient after time $t_k.$ Thus, worker $i$ will finish calculating the first stochastic gradient started after $t_k$ after $\hat{t}$ seconds such that $N_i(\bar{t}_{k},\hat{t}) \geq 2,$ and therefore 
    \begin{align*}
        \bar{t}_{k+1} = \min \left\{t \geq 0 \,:\, \max_{S \subseteq [n], \abs{S} = m} \min_{i \in S} N_i(\bar{t}_{k},t) = 2\right\}
    \end{align*}
    is an upper bound for $t_{k+1}.$
\end{proof}

\section{Proof of Theorem~\ref{thm:partial}}
\label{sec:example}
\TIMECOMPLEXITYPARTIAL*
\begin{proof}
In Assumption~\ref{ass:partial}, we assume that, for all $t \geq 0,$ there exists a subset $S_t$ of at least size $(1 - p)n$ such that $v_i(t) = v \in \R_+$ for all $i \in S_t,$ $v_i(t) \leq v$ for all $i \not\in S_t.$ Using Theorem~\ref{thm:sync_sgd_time_univ}, we now derive an explicit formula for the time complexity of $m$-Synchronous SGD (Algorithm~\ref{alg:alg_server_m_star}). 

Notice that 
\begin{align*}
    \sum_{i=1}^{n} \int_{\bar{t}_k}^{t} v_i(t) dt = \int_{\bar{t}_k}^{t} \sum_{i=1}^{n} v_i(t) dt \geq (1 - p) n v (t - t_k)
\end{align*}
for all $t \geq \bar{t}_k.$ At the same time, 
\begin{align*}
    \int_{\bar{t}_k}^{t} v_i(t) dt \leq (t - \bar{t}_k) v
\end{align*}
for all $t \geq \bar{t}_k$ and $i \in [n].$ For any fixed $t > \bar{t}_k,$ let $\bar{n}$ be the number of workers such that
\begin{align*}
    \int_{\bar{t}_k}^{t} v_i(t) dt < \frac{1}{2} \times (t - \bar{t}_k) v.
\end{align*}
Then 
\begin{align*}
    (n - \bar{n}) v (t - \bar{t}_k) + \frac{1}{2} \bar{n} v (t - \bar{t}_k) \geq (1 - p) n v (t - \bar{t}_k) \Leftrightarrow \bar{n} \leq 2 p n.
\end{align*}
Therefore, for all $t > \bar{t}_k,$ at least $n - 2 p n = (1 - 2 p) n$ workers satisfy the inequality
\begin{align*}
    \int_{\bar{t}_k}^{t} v_i(t) dt \geq \frac{1}{2} (t - \bar{t}_k) v.
\end{align*}
Thus, for all $t > \bar{t}_k,$ $p < 0.4,$ and $m \in \{\flr{n / 5}, \dots, \flr{(1 - 2 p) n}\},$ there exists a set $S$ of size $m$ such that 
\begin{align*}
    \min_{i \in S} N_i(\bar{t}_{k},t) \geq \frac{1}{2} (t - \bar{t}_k) v,
\end{align*}
meaning that $\bar{t}_{k+1} \leq \bar{t}_{k} + \frac{4}{v}.$ Finally,
\begin{align*}
    \bar{t}_{\bar{K}} = \cO\left(\frac{1}{v} \max\left\{\frac{L \Delta}{\varepsilon}, \frac{\sigma^2 L \Delta}{m \varepsilon^2}\right\}\right).
\end{align*}
It is left to notice that $m = \Theta(n)$ for all $m \in \{\flr{n / 5}, \dots, \flr{(1 - 2 p) n}\}$ and $p < 0.4.$
\end{proof}


\section{Proof of Proposition~\ref{prop:tau}}
\label{sec:derive_m}
\begin{proof}
We consider $\tau_m = \tau_1 m^{\alpha} + \delta_m$ under the conditions $0 \leq \alpha \leq 1$ and $0 \leq \delta_m \leq \delta.$ 
In this case, we have to find $m$ that minimizes
\begin{align*}
    h(m) \eqdef \frac{\tau_m}{m}
\end{align*}
for all $m \leq \min\left\{\ceil{\nicefrac{\sigma^2}{\varepsilon}}, n\right\}$ due to Proposition~\ref{prop:g}. We have
\begin{align*}
    h(m) = \frac{\tau_1 m^{\alpha} + \delta_m}{m} = \Theta\left(\frac{\tau_1 m^{\alpha}}{m}\right)
\end{align*}
for all $m \geq \left(\frac{\delta}{\tau_1}\right)^{1 / \alpha}.$ Thus, $m = \min\left\{\ceil{\nicefrac{\sigma^2}{\varepsilon}}, n\right\}$ is a minimizer on the interval $\left[\left(\frac{\delta}{\tau_1}\right)^{1 / \alpha}, \min\left\{\ceil{\nicefrac{\sigma^2}{\varepsilon}}, n\right\}\right]$ and 
\begin{align*}
    h(m) = \Theta\left(\frac{\tau_1}{m^{1 - \alpha}}\right)
\end{align*}
for $m = \min\left\{\ceil{\nicefrac{\sigma^2}{\varepsilon}}, n\right\}.$

For all $m \leq \left(\frac{\delta}{\tau_1}\right)^{1 / \alpha},$
\begin{align*}
    h(m) \geq \frac{\tau_1}{m^{1 - \alpha}} \geq \frac{\tau_1^{1 / \alpha}}{\delta^{(1 - \alpha) / \alpha}}.
\end{align*}

Combining both cases, we can conclude that $m = \min\left\{\ceil{\nicefrac{\sigma^2}{\varepsilon}}, n\right\}$ minimizes $h$ when $m \geq \left(\frac{\delta}{\tau_1}\right)^{1 / \alpha}$ (either $n$ or $\nicefrac{\sigma^2}{\varepsilon}$ is large).
\end{proof}

\section{A Theoretical Example Where Asynchronicity Is Needed}
\label{sec:couter_example}
In this section, we formalize the example from Section~\ref{sec:couter_exmaple_main}. In order to formalize it, we can define the computation powers as
\begin{align*}
    v_i(t) = v
\end{align*}
for all $i > 1$ and $t \geq 0,$ and
\begin{align*}
    v_1(t) = \begin{cases}
        v, & t \leq \bar{t} \eqdef \frac{1}{v}, \\
        \infty, & t > \bar{t}
    \end{cases} 
\end{align*}
for the first worker. All workers except the first one have constant computational power. The first worker has the constant computation power for the first $\bar{t}$ seconds, and then becomes infinitely fast. 

In this example, knowing only the behavior of the first worker for at most the first $\bar{t}$ seconds, the best choice of $m$ in Algorithm~\ref{alg:alg_server_m_star} is $m = n,$ and the time complexity is
\begin{align}
    \label{eq:WQiKkEz}
    \Theta\left(\frac{1}{v} \max\left\{\frac{L \Delta}{\varepsilon}, \frac{\sigma^2 L \Delta}{n \varepsilon^2}\right\}\right)
\end{align}
because the algorithm always waits for all the workers, while the first one quickly computes one stochastic gradient and then is idle during each iteration.

At the same time, according to Theorem~\ref{thm:univ_lower_bound}, the lower bound and the time complexity of the optimal asynchronous methods can be much faster. Indeed, notice that
\begin{align*}
    \underline{t}_1 
    &= \min\left\{t \, : \, \sum\limits_{i=1}^n \flr{\int_{0}^{t} v_i(\tau) d \tau} \geq c_2 \ceil{\frac{\sigma^2}{\varepsilon}}\right\} \\
    &= \min\left\{t \, : \, (n - 1) \flr{t v} + \mathbf{1}[t \leq \bar{t}] \times \flr{t v} + \mathbf{1}[t > \bar{t}] \times \infty \geq c_2 \ceil{\frac{\sigma^2}{\varepsilon}}\right\},
\end{align*}
where we assume that $0 \times \infty = 0.$ 
Therefore, $\underline{t}_1 \leq 2 \bar{t},$ and 
\begin{align*}
    \underline{t}_2 
    &= \min\left\{t \, : \, \sum\limits_{i=1}^n N_i(\underline{t}_{1},t) \geq c_2 \ceil{\frac{\sigma^2}{\varepsilon}}\right\} \leq \min\left\{t \, : \, \sum\limits_{i=1}^n N_i(2 \bar{t},t) \geq c_2 \ceil{\frac{\sigma^2}{\varepsilon}}\right\} \leq 2 \bar{t} + \delta
\end{align*}
for all $\delta > 0$ because
$\sum\limits_{i=1}^n N_i(2 \bar{t},2 \bar{t} + \delta) \geq N_1(2 \bar{t},2 \bar{t} + \delta) = \infty.$
Using the same reasoning, we can conclude that
\begin{align*}
    \underline{t}_{\underline{K}} \leq \Theta\left(\bar{t} + \delta \frac{L \Delta}{\varepsilon}\right).
\end{align*}
Taking $\delta = \frac{\varepsilon \bar{t}}{L \Delta},$
\begin{align*}
    \underline{t}_{\underline{K}} \leq \Theta\left(\bar{t}\right) = \Theta\left(\frac{1}{v}\right),
\end{align*}
which can arbitrarily smaller than \eqref{eq:WQiKkEz}.

\section{NanoGPT Benchmark}
\label{sec:nanogpt}
In this section, we numerically evaluate the computation times for the NanoGPT model to understand the difference between the noise factor $R$ and the mean computation time $\tau_1$ in Assumption~\ref{ass:general_heter_random_computation_model}. We use the NanoGPT model \citep{nanogpt} with \texttt{vocab\_size=50304}, \texttt{block\_size=512}, \texttt{n\_layer=6}, \texttt{n\_head=6}, and \texttt{n\_embd=384}. We use \texttt{Tesla V100-SXM3-32GB}. We first run 10 warmup steps and then 200 backpropagation steps to estimate the mean and the variance. Empirically, we observe that the mean is $\approx 72.2$ ms, while the parameter $R$ is\footnote{we numerically found the smallest $R$ such that $\frac{1}{200} \sum_{j=1}^{200} \Exp{\exp(\abs{\bar{\tau}_j - \tilde{\tau}} / R)} = 2,$ where $\bar{\tau}_j$ are the recorder times and $\tilde{\tau}$ is the estimated mean.} $\approx 0.6.$ Thus, $R \log n \approx 0.6 \log(1,000,000) = 8.2 \ll 71.2$ even if the number of devices is $n = 1,000,000.$ In Figure~\ref{fig:qqplot}, we also present a Q–Q plot, where the computation times are plotted against a normal distribution; we see that the computation times are visually close to normal.
\begin{figure}[h]
    \centering
    \includegraphics[width=0.75\linewidth]{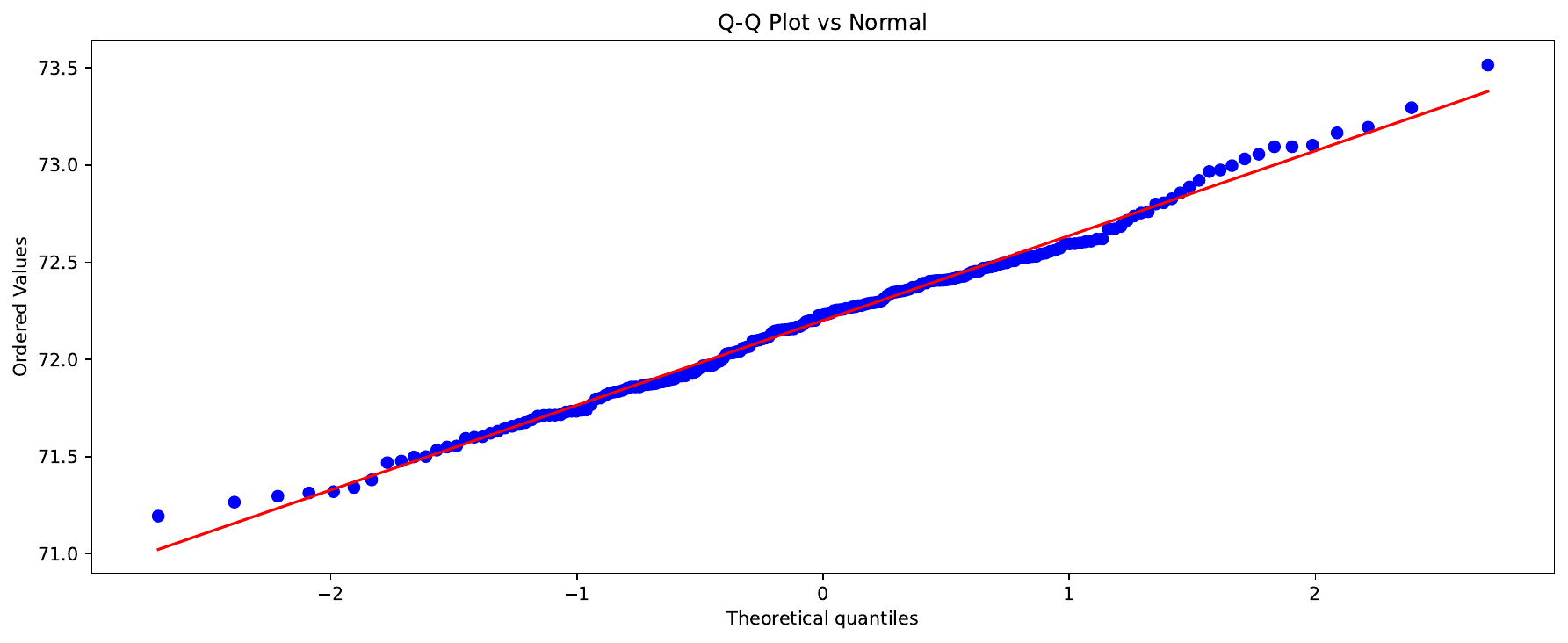}
    \caption{Q-Q Plot of the data from Section~\ref{sec:nanogpt} versus a normal distribution.}
    \label{fig:qqplot}
\end{figure}


\clearpage
\newpage
\section{Extra Experiments and Details}
\label{sec:exp_appendix}
The code was prepared in Python~3 and executed on a machine with 52~CPUs (Intel(R) Xeon(R) Gold~6278C @~2.60,GHz).

We conduct three types of experiments: (i) we consider a synthetic quadratic optimization task and controlled computation times, where we verify our theoretical results and compare the methods; (ii) then, we verify different computation times on small-scale real machine learning problems; (iii) finally, we implement a custom version in PyTorch and compare Syncrounous SGD and Asyncrounous SGD on the NanoGPT training task.

We first compare the methods on synthetic quadratic optimization problems in which the noise in the computation times of stochastic gradients can be precisely controlled. Consider the problem
\begin{align*}
f(x) = \frac{1}{2} x^\top \mathbf{A} x - b^\top x
\end{align*}

for all $x \in \mathbb{R}^d$ and take $d = 1000$,
\begin{align*}
\mathbf{A} = \frac{1}{4}
\begin{pmatrix}
2 & -1 &        & 0 \\
-1 & \ddots & \ddots &   \\
   & \ddots & \ddots & -1 \\
0  &        & -1 & 2
\end{pmatrix}
\in \mathbb{R}^{d \times d}
\quad\text{and,}\quad
b = \frac{1}{4}
\begin{bmatrix}
-1 \\ 0 \\ \vdots \\ 0
\end{bmatrix}
\in \mathbb{R}^d.
\end{align*}

We define function $\operatorname{prog}(x) := {\max\left\{i \geq 0\,|\, x_i \neq 0\right\}} \quad (x_0 \equiv 0)$ 
and consider the following stochastic gradients:
\begin{equation}
\label{eq:zTeQKwqxLsZHYPWCC}
\left[ \nabla f(x; \xi) \right]_j
:= \nabla_j f(x) \left( 1 + \mathbf{1}\!\left[ j > \operatorname{prog}(x) \right] \left( \frac{\xi}{p} - 1 \right) \right)
\quad \forall x \in \mathbb{R}^d,
\end{equation}
where $\xi \sim \operatorname{Bernoulli}(p)$ for all $i \in [n]$, and $p \in (0,1]$ 
is a parameter controlling the noise level: the smaller $p$ the stronger the noise in the stochastic gradients.
We use $[x]_j$ to denote the $j$-th component of a vector $x \in \mathbb{R}^d$. 
In our experiments, we initialize the optimization at $x^0 = [\sqrt{d}, 0, \dots, 0]^\top$. In all methods, we performed a grid search over the step size $\gamma \in \left\{2^{-16}, \dots, 2^4\right\}.$ In Rennala SGD, we tune the batch size parameter from the set $\left\{1, 5, 10, 15, 20, \dots, n\right\}$. In $m$-Synchronous, we also tune the number of active workers $m$ from the set $\left\{1, 5, 10, 15, 20, \dots, n\right\}$.

\subsection{Quadratic optimization problem with heterogeneous times}
\label{sec:heter_times}
In Figure~\ref{fig:gradient_noise}, we consider the fixed computation model (Assumption~\ref{ass:fixed_computation_model}) with the number of workers $n = 1000$. In each row of Figure~\ref{fig:gradient_noise}, we consider one computational scenario, and in each column, we fix the probability from \eqref{eq:zTeQKwqxLsZHYPWCC}. The main observation is that $m$-Synchronous SGD with a proper $m$, Asynchronous SGD, and Rennala SGD are competitive, in accordance with Section~\ref{sec:sync_optimal}. 

At the same time, Synchronous SGD does not match other methods in some the plots. Consider the first row, when worker $i$ requires $\tau_i = \sqrt{i}$ seconds. Recall that Proposition \ref{prop:tau} implies that the optimal number of workers in Algorithm~\ref{alg:alg_server_m_star} is $m = \min\left\{\ceil{\nicefrac{\sigma^2}{\varepsilon}}, n\right\}$. Consequently, the number of active workers should increase as the gradient noise grows, until the moment when $\nicefrac{\sigma^2}{\varepsilon} \geq n.$ In the first row, we see exactly the predicted behavior: when $\nicefrac{\sigma^2}{\varepsilon} \ll n$ (Fig.~\ref{fig:plot1}), Synchronous SGD is slower because the optimal $m \ll n.$ However, once we start increasing the variance by taking a smaller $p,$ we can see that the gap between Synchronous SGD and other methods decreases since the optimal choice of $m$ becomes $n.$ In the second row, we see a similar behavior. In the third row, when worker $i$ requires $\tau_i = i^{1.2}$ seconds, the optimal choice of $m$ is $1$ due to Proposition~\ref{prop:g}, and Synchronous SGD with all participating workers cannot match other methods.




\begin{figure}[htbp]
    \centering
    \begin{subfigure}[b]{0.33\textwidth}
        \centering
        \includegraphics[width=\linewidth]{\experimentone}
        \caption{$p=0.01, n=1000, \tau_i = \sqrt{i}$}
        \label{fig:plot1}
    \end{subfigure}
    \hfill
    \begin{subfigure}[b]{0.33\textwidth}
        \centering
        \includegraphics[width=\linewidth]{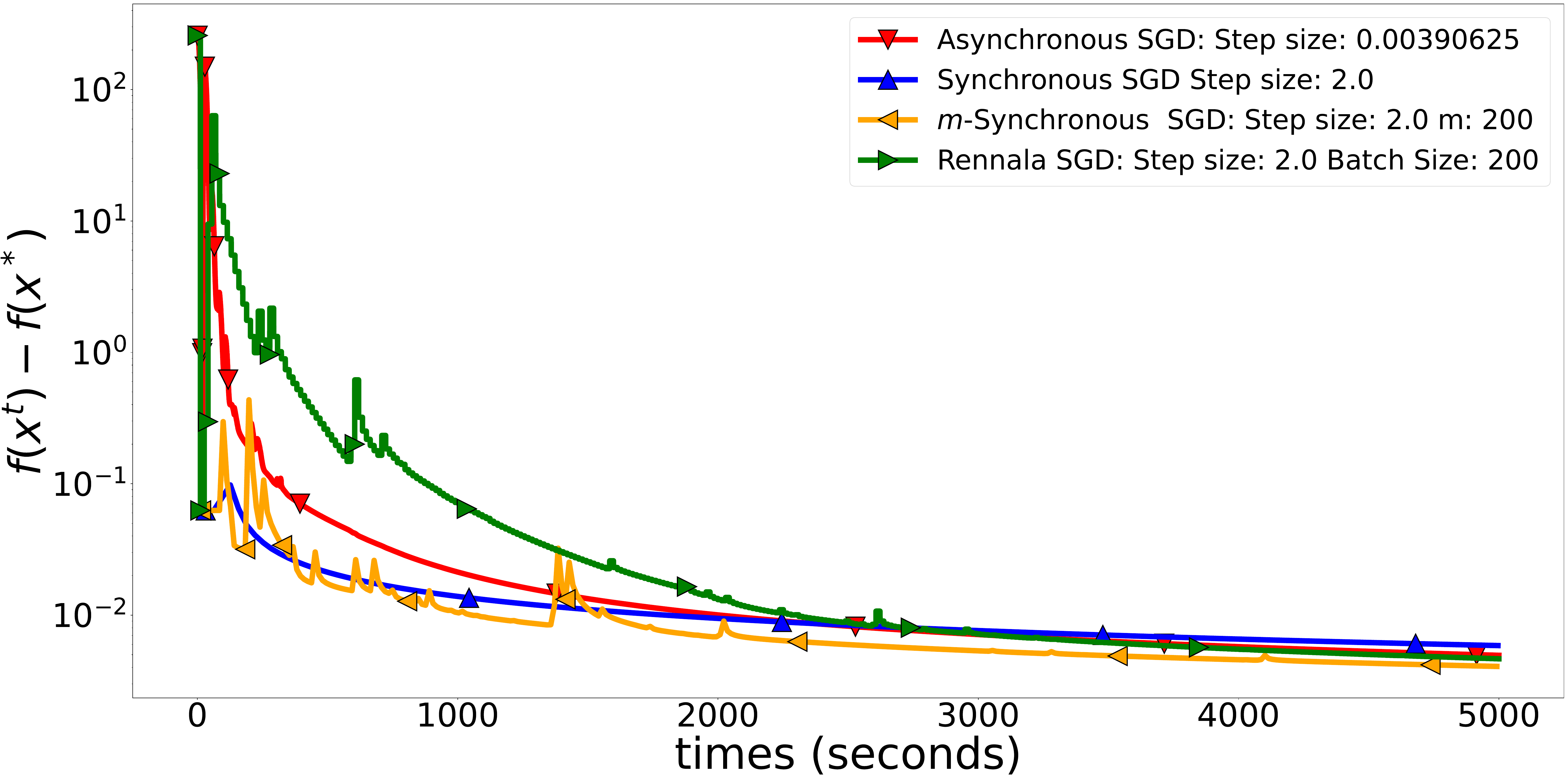}
        \caption{$p=0.001, n=1000, \tau_i = \sqrt{i}$}
        \label{fig:plot2}
    \end{subfigure}
    \hfill
    \begin{subfigure}[b]{0.33\textwidth}
        \centering
        \includegraphics[width=\linewidth]{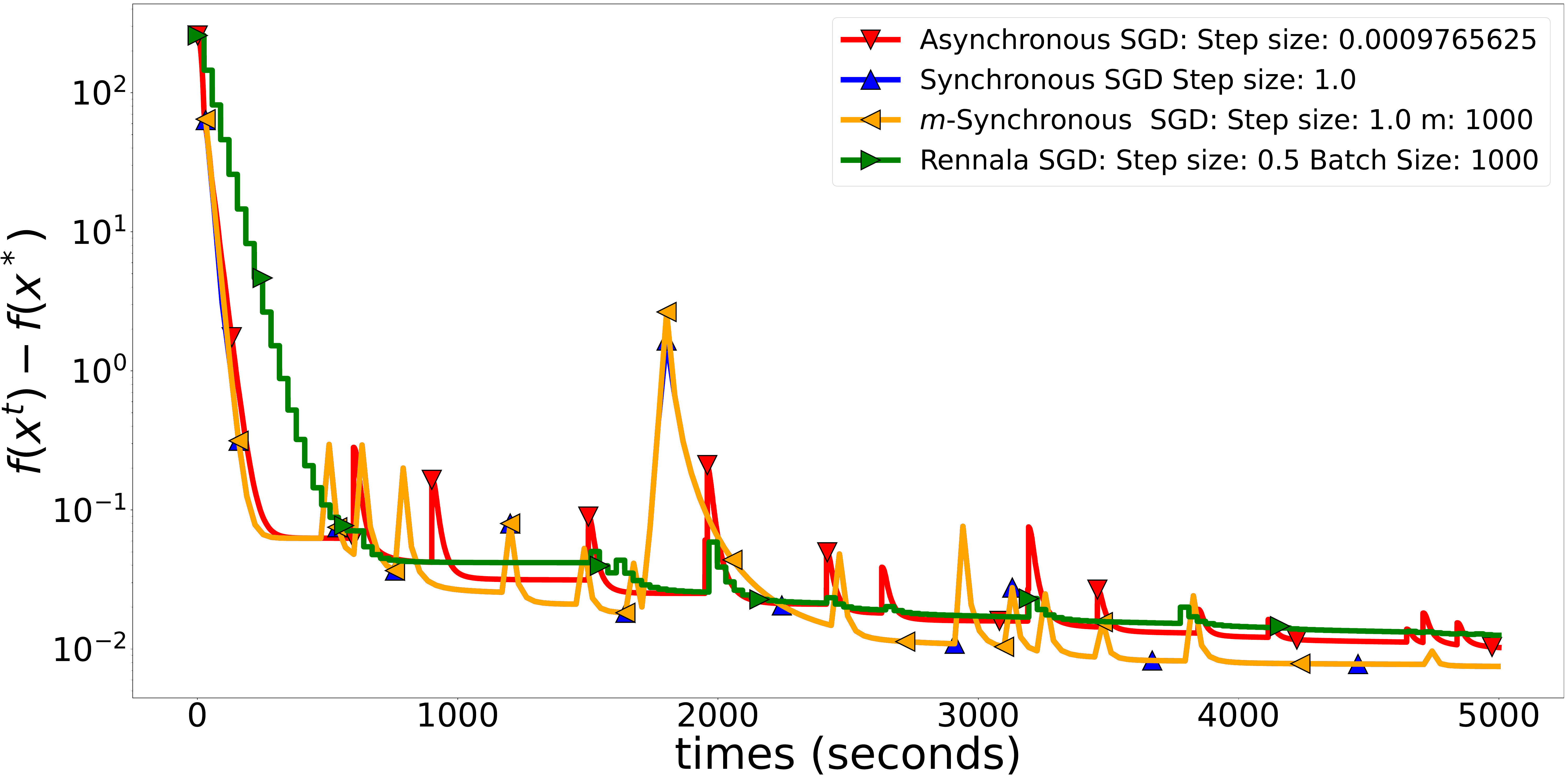}
        \caption{$p=0.0001, n=1000, \tau_i = \sqrt{i}$}
        \label{fig:plot3}
    \end{subfigure}

    \vspace{1em} 

    \begin{subfigure}[b]{0.33\textwidth}
        \centering
        \includegraphics[width=\linewidth]{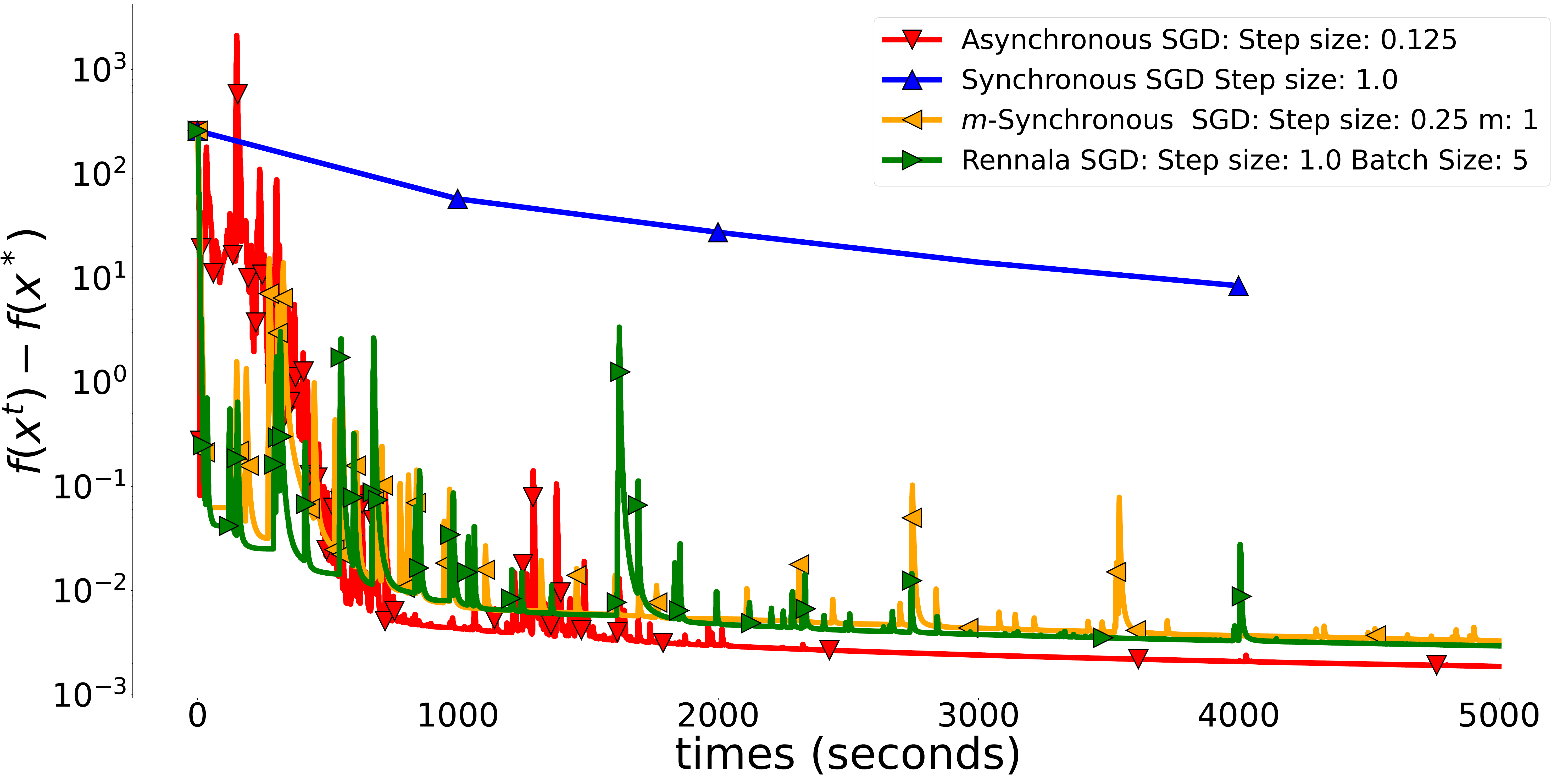}
        \caption{$p=0.01, n=1000, \tau_i = i$}
        \label{fig:plot4}
    \end{subfigure}
    \hfill
    \begin{subfigure}[b]{0.33\textwidth}
        \centering
        \includegraphics[width=\linewidth]{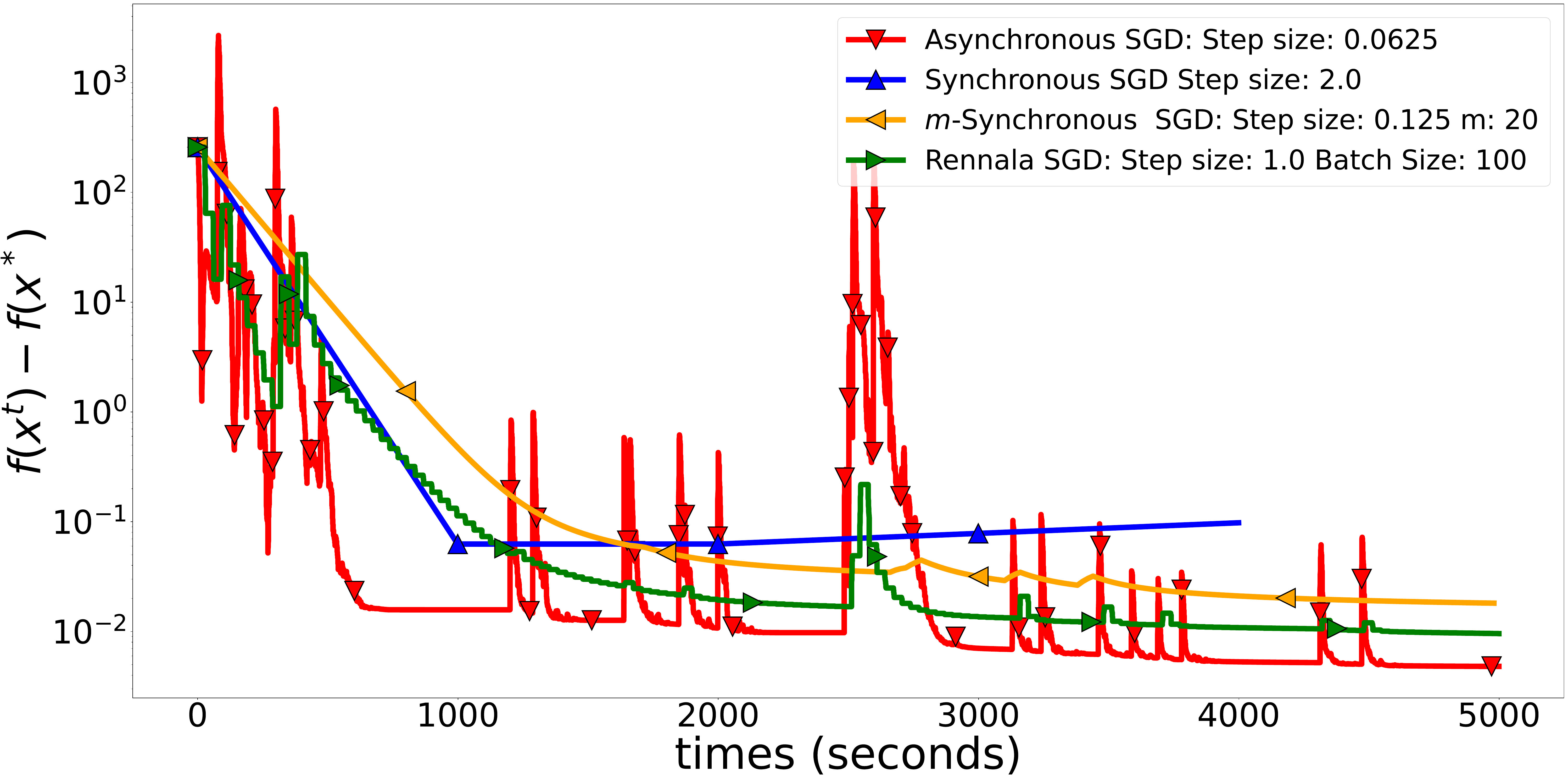}
        \caption{$p=0.001, n=1000, \tau_i = i$}
        \label{fig:plot5}
    \end{subfigure}
    \hfill
    \begin{subfigure}[b]{0.33\textwidth}
        \centering
        \includegraphics[width=\linewidth]{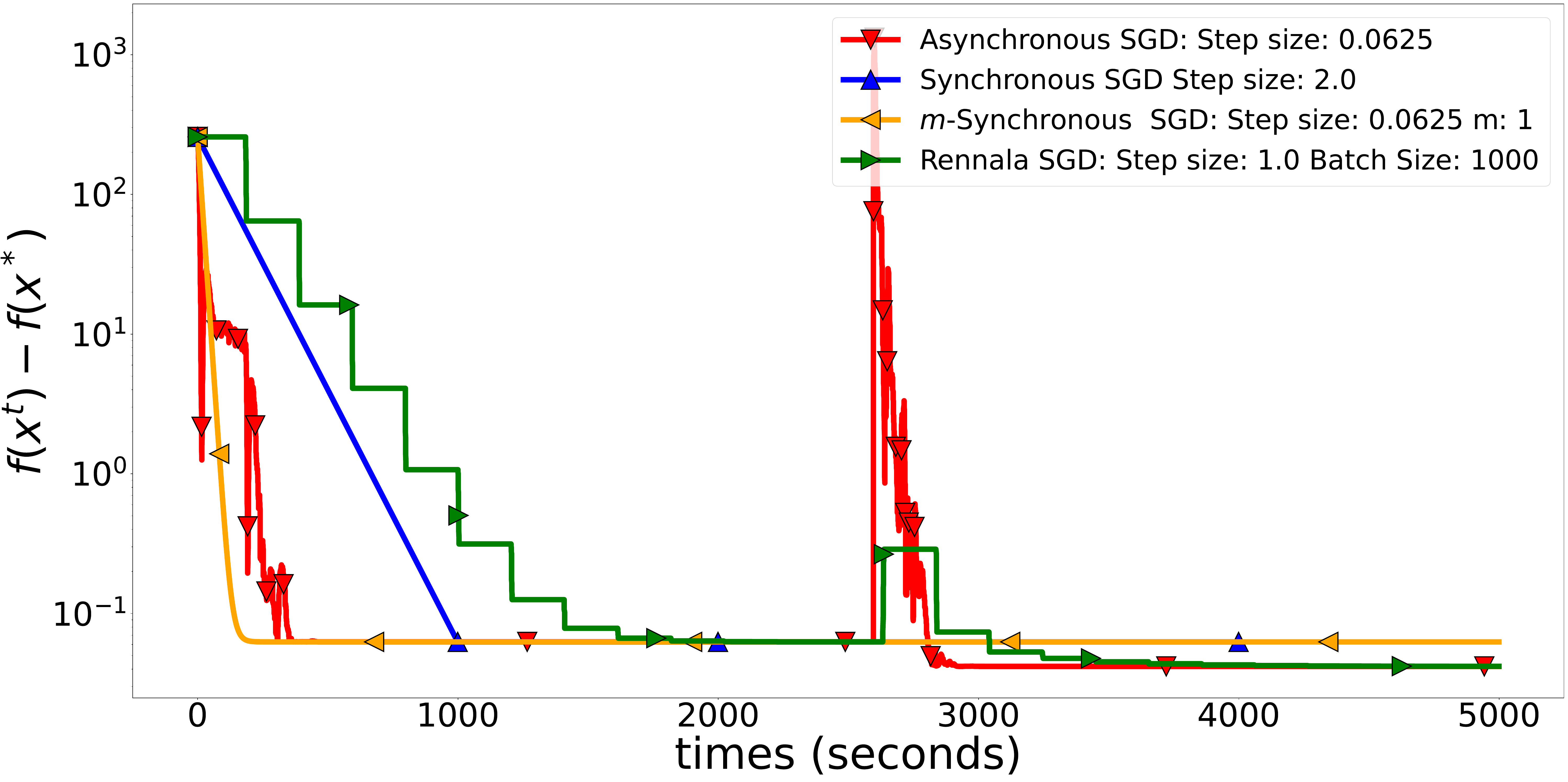}
        \caption{$p=0.0001, n=1000, \tau_i = i$}
        \label{fig:plot6}
    \end{subfigure}

    \vspace{1em}

    \begin{subfigure}[b]{0.33\textwidth}
        \centering
        \includegraphics[width=\linewidth]{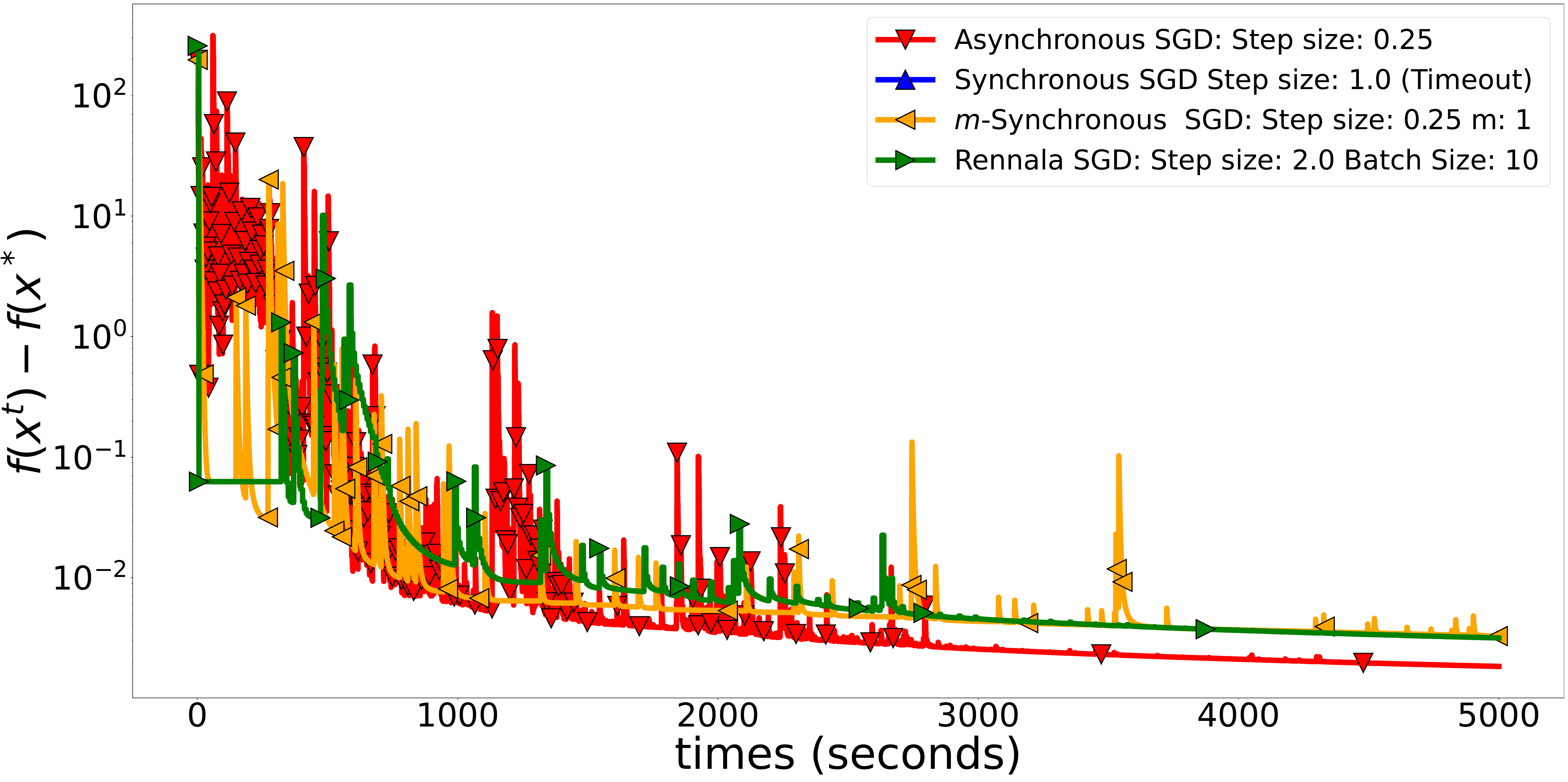}
        \caption{$p=0.01, n=1000, \tau_i = i^{1.2}$}
        \label{fig:plot7}
    \end{subfigure}
    \hfill
    \begin{subfigure}[b]{0.33\textwidth}
        \centering
        \includegraphics[width=\linewidth]{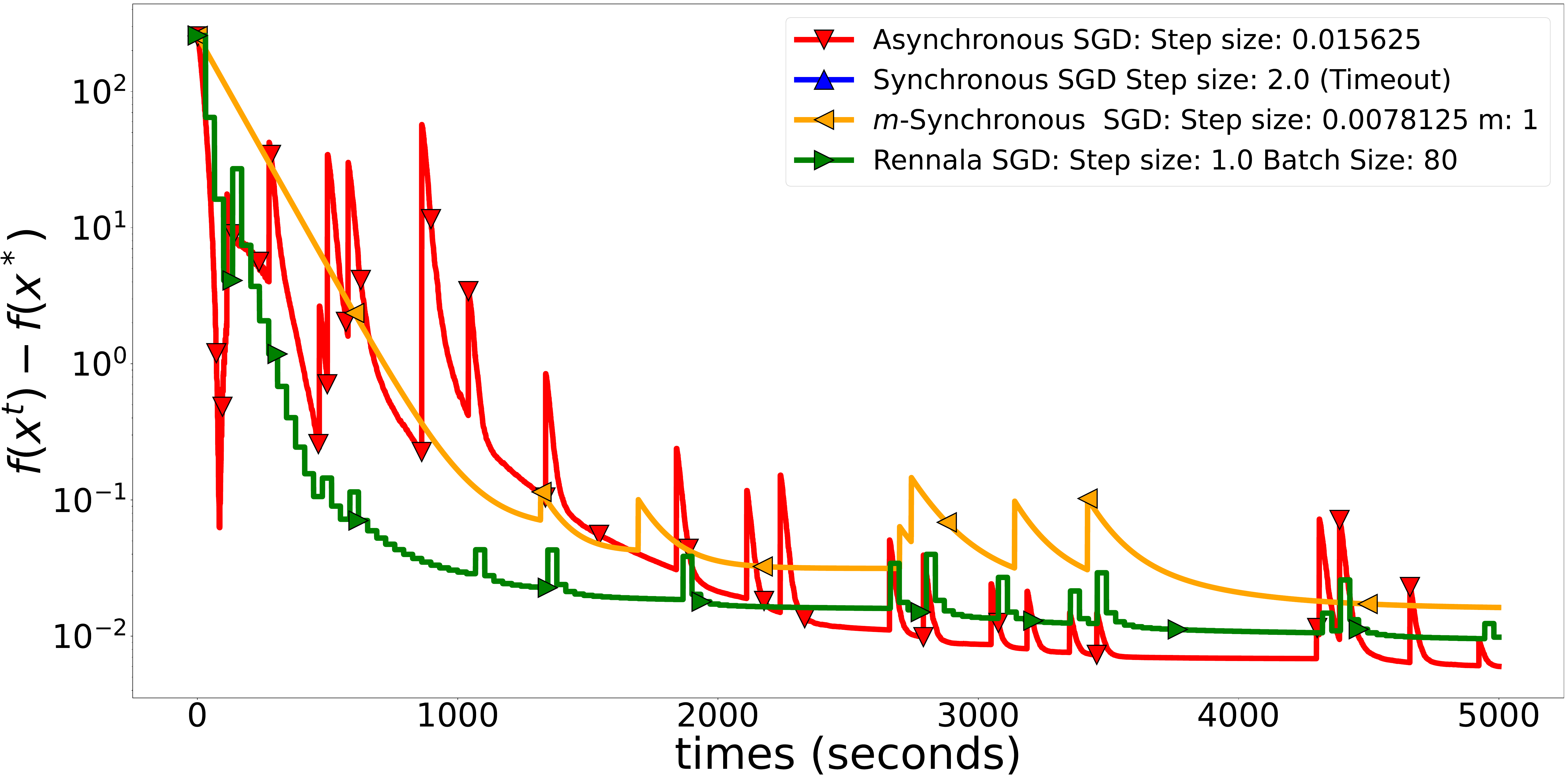}
        \caption{$p=0.001, n=1000, \tau_i = i^{1.2}$}
    \end{subfigure}
    \hfill
    \begin{subfigure}[b]{0.33\textwidth}
        \centering
        \includegraphics[width=\linewidth]{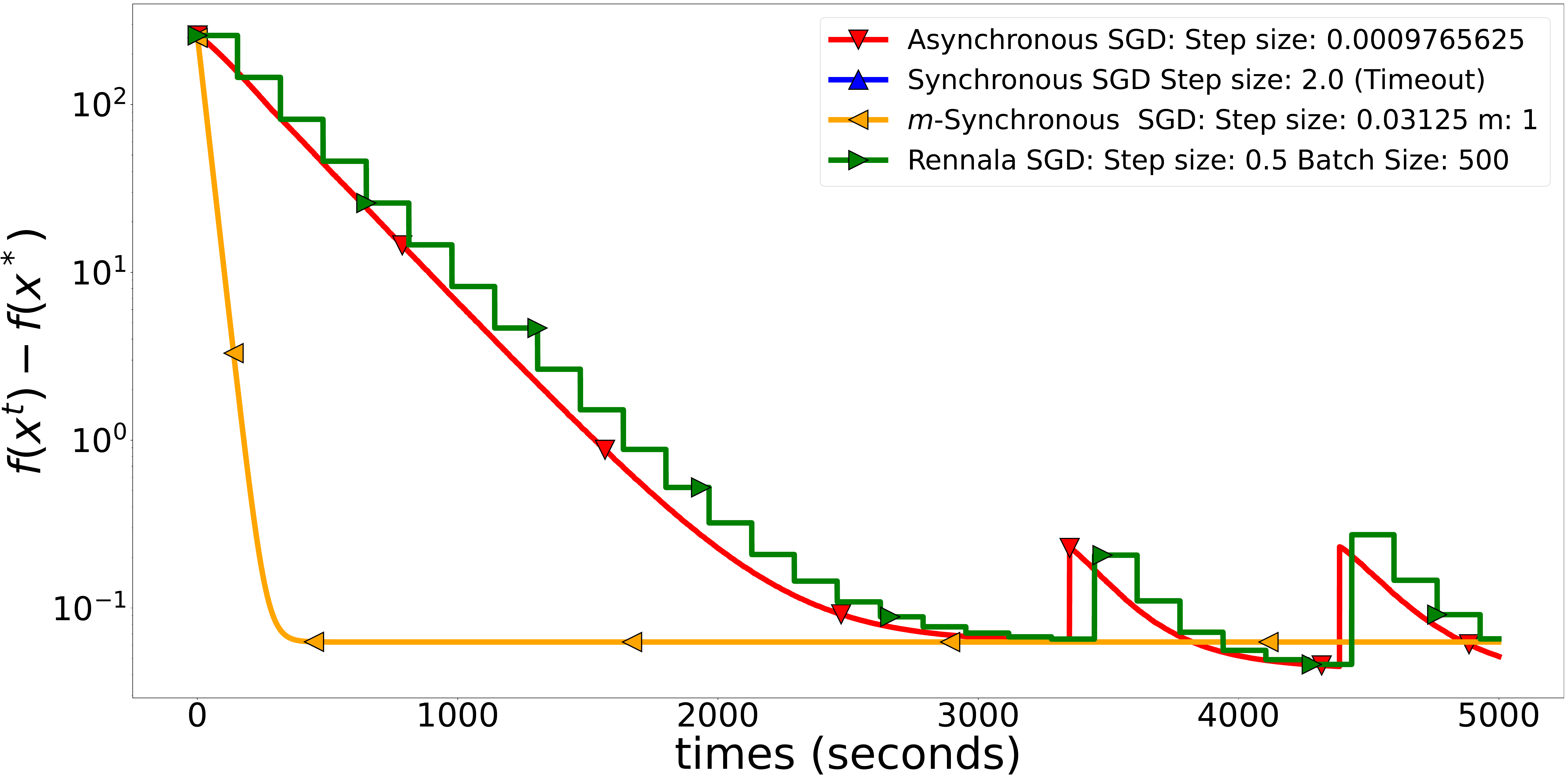}
        \caption{$p=0.0001, n=1000, \tau_i = i^{1.2}$}
        \label{fig:plot9}
    \end{subfigure}
    \caption{Comparison of the methods on the quadratic optimization problem under different computation time scenarios and noise levels.}
    \label{fig:gradient_noise}
\end{figure}
\clearpage
\subsection{Quadratic optimization with increasing number of workers}
In the previous section, we fix the number of workers to $n = 1000.$ In this section, we compare the behavior of methods with $n \in \{100, 1000, 10000\}.$ In Figure~\ref{fig:scale_n}, we observe that $m$-Synchronous, as expected, is robust to the increasing number of workers, while the gap between Synchronous SGD and other methods increases.
\begin{figure}[htbp]
    \centering
    \begin{subfigure}[b]{0.3\textwidth}
        \centering
        \includegraphics[width=\linewidth]{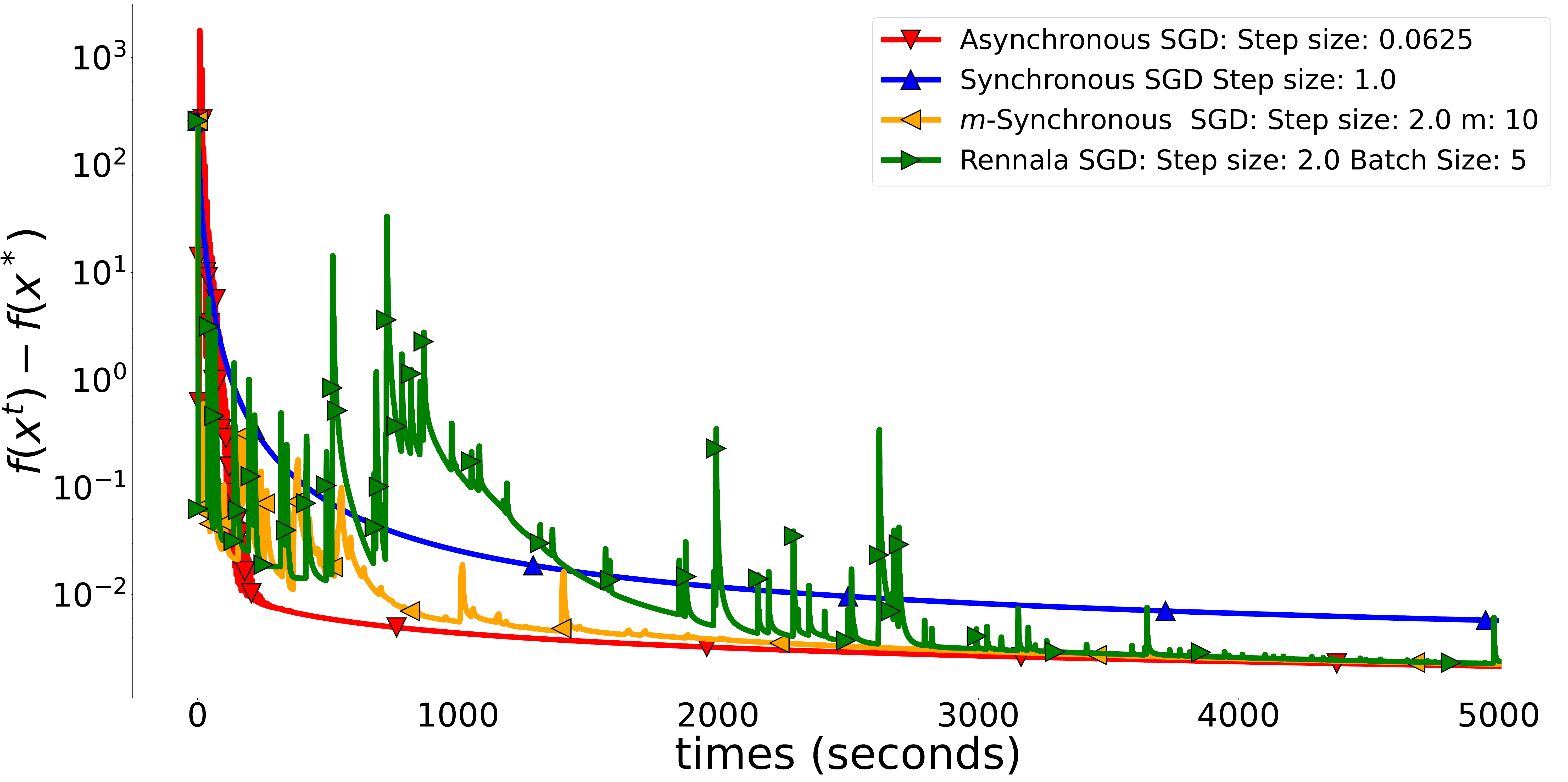}
        \caption{$n=100$}
        \label{fig:img1_3}
    \end{subfigure}
    \hfill
    \begin{subfigure}[b]{0.3\textwidth}
        \centering
        \includegraphics[width=\linewidth]{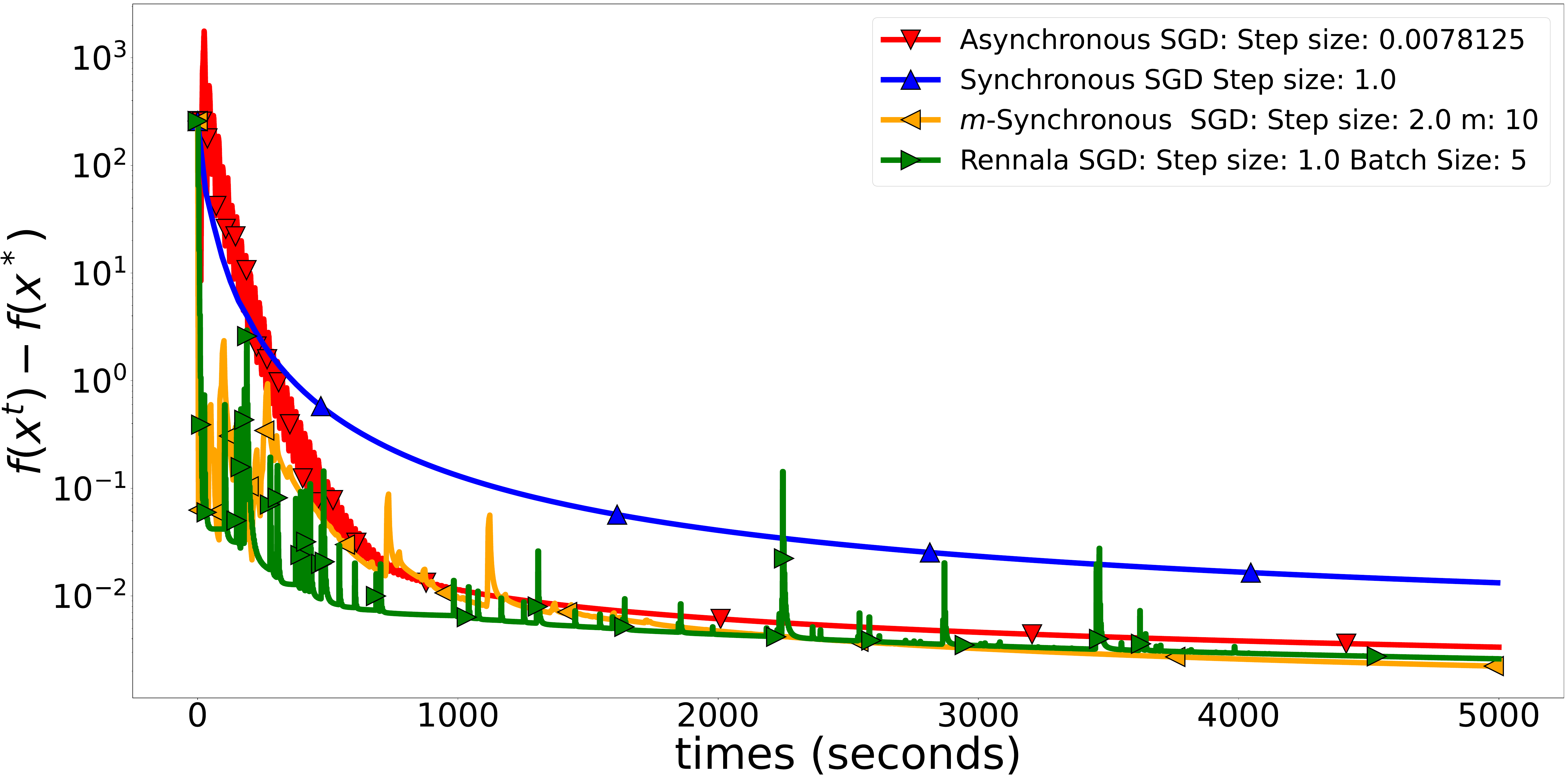}
        \caption{$n=1000$}
        \label{fig:img2_3}
    \end{subfigure}
    \hfill
    \begin{subfigure}[b]{0.3\textwidth}
        \centering
        \includegraphics[width=\linewidth]{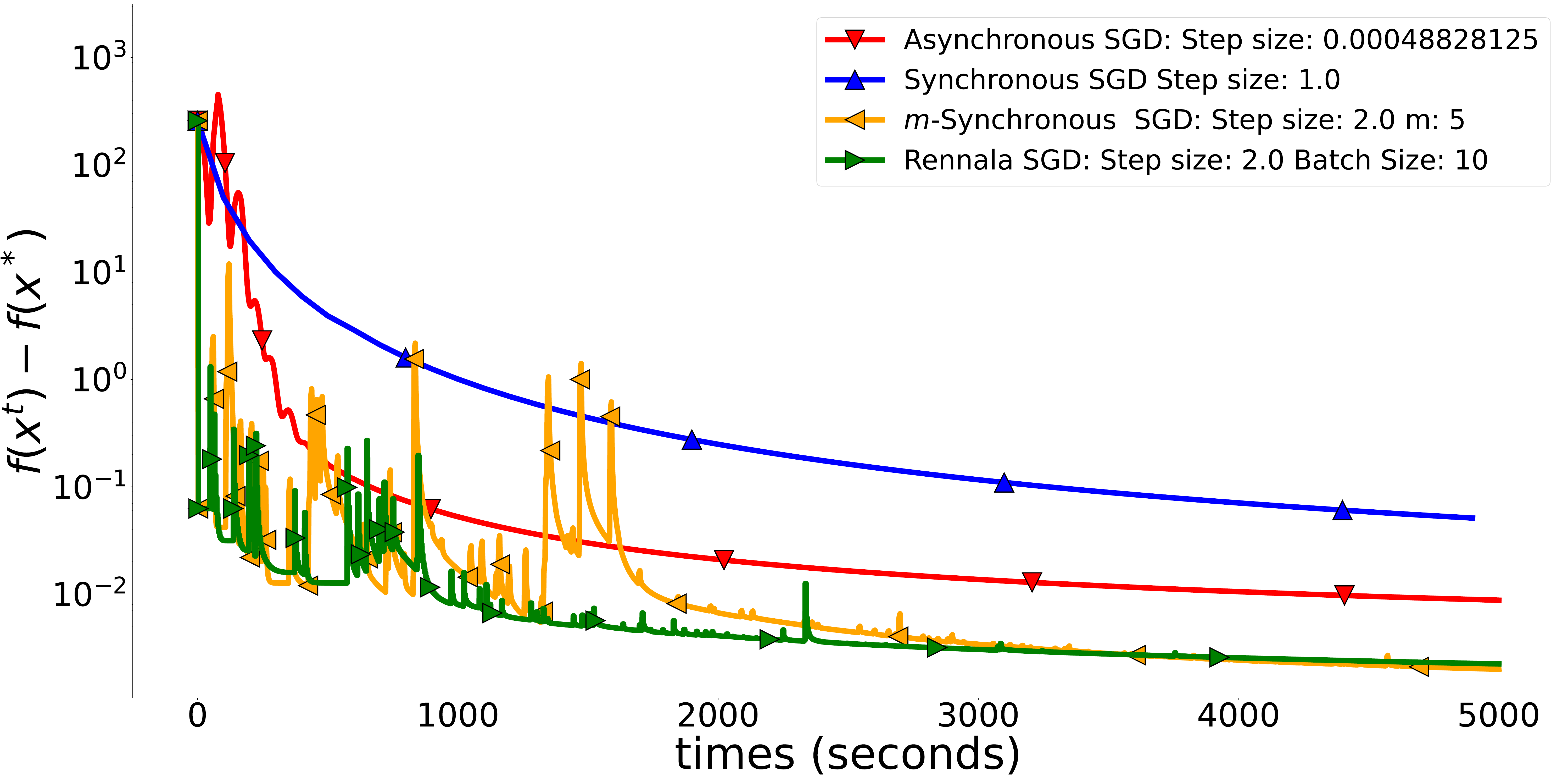}
        \caption{$n=10000$}
        \label{fig:img3_3}
    \end{subfigure}
    \caption{Comparison of the methods on the quadratic optimization problem with different number of workers ($p=0.01, \tau_i = \sqrt{i}$).}
    \label{fig:scale_n}
\end{figure}

\subsection{Quadratic optimization with random computational times}
In this section, we consider an experiment setup similar to Section~\ref{sec:heter_times}, with the only difference that we examine the performance of optimization methods when computational times are modeled as random variables drawn from different probability distributions.

\begin{enumerate}
\item First, we assume that random computational times are drawn from a normal distribution with parameter $\mu_i$ and variance $\sigma^2$, truncated to the non-negative half-line $[0, \infty),$ which we denote as $\cN_{\geq 0}(\mu_i, \sigma^2).$ ($\Exp{\bar{\tau}_i} \neq \mu_i,$ in general). 

\item Next, we consider $\bar{\tau}_i$ drawn from a Gamma distribution with shape parameter $k = \tau_i^2 / \sigma^2$ and 
scale parameter $\theta = \sigma^2 / \tau_i$. Under this parametrization, the mean and variance are $\Exp{\bar{\tau}_i} = k \theta = \tau_i$ and $\operatorname{Var}(\bar{\tau}_i) = k \theta^2 = \sigma^2,$ accordingly. 

\item Finally, we take $\bar{\tau}_i$ drawn from a uniform distribution $\textnormal{Unif}[\tau_i - \sigma, \tau_i + \sigma].$ 
\end{enumerate}

We present the results in Figures~\ref{fig:sqrt_random}, \ref{fig:linear_random}, and \ref{fig:quad_unif_noise}. In the first row of Figure~\ref{fig:sqrt_random}, as expected, we get the same result as in Figure~\ref{fig:plot1}. In all noise regimes, we observe that $m$-Synchronous SGD is robust and that its performance is comparable to that of the asynchronous methods. We arrive at the same conclusion in Figure~\ref{fig:linear_random} when we change $\tau_i$ to $\tau_i = i$. Figure~\ref{fig:quad_unif_noise} also concur. However, in Figure~\ref{fig:quad_unif_noise}, even Synchronous SGD with all participating workers has a good performance when the means are the same, i.e., $\tau_i = 1$, which supports our theoretical results.

\begin{figure}[htbp]
    \centering
    \begin{subfigure}[b]{0.3\textwidth}
        \centering
        \includegraphics[width=\linewidth]{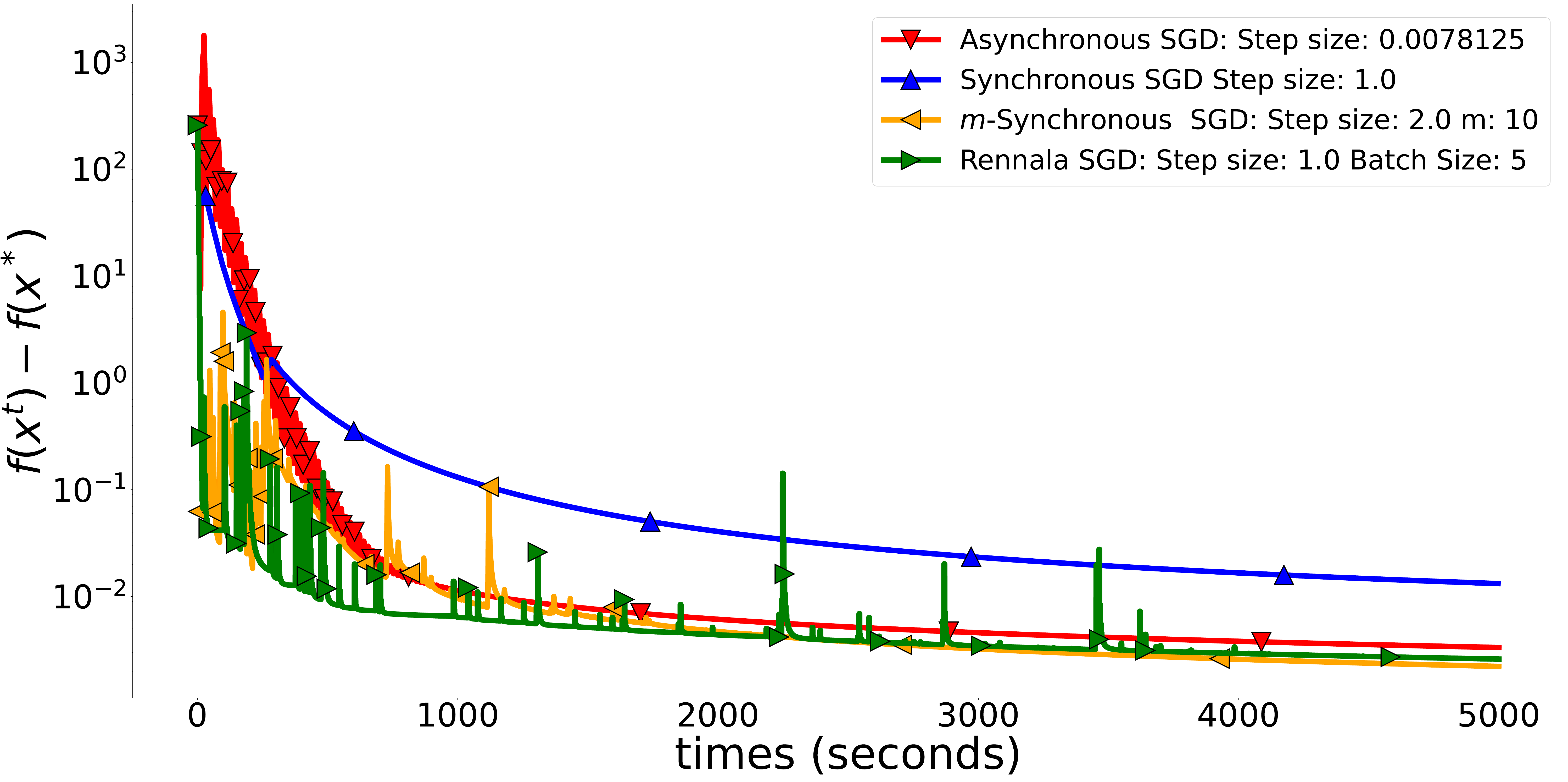}
        \caption{$\bar{\tau}_i = \tau_i$}
    \end{subfigure}
    \begin{subfigure}[b]{0.3\textwidth}
        \centering
        \includegraphics[width=\linewidth]{final_results/final__nn_1000_dim_1000_noise_0_01_delay_sqrt_time5000.pdf}
        \caption{$\bar{\tau}_i = \tau_i$}
    \end{subfigure}
    \begin{subfigure}[b]{0.3\textwidth}
        \centering
        \includegraphics[width=\linewidth]{final_results/final__nn_1000_dim_1000_noise_0_01_delay_sqrt_time5000.pdf}
        \caption{$\bar{\tau}_i = \tau_i$}
    \end{subfigure}

    \medskip

    \begin{subfigure}[b]{0.3\textwidth}
        \centering
        \includegraphics[width=\linewidth]{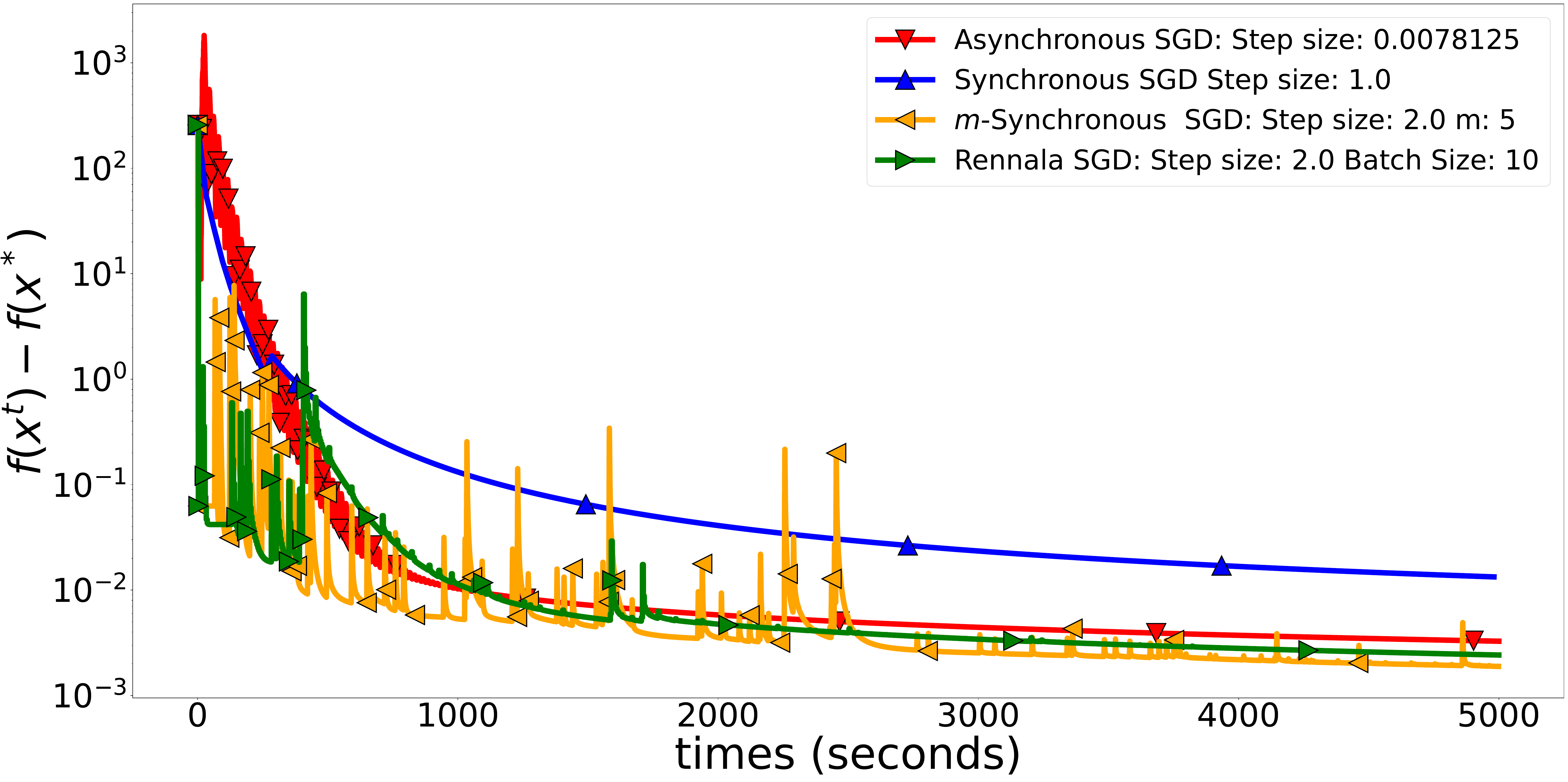}
        \caption{$\bar{\tau}_i \sim \cN_{\geq 0}(\sqrt{i}, \sigma^2), \sigma = 0.1$}
    \end{subfigure}
    \begin{subfigure}[b]{0.3\textwidth}
        \centering
        \includegraphics[width=\linewidth]{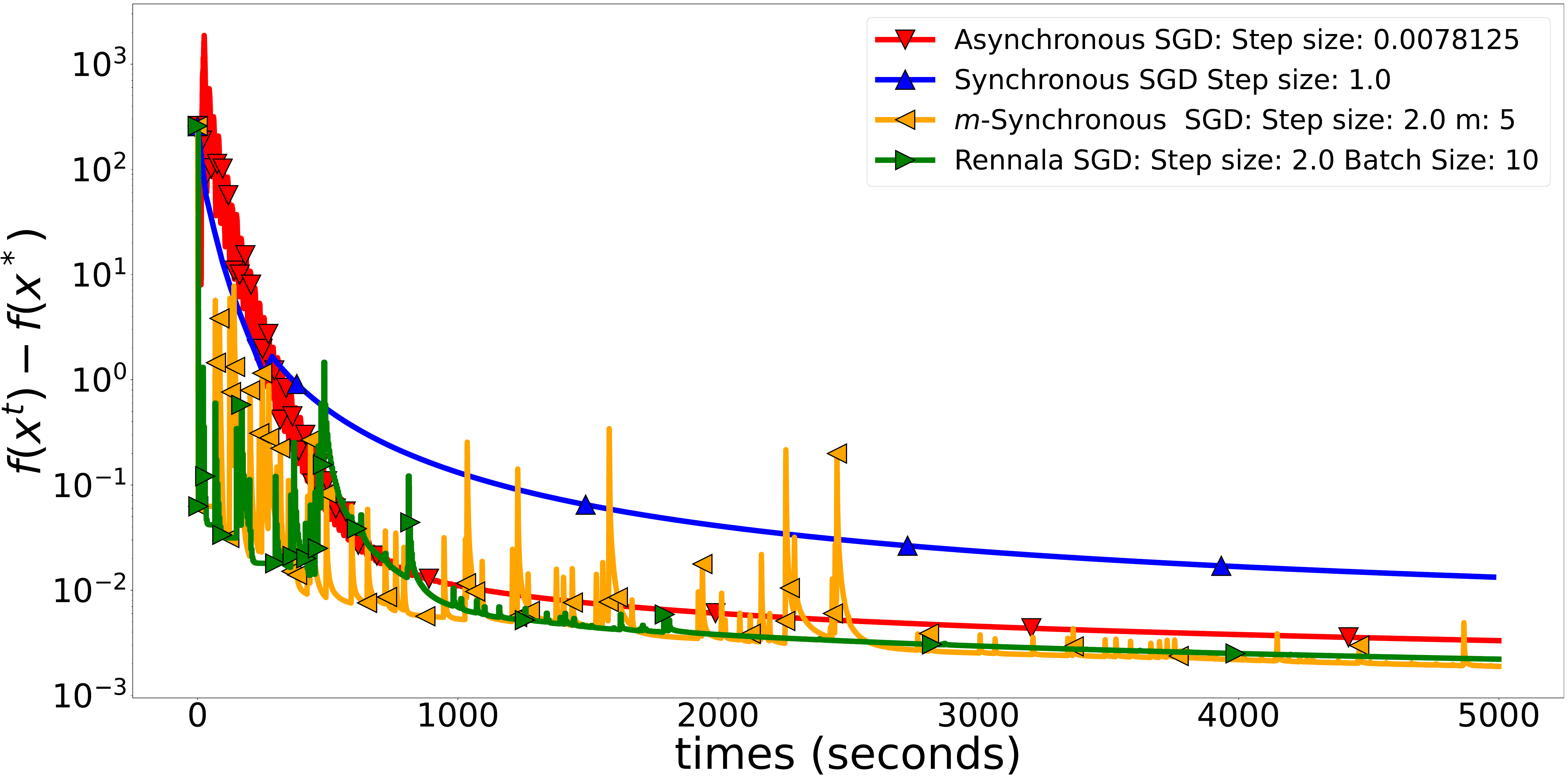}
        \caption{$\bar{\tau}_i \sim \text{Gamma}(\frac{\tau_i^2}{\sigma^2}, \frac{\sigma^2}{\tau_i}), \sigma = 0.1$}
    \end{subfigure}
    \begin{subfigure}[b]{0.3\textwidth}
        \centering
        \includegraphics[width=\linewidth]{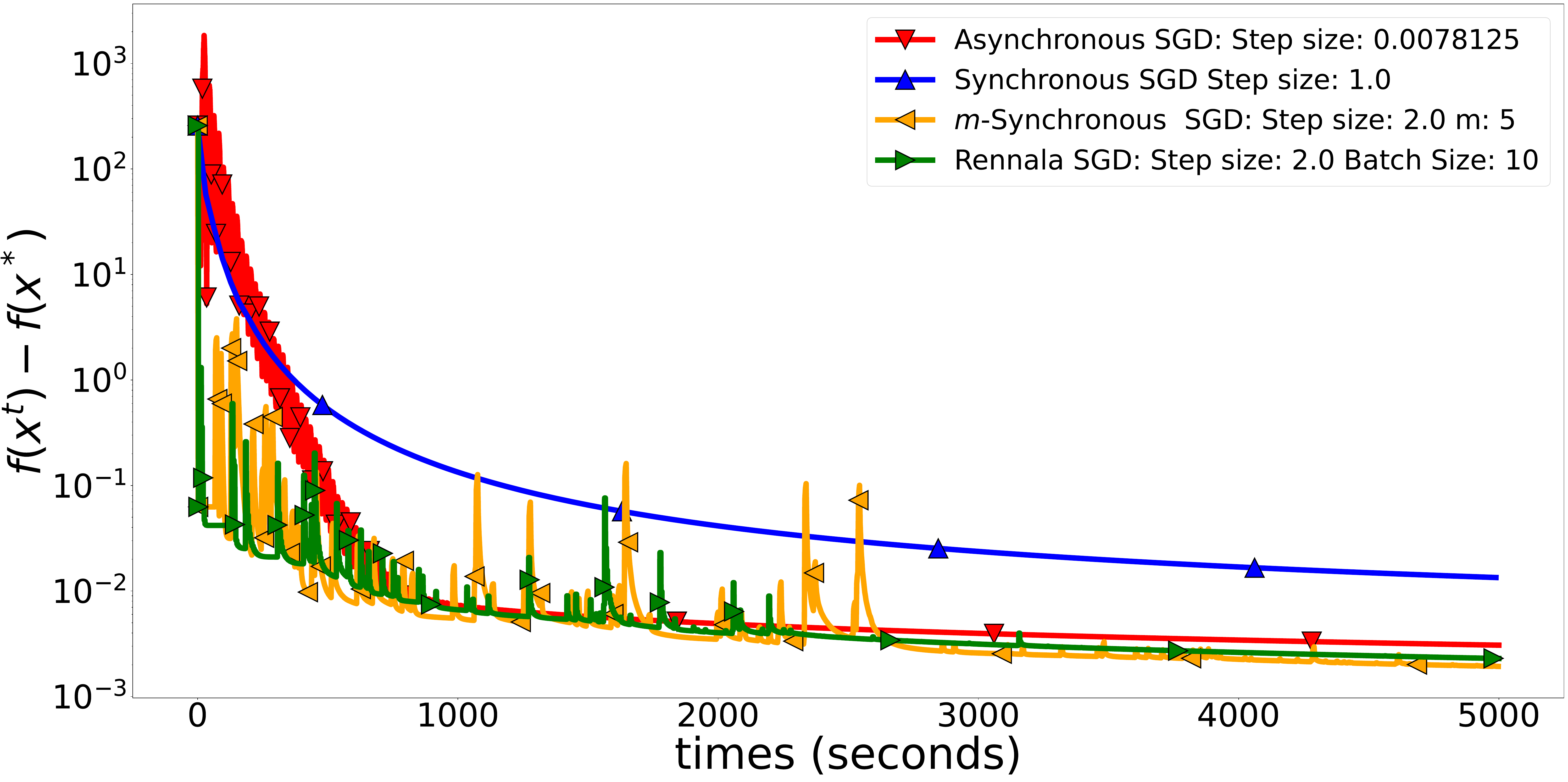}
        \caption{$\bar{\tau}_i \sim \tau_i + \textnormal{Unif}([-0.5, 0.5])$}
    \end{subfigure}

    \medskip
    
    \begin{subfigure}[b]{0.3\textwidth}
        \centering
        \includegraphics[width=\linewidth]{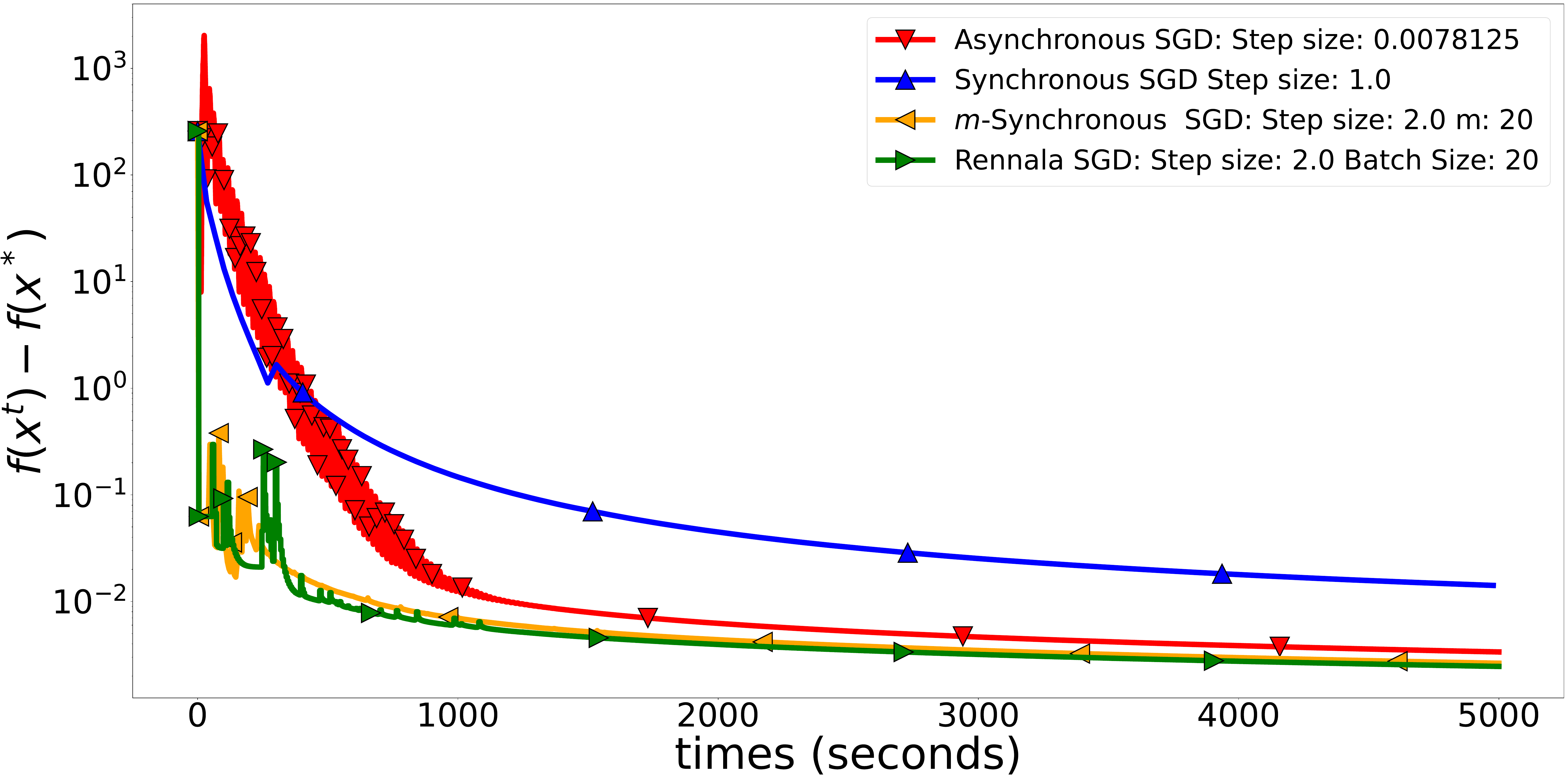}
        \caption{$\bar{\tau}_i \sim \cN_{\geq 0}(\sqrt{i}, \sigma^2), \sigma = 1$}
    \end{subfigure}
    \begin{subfigure}[b]{0.3\textwidth}
        \centering
        \includegraphics[width=\linewidth]{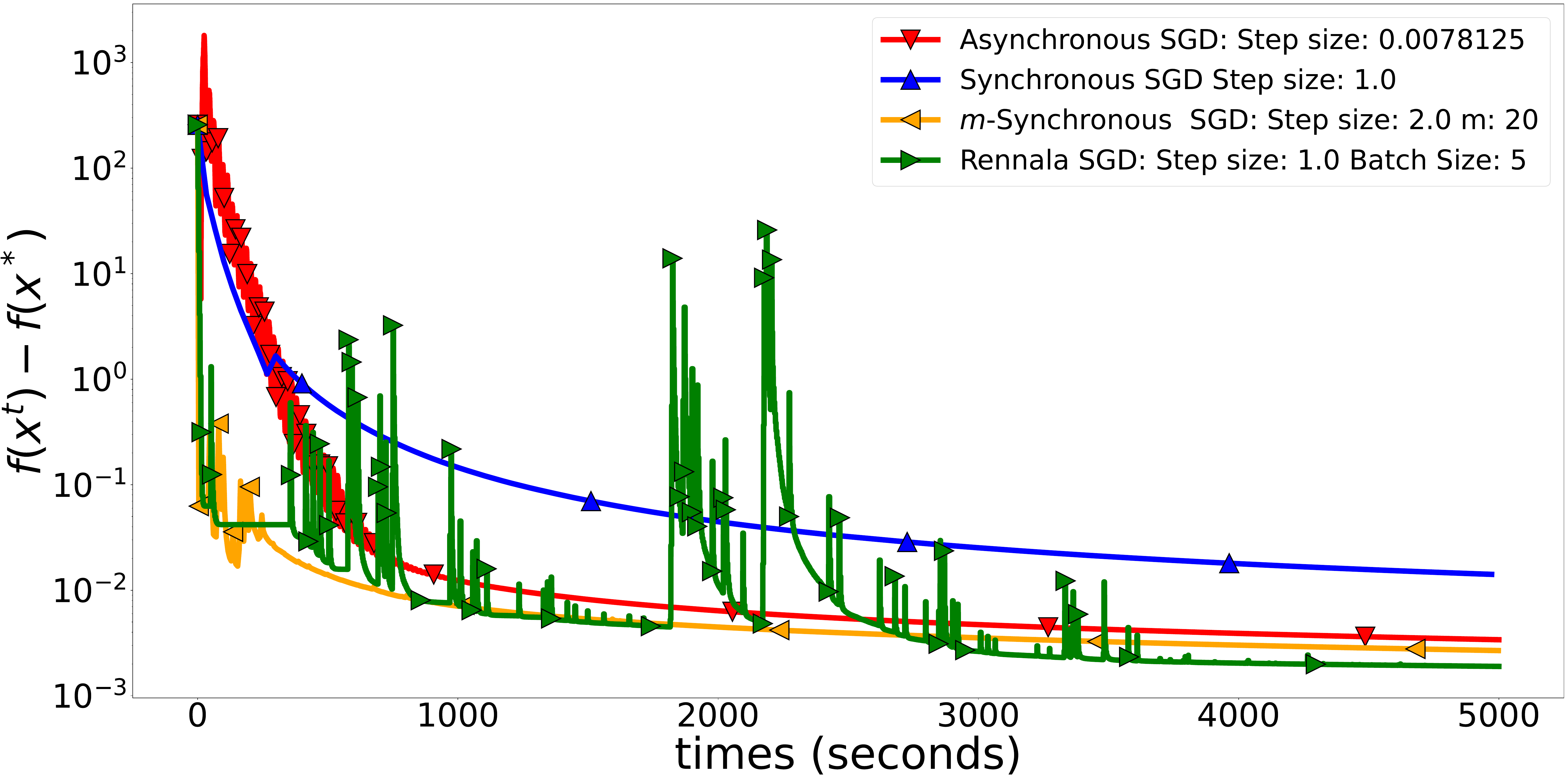}
        \caption{$\bar{\tau}_i \sim \text{Gamma}(\frac{\tau_i^2}{\sigma^2}, \frac{\sigma^2}{\tau_i}), \sigma = 1$}
    \end{subfigure}
    \begin{subfigure}[b]{0.3\textwidth}
        \centering
        \includegraphics[width=\linewidth]{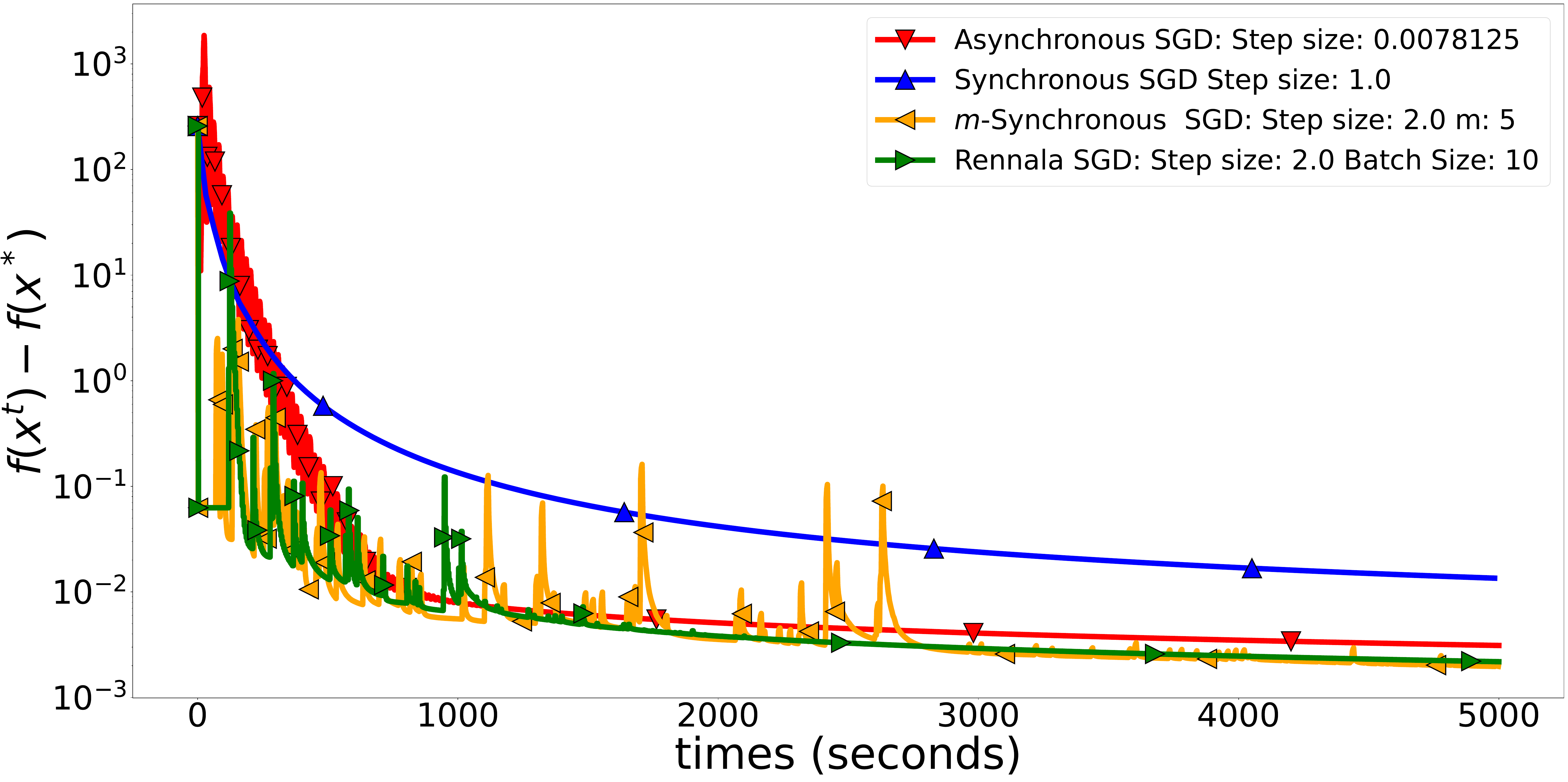}
        \caption{$\bar{\tau}_i \sim \tau_i + \textnormal{Unif}([-0.7, 0.7])$}
    \end{subfigure}

    \caption{Comparison of the methods on a quadratic optimization problem with different types of random computation times. We take $p = 0.01$, $n = 1000$, and $\tau_i = \sqrt{i}$. Three noise regimes are considered: (i) no noise in the first row; (ii) weak noise in the second row; and (iii) stronger noise in the third row. In the first column, we sample computation times from the truncated normal distribution. In the second column, from the gamma distribution. In the third column, from the uniform distribution.}
    \label{fig:sqrt_random}
\end{figure}

\begin{figure}[h]
    \centering
    \begin{subfigure}[b]{0.3\textwidth}
        \centering
        \includegraphics[width=\linewidth]{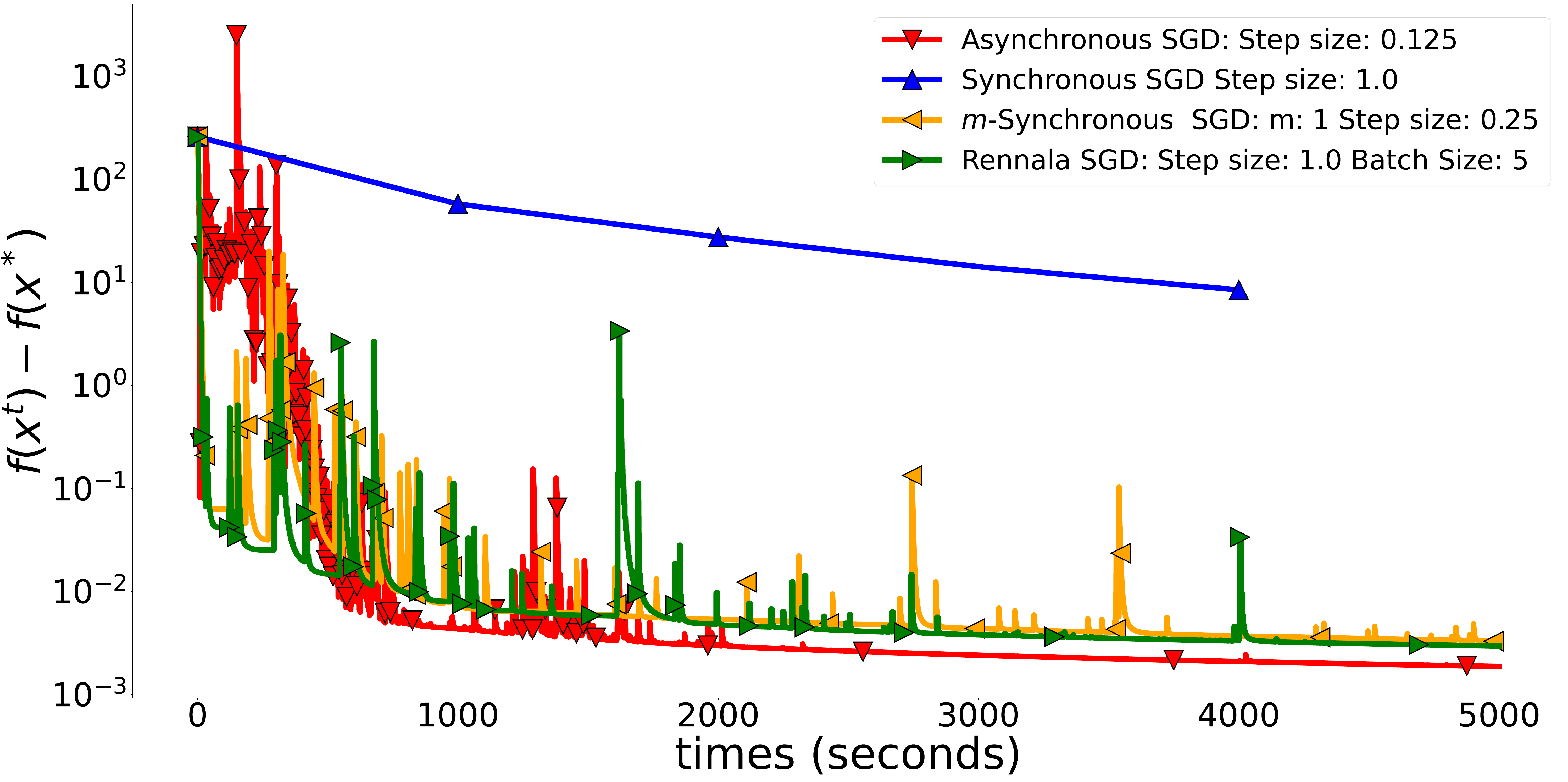}
        \caption{$\bar{\tau}_i = \tau_i$}
    \end{subfigure}
    \begin{subfigure}[b]{0.3\textwidth}
        \centering
        \includegraphics[width=\linewidth]{final_results/remake_1_noise_001_delays_linear_num_nodes_1000_distr_none_sigma_00_time5000.pdf}
        \caption{$\bar{\tau}_i = \tau_i$}
    \end{subfigure}
    \begin{subfigure}[b]{0.3\textwidth}
        \centering
        \includegraphics[width=\linewidth]{final_results/remake_1_noise_001_delays_linear_num_nodes_1000_distr_none_sigma_00_time5000.pdf}
        \caption{$\bar{\tau}_i = \tau_i$}
    \end{subfigure}
    
    \medskip

    \begin{subfigure}[b]{0.3\textwidth}
        \centering
        \includegraphics[width=\linewidth]{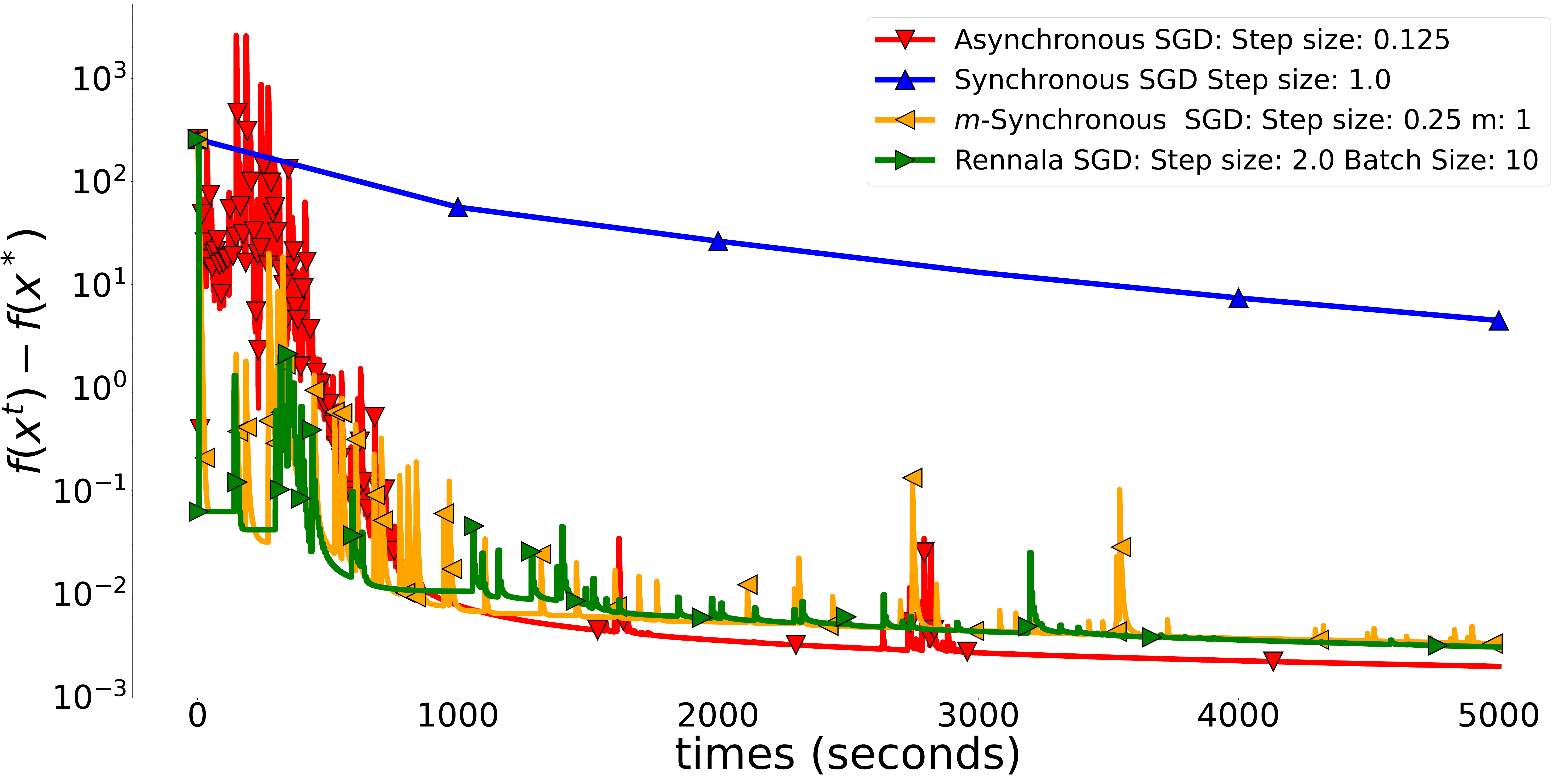}
        \caption{$\bar{\tau}_i \sim \cN_{\geq 0}(i, \sigma^2), \sigma = 0.1$}
    \end{subfigure}
    \begin{subfigure}[b]{0.3\textwidth}
        \centering
        \includegraphics[width=\linewidth]{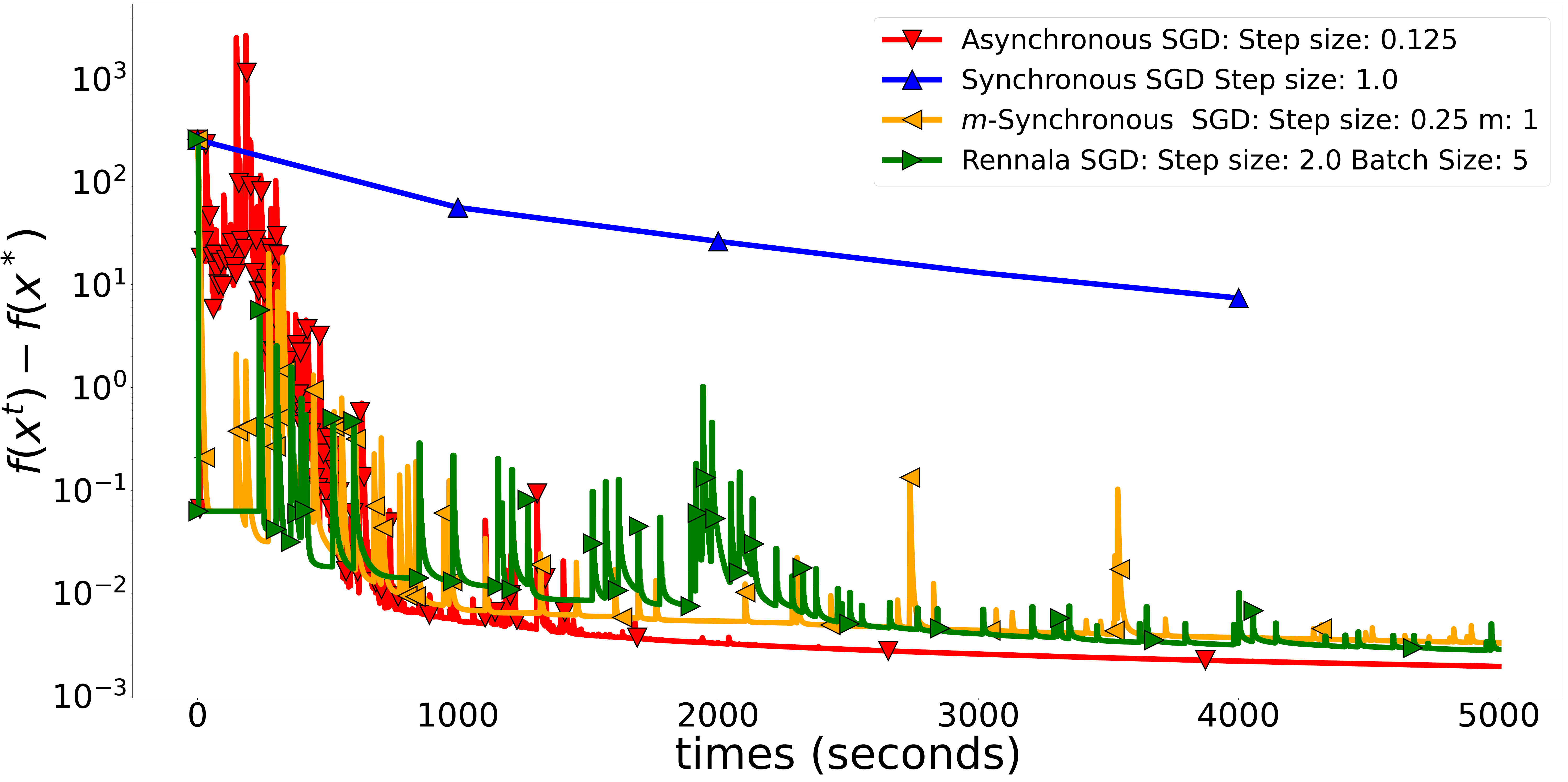}
        \caption{$\bar{\tau}_i \sim \text{Gamma}(\frac{\tau_i^2}{\sigma^2}, \frac{\sigma^2}{\tau_i}), \sigma = 0.1$}
    \end{subfigure}
    \begin{subfigure}[b]{0.3\textwidth}
        \centering
        \includegraphics[width=\linewidth]{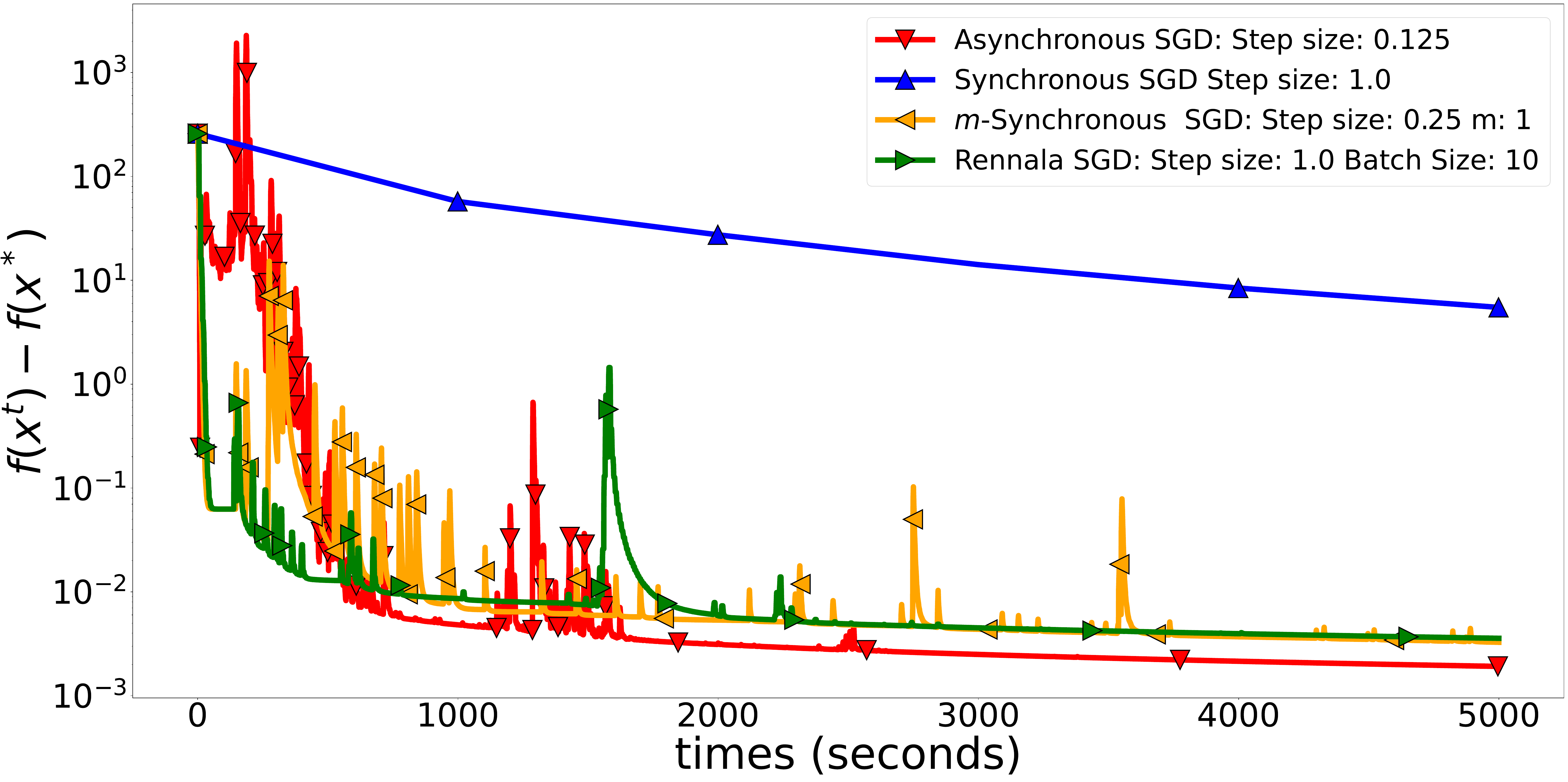}
        \caption{$\bar{\tau}_i \sim \tau_i + \textnormal{Unif}([-0.5, 0.5])$}
    \end{subfigure}

    \medskip

    \begin{subfigure}[b]{0.3\textwidth}
        \centering
        \includegraphics[width=\linewidth]{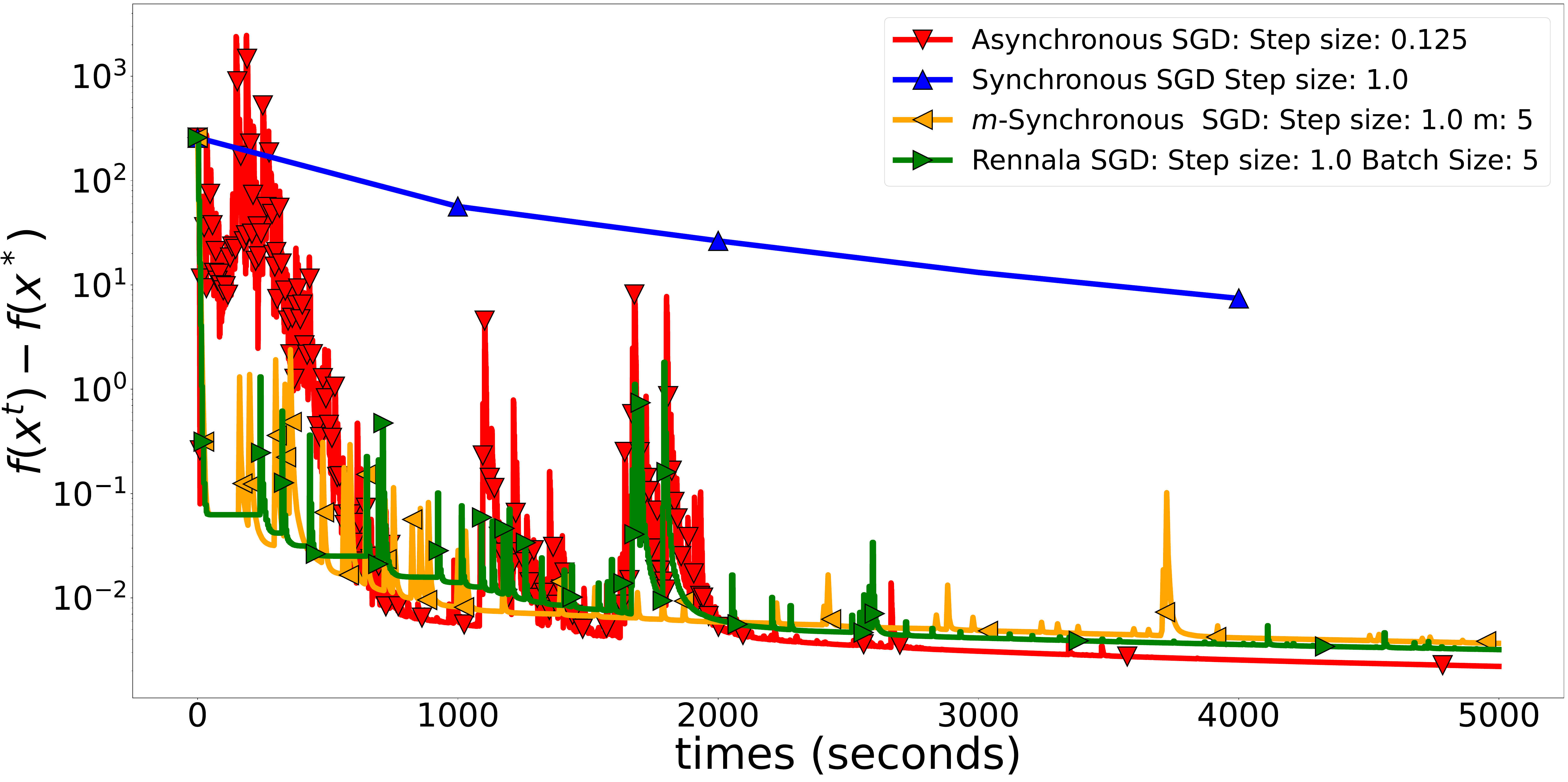}
        \caption{$\bar{\tau}_i \sim \cN_{\geq 0}(i, \sigma^2), \sigma = 1$}
    \end{subfigure}
    \begin{subfigure}[b]{0.3\textwidth}
        \centering
        \includegraphics[width=\linewidth]{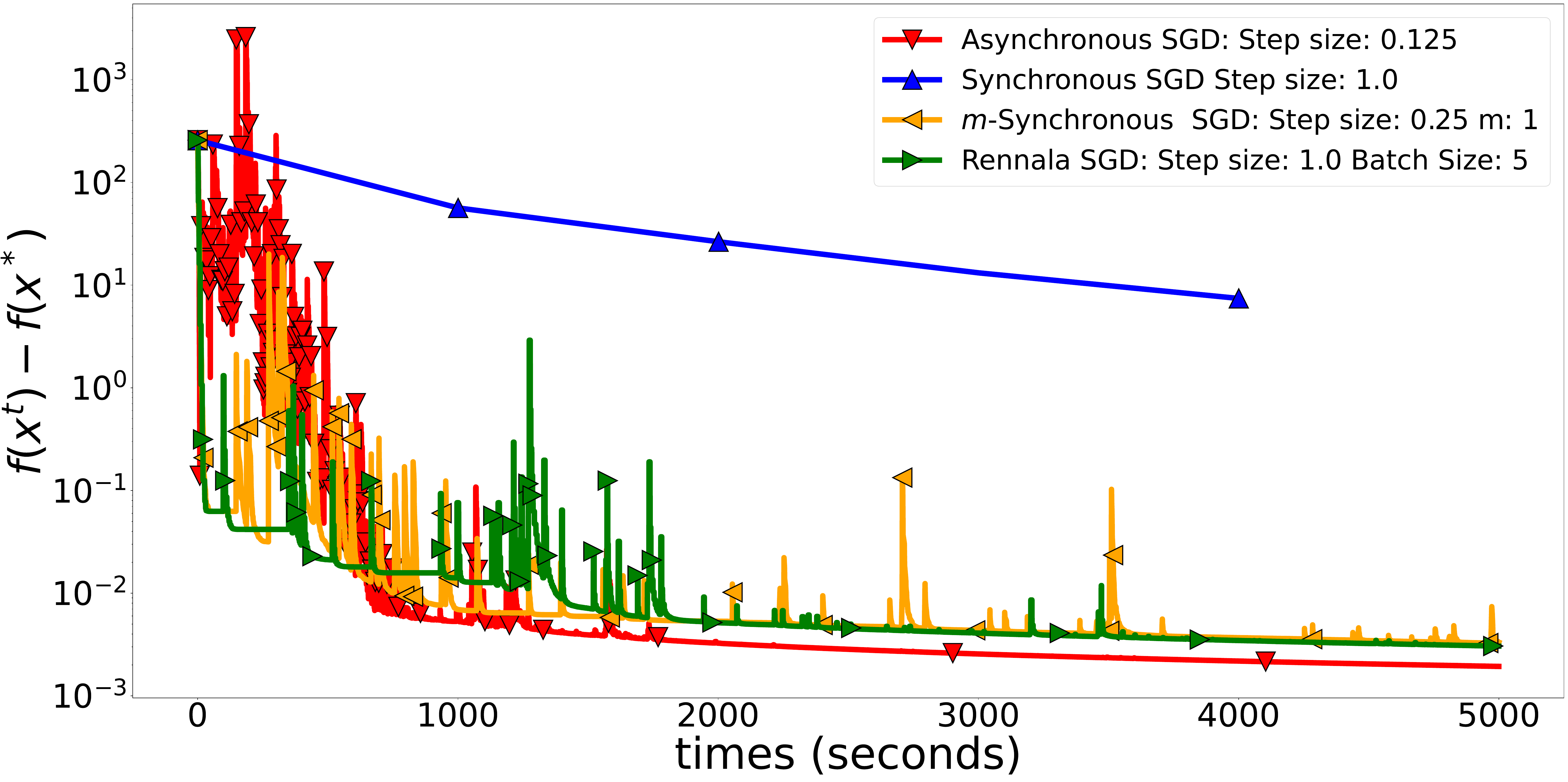}
        \caption{$\bar{\tau}_i \sim \text{Gamma}(\frac{\tau_i^2}{\sigma^2}, \frac{\sigma^2}{\tau_i}), \sigma = 1$}
        \label{fig:asfqaf}
    \end{subfigure}
    \begin{subfigure}[b]{0.3\textwidth}
        \centering
        \includegraphics[width=\linewidth]{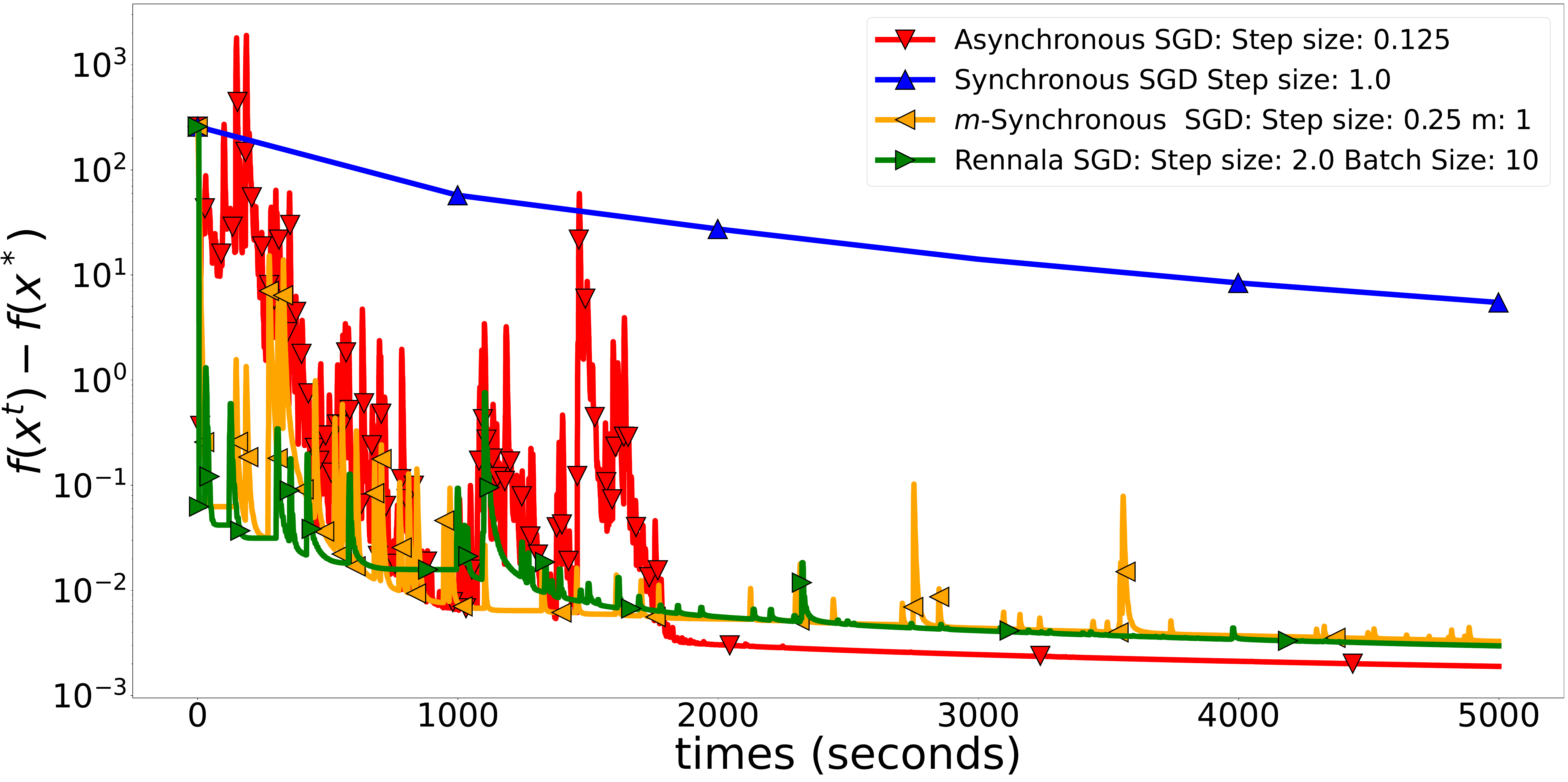}
        \caption{$\bar{\tau}_i \sim \tau_i + \textnormal{Unif}([-0.7, 0.7])$}
    \end{subfigure}

    \caption{Comparison of the methods on a quadratic optimization problem with different types of random computation times. We take $p = 0.01$, $n = 1000$, and $\tau_i = i$. Three noise regimes are considered: (i) no noise in the first row; (ii) weak noise in the second row; and (iii) stronger noise in the third row. In the first column, we sample computation times from the truncated normal distribution. In the second column, from the gamma distribution. In the third column, from the uniform distribution.}
    \label{fig:linear_random}
\end{figure}

\begin{figure}[h]
    \centering
    \begin{subfigure}[b]{0.3\textwidth}
        \centering
        \includegraphics[width=\linewidth]{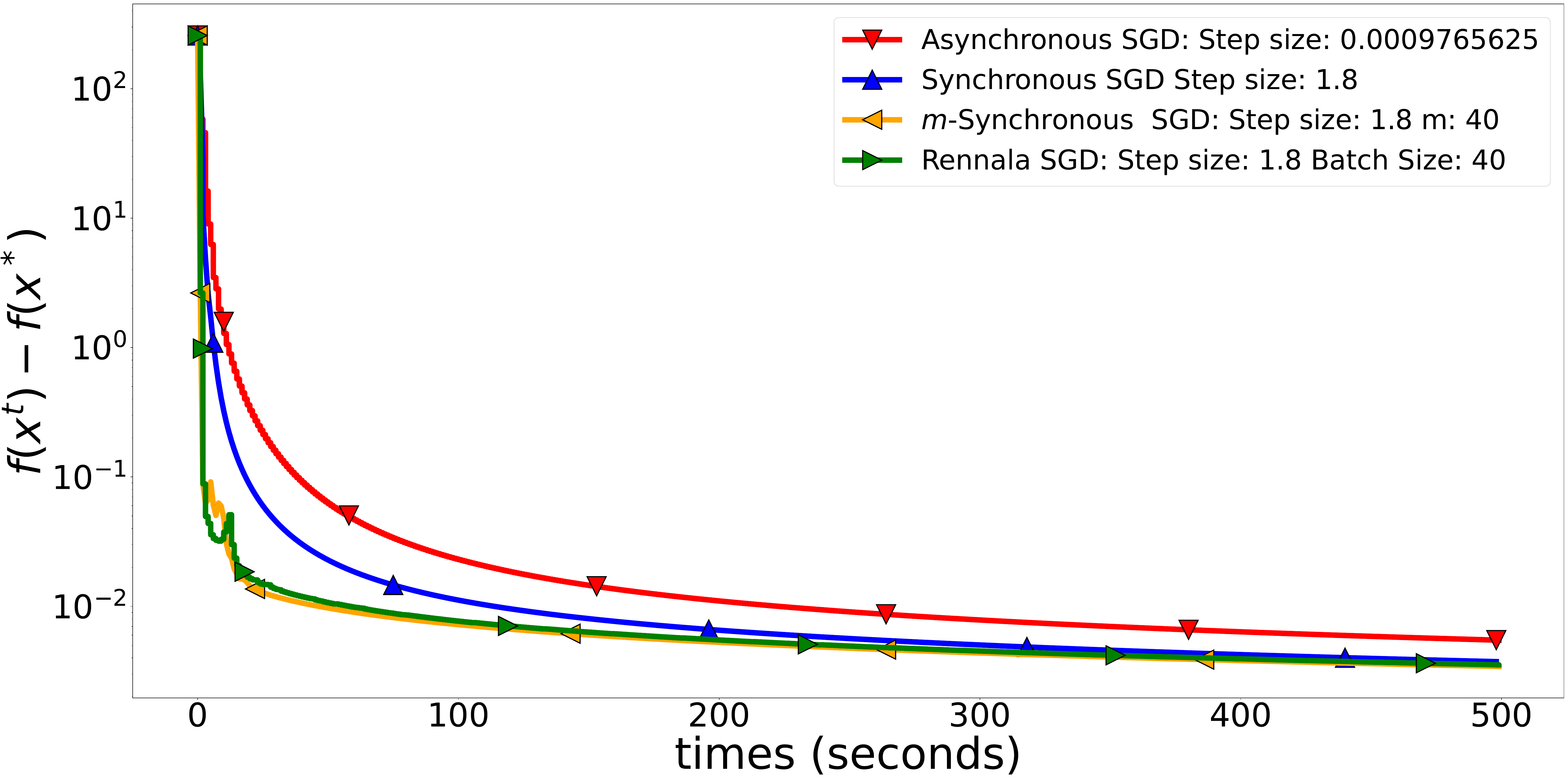}
        \caption{$\bar{\tau}_i = 1$}
    \end{subfigure}
    \hfill
    \begin{subfigure}[b]{0.3\textwidth}
        \centering
        \includegraphics[width=\linewidth]{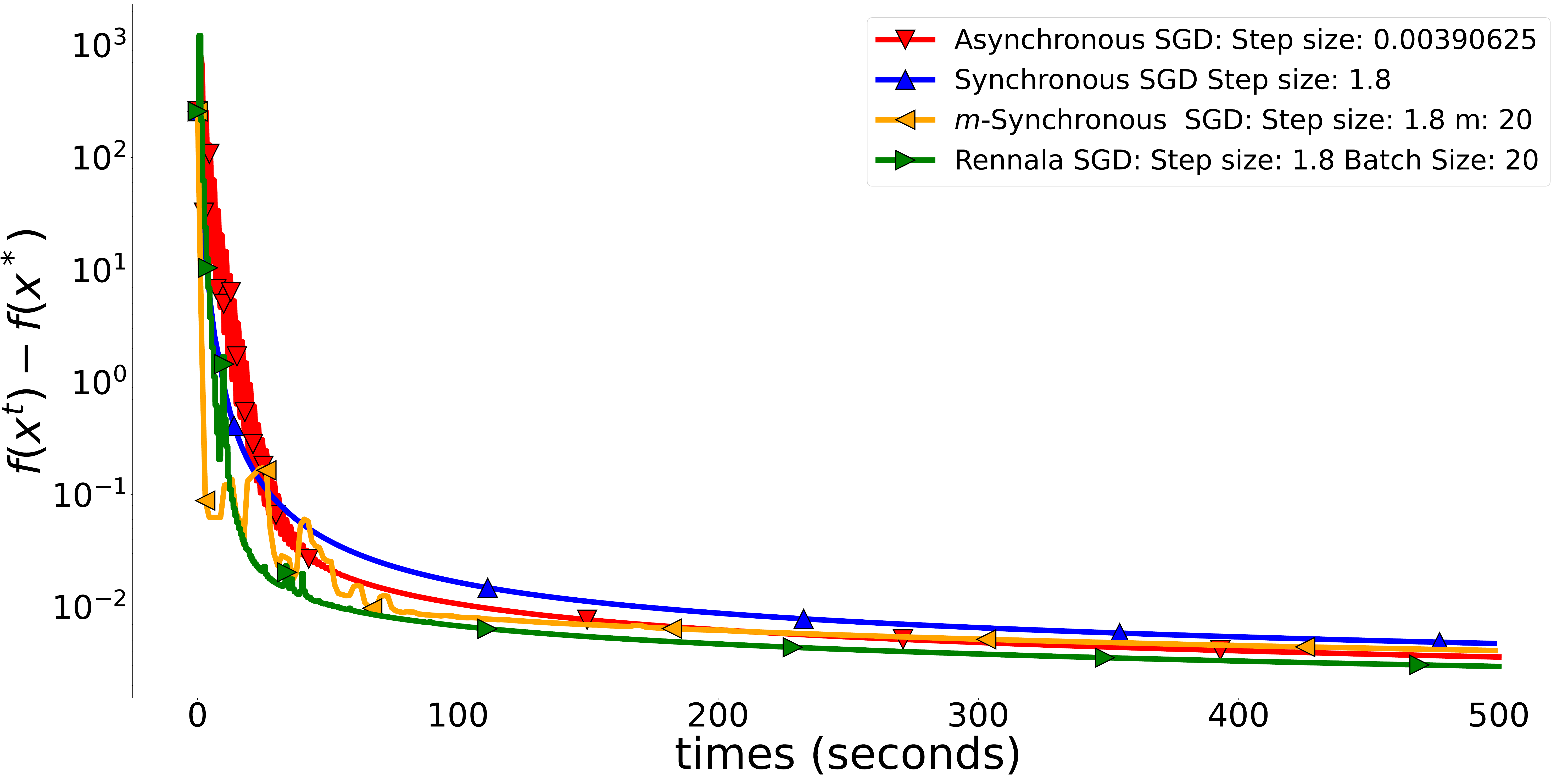}
        \caption{$\bar{\tau}_i \sim 1 + \textnormal{Unif}([-0.5, 0.5])$}
    \end{subfigure}
    \hfill
    \begin{subfigure}[b]{0.3\textwidth}
        \centering
        \includegraphics[width=\linewidth]{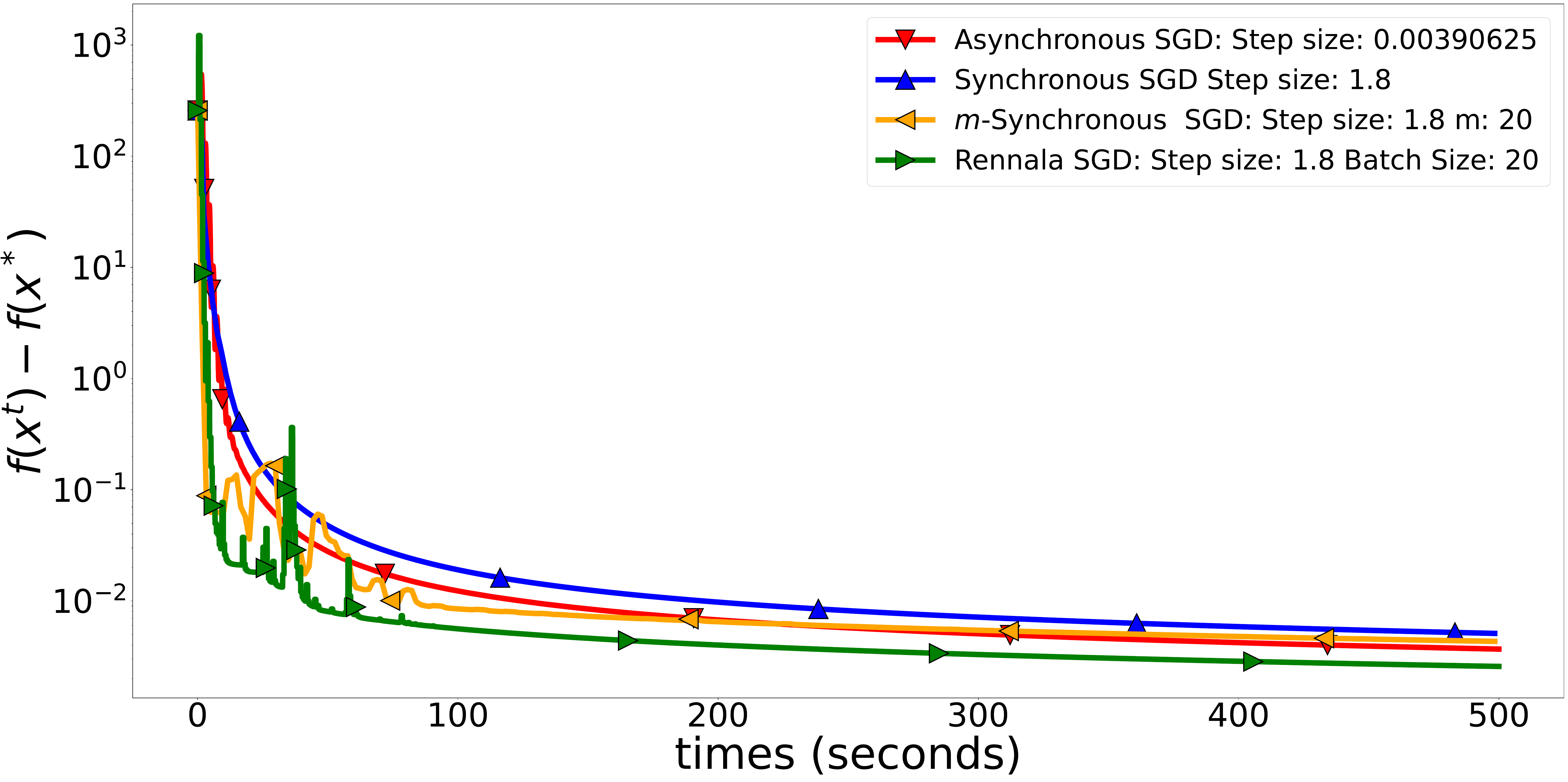}
        \caption{$\bar{\tau}_i \sim 1 + \textnormal{Unif}([-0.7, 0.7])$}
    \end{subfigure}
    \caption{Comparison of the methods on the quadratic optimization problem with the same mean $\tau_i = 1$ and uniform noise ($p=0.01, \tau_i = 1, n = 1000$). In these experiments, for all methods, we choose the best step size from the set $\left\{2^{-16}, \dots, 2^{-2}, 2^{-1}, 1, 1.2, 1.6, 1.8, 2\right\}.$}
    \label{fig:quad_unif_noise}
\end{figure}

\clearpage
\newpage
\subsection{Small-Scale Machine Learning Task}
In this section, we train a two-layer neural network (NN), Linear$(\text{input\_dim},32)\rightarrow\text{ReLU}\rightarrow\text{Linear}(32,\text{num\_classes})$, with the logistic loss on the \emph{CIFAR-10} dataset \citep{krizhevsky2009learning} and \emph{MNIST} \citep{lecun2010mnist} to compare the performance of the methods on tasks where workers compute stochastic gradients via uniformly random data sampling with a sample size of $128.$

We compare two computational scenarios: \\ i) $\bar{\tau}_i \sim \textnormal{Unif}(1 - \sigma, 1 + \sigma),$ the computations times are random variables from the uniform distribution with equal means; \\ ii) $\bar{\tau}_i \sim \cN_{\geq 0}(\mu_i;\sigma^2),$ $\mu_i = \sqrt{i},$ the computations times are random variables from the truncated normal distribution.

The results for CIFAR10 are presented in Figure~\ref{fig:cifar_comparison_uniform}, we consider the first computational scenario $\bar{\tau}_i \sim \textnormal{Unif}(1 - \sigma, 1 + \sigma)$ and vary $\sigma$ to check the methods' robustness to these variations. When $\sigma = 0.5$ and $\sigma = 0.7,$ all methods are comparable, with Asynchronous SGD being faster by at most $\approx 2\times$ times, which is expected in practical tasks. One important observation is that even Synchronous SGD is as fast as the optimal Rennala SGD method. Notice that a $2\times$ factor can be ``caught up'' through more efficient implementations and communication (see Section~\ref{sec:why}, Section~\ref{sec:nanogpt_another}, and \citep{sergeev2018horovod}).

In Figure~\ref{fig:cifar_comparison}, we consider the second computational scenario $\bar{\tau}_i \sim \cN_{\geq 0}(\sqrt{i};\sigma^2)$ and also vary $\sigma.$ The conclusions are the same as in the previous computation scenario, with the only main difference that $m$-Synchronous SGD is more robust than Synchronous SGD when $\tau_i = \sqrt{i}$ are heterogeneous. In Figures~\ref{fig:mnist_comparison_uniform} and \ref{fig:mnist_comparison}, we repeat the same experiments on MNIST, and observe that the behavior and conclusions are consistent across the datasets.

\begin{figure}[h]
\centering

\begin{subfigure}[b]{0.4\textwidth}
    \includegraphics[width=\linewidth]{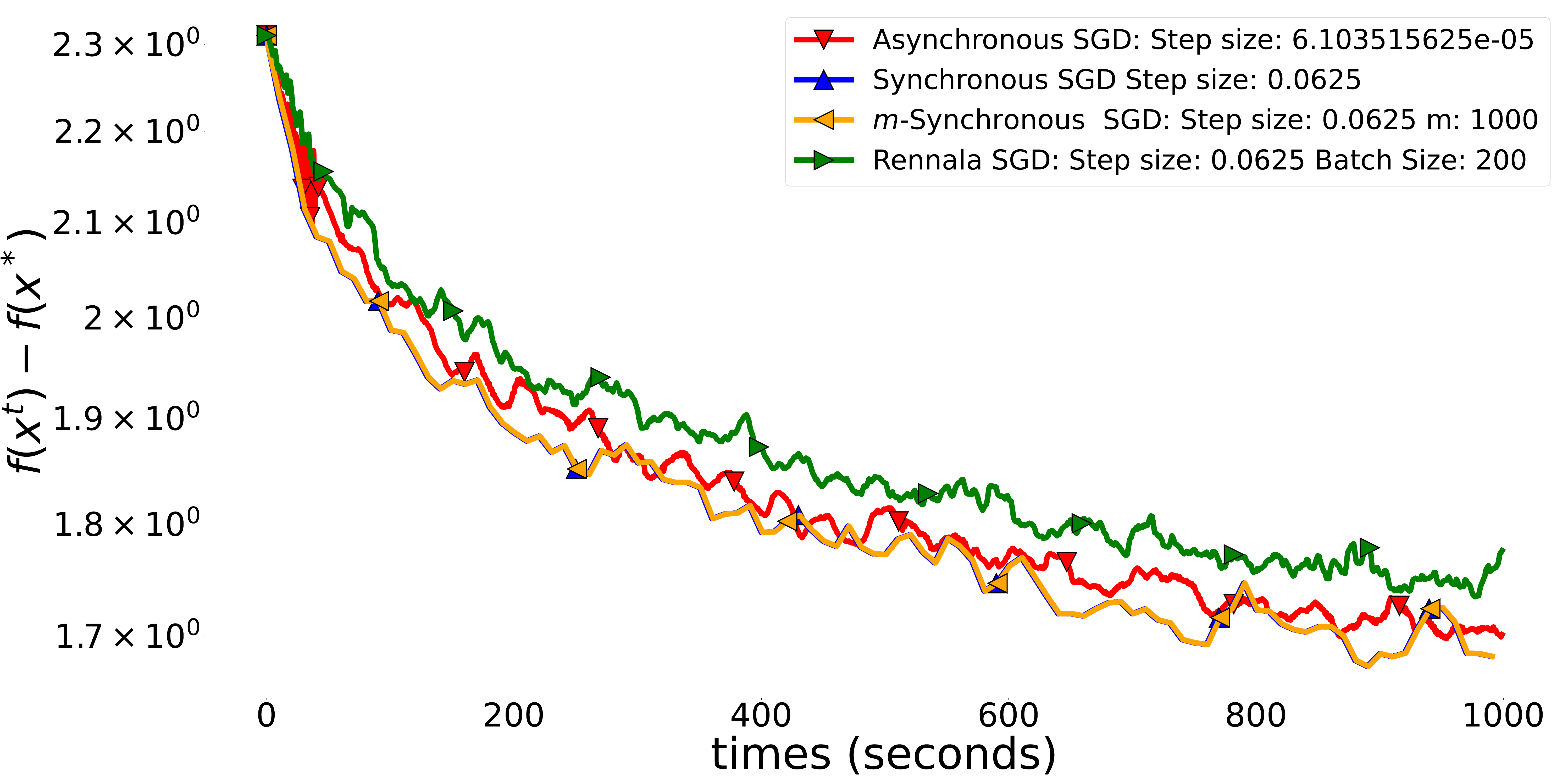}
    \caption*{$\bar{\tau}_i = 1$}
\end{subfigure}
\begin{subfigure}[b]{0.4\textwidth}
    \includegraphics[width=\linewidth]{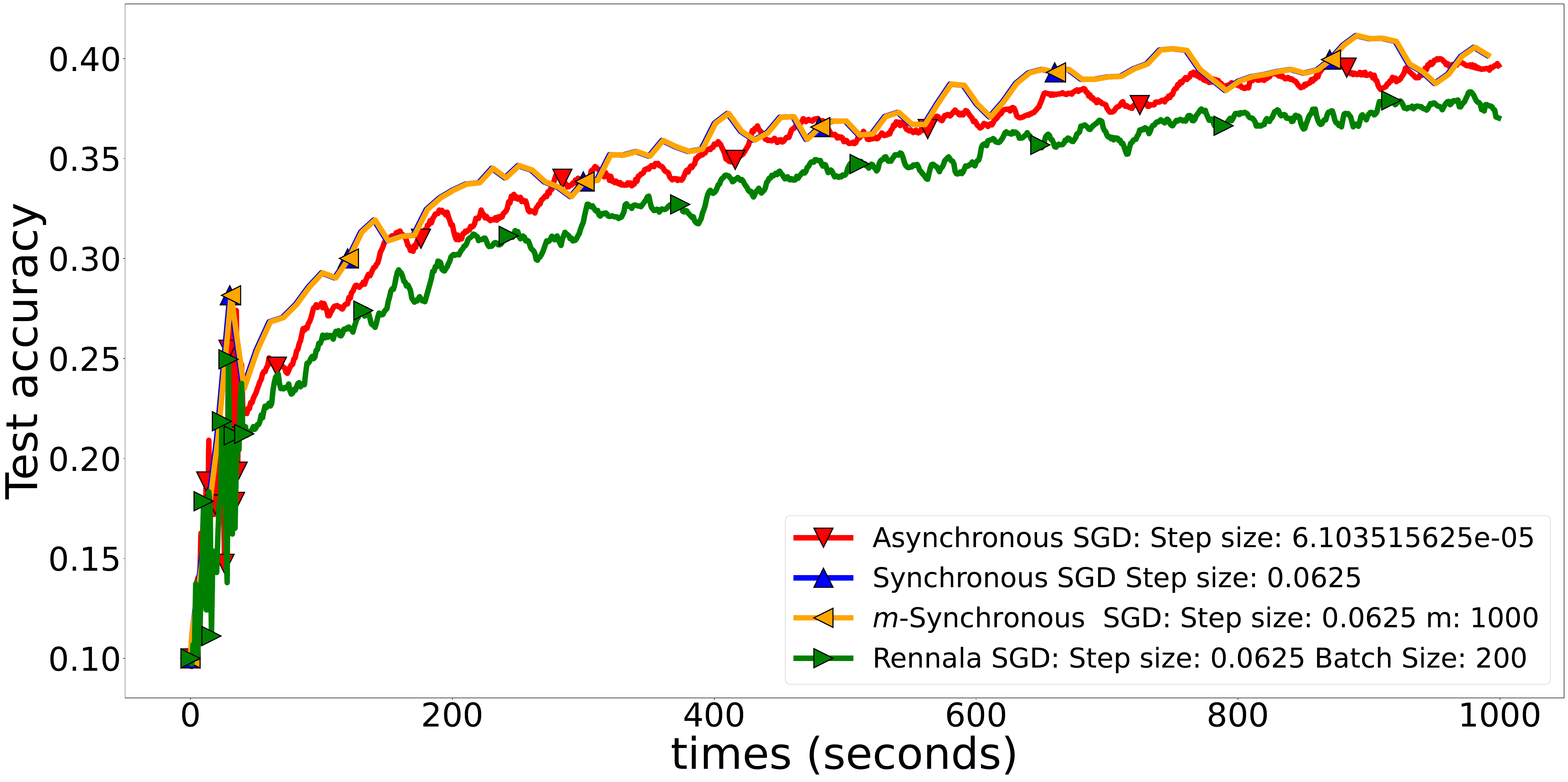}
    \caption*{$\bar{\tau}_i = 1$}
\end{subfigure}

\medskip

\begin{subfigure}[b]{0.4\textwidth}
    \includegraphics[width=\linewidth]{final_resultss/cifar_2_noise_00_delays_equal_num_nodes_1000_distr_uniform_sigma_05_time1000}
    \caption*{$\bar{\tau}_i \sim 1 + \textnormal{Unif}([-0.5, 0.5])$}
\end{subfigure}
\begin{subfigure}[b]{0.4\textwidth}
    \includegraphics[width=\linewidth]{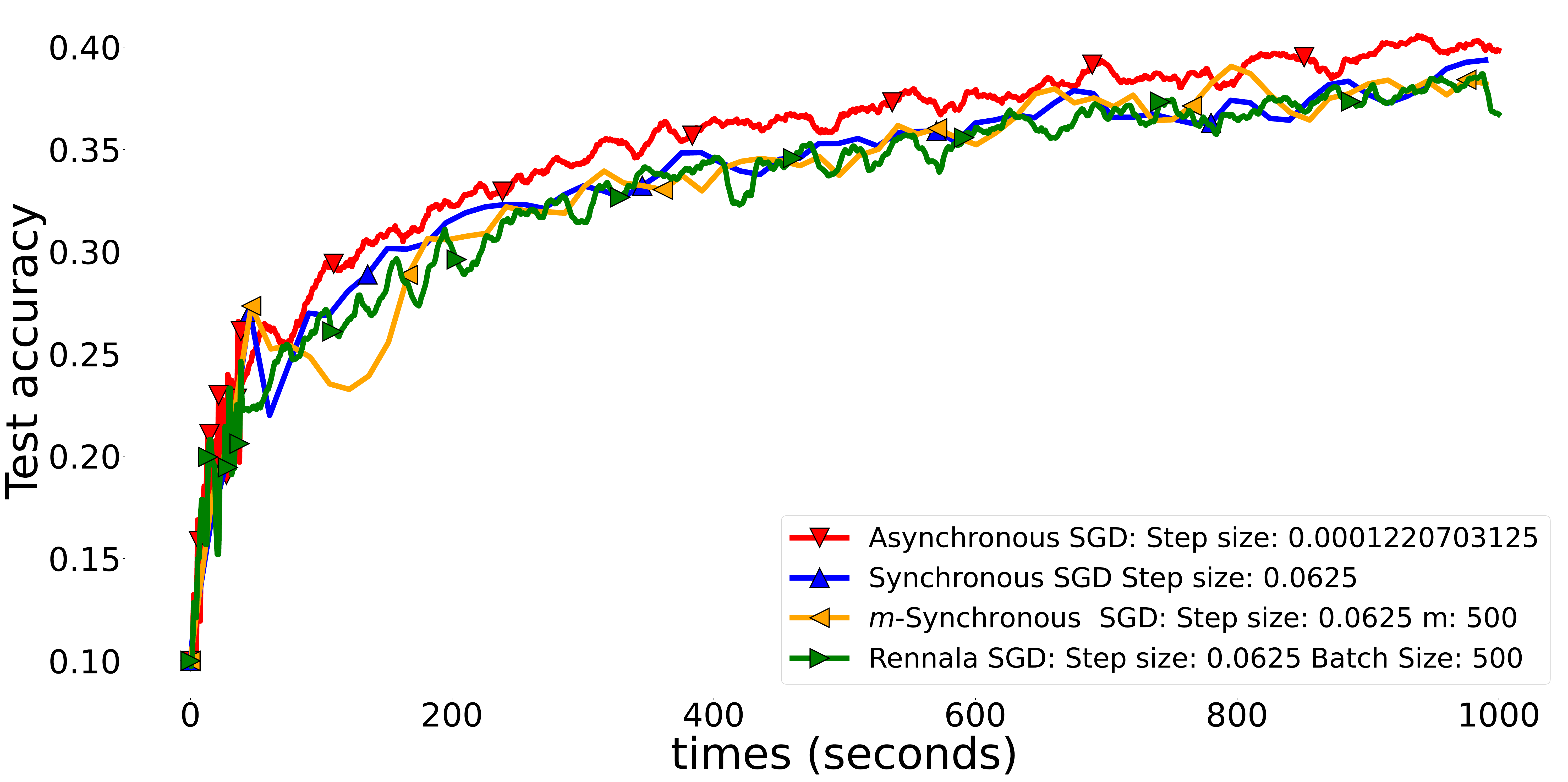}
    \caption*{$\bar{\tau}_i \sim 1 + \textnormal{Unif}([-0.5, 0.5])$}
\end{subfigure}

\medskip

\begin{subfigure}[b]{0.4\textwidth}
    \includegraphics[width=\linewidth]{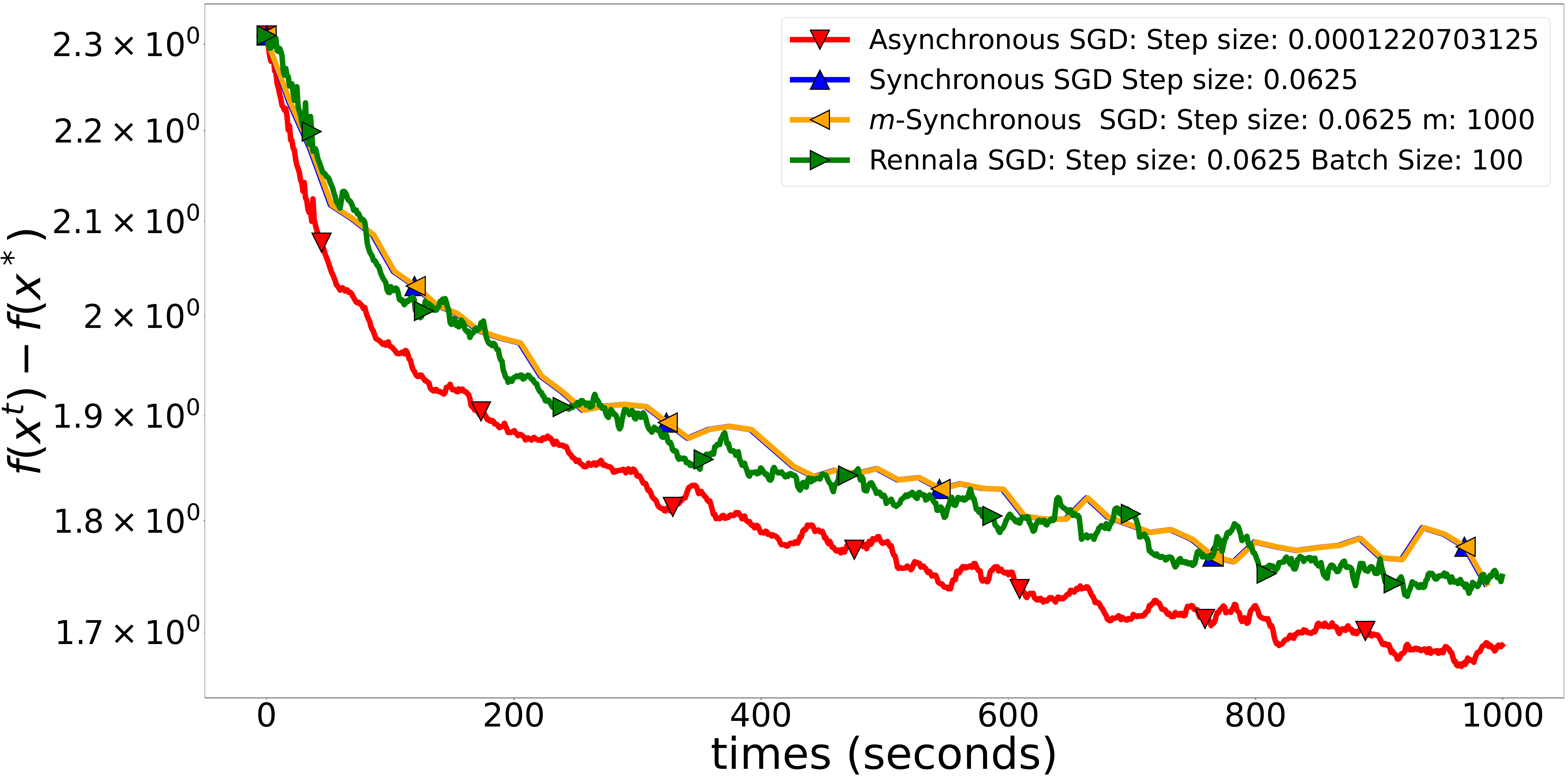}
    \caption*{$\bar{\tau}_i \sim 1 + \textnormal{Unif}([-0.7, 0.7])$}
\end{subfigure}
\begin{subfigure}[b]{0.4\textwidth}
    \includegraphics[width=\linewidth]{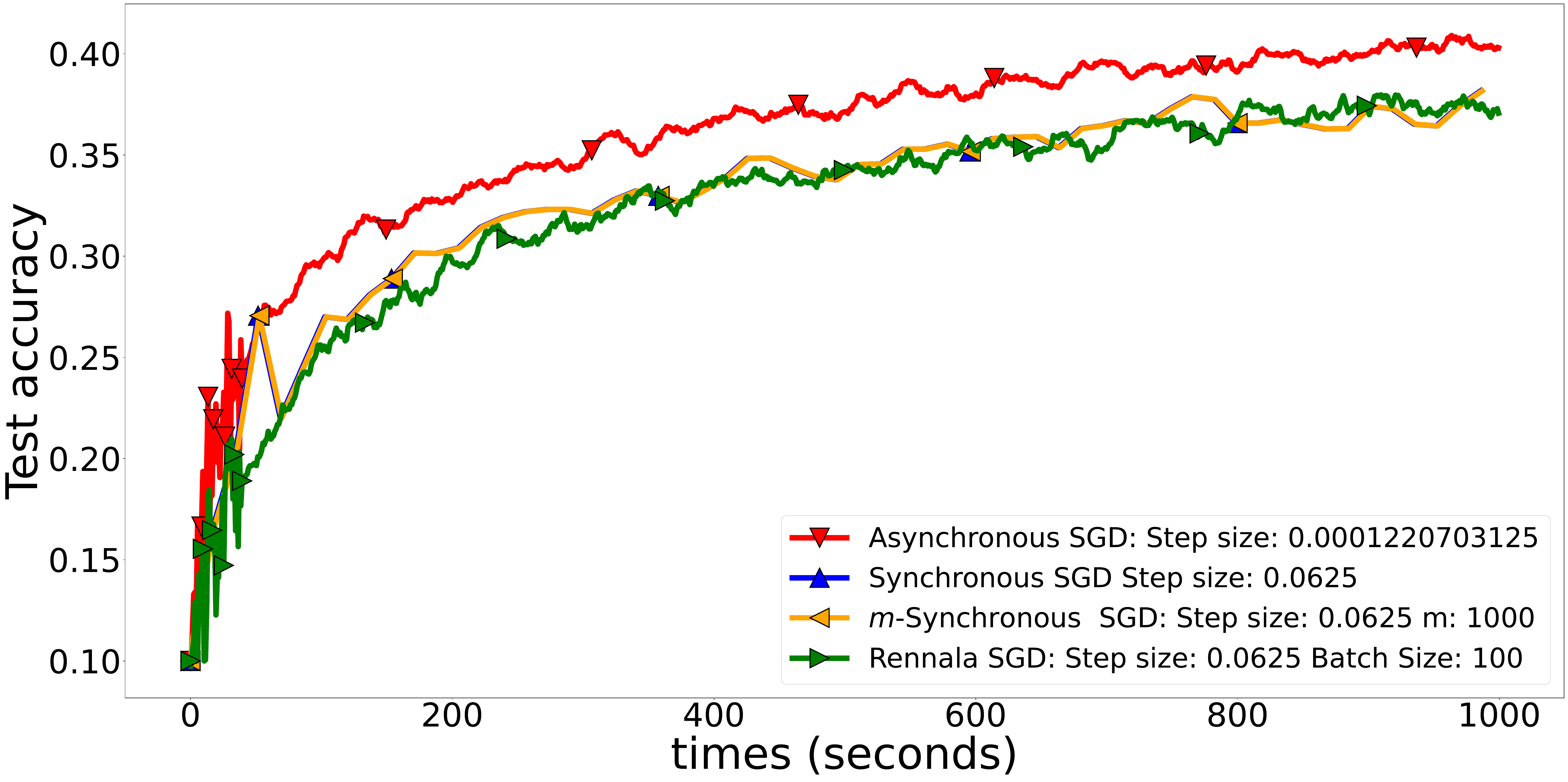}
    \caption*{$\bar{\tau}_i \sim 1 + \textnormal{Unif}([-0.7, 0.7])$}
\end{subfigure}

\caption{Experiments with the CIFAR-10 dataset ($n = 1000$). Each row corresponds to a fixed noise level. The first column corresponds to function values, while the second column corresponds to accuracies.}
\label{fig:cifar_comparison_uniform}
\end{figure}

\begin{figure}[h]
\centering

\begin{subfigure}[b]{0.4\textwidth}
    \includegraphics[width=\linewidth]{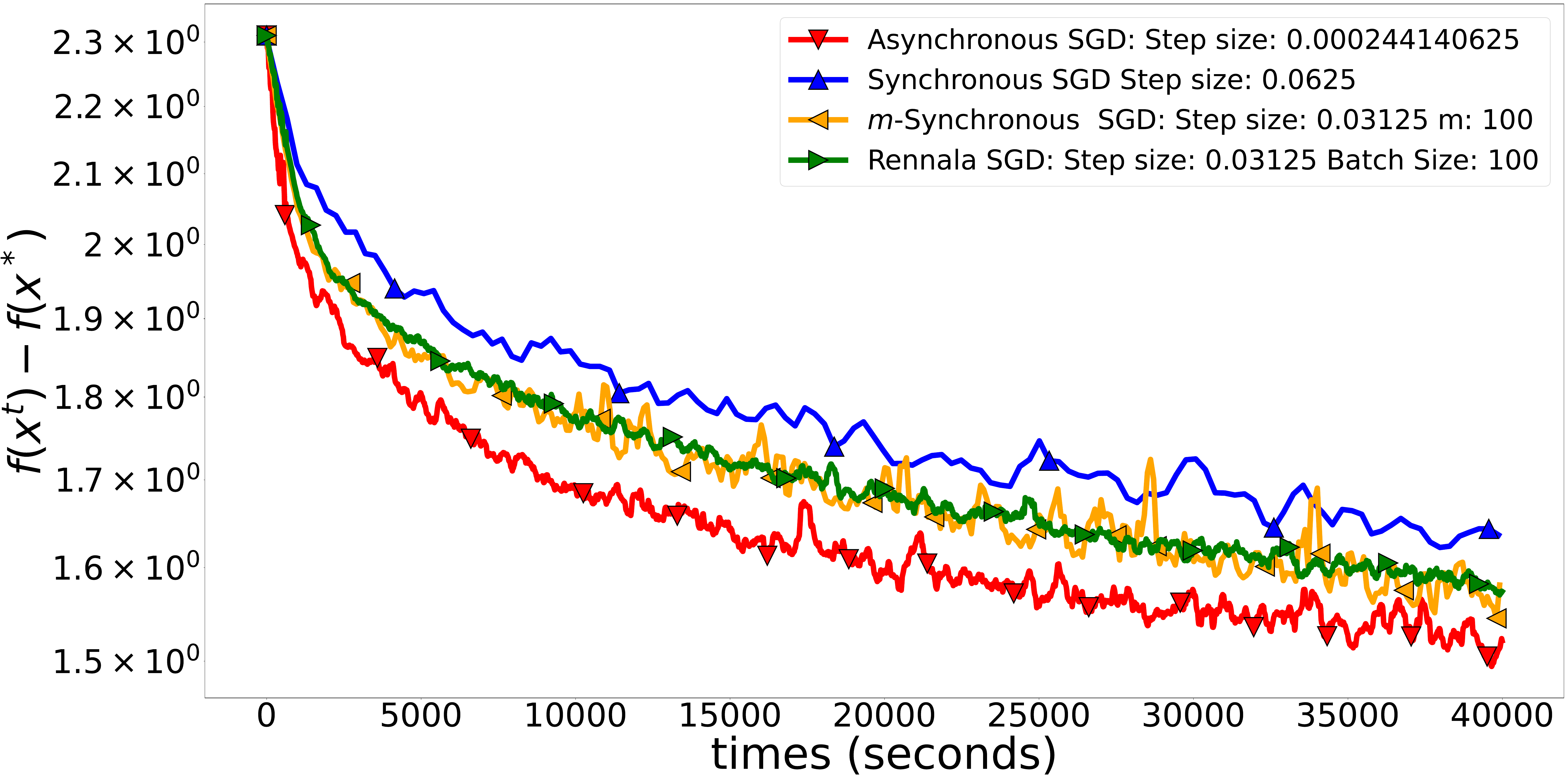}
    \caption*{$\bar{\tau}_i = \sqrt{i}$}
\end{subfigure}
\begin{subfigure}[b]{0.4\textwidth}
    \includegraphics[width=\linewidth]{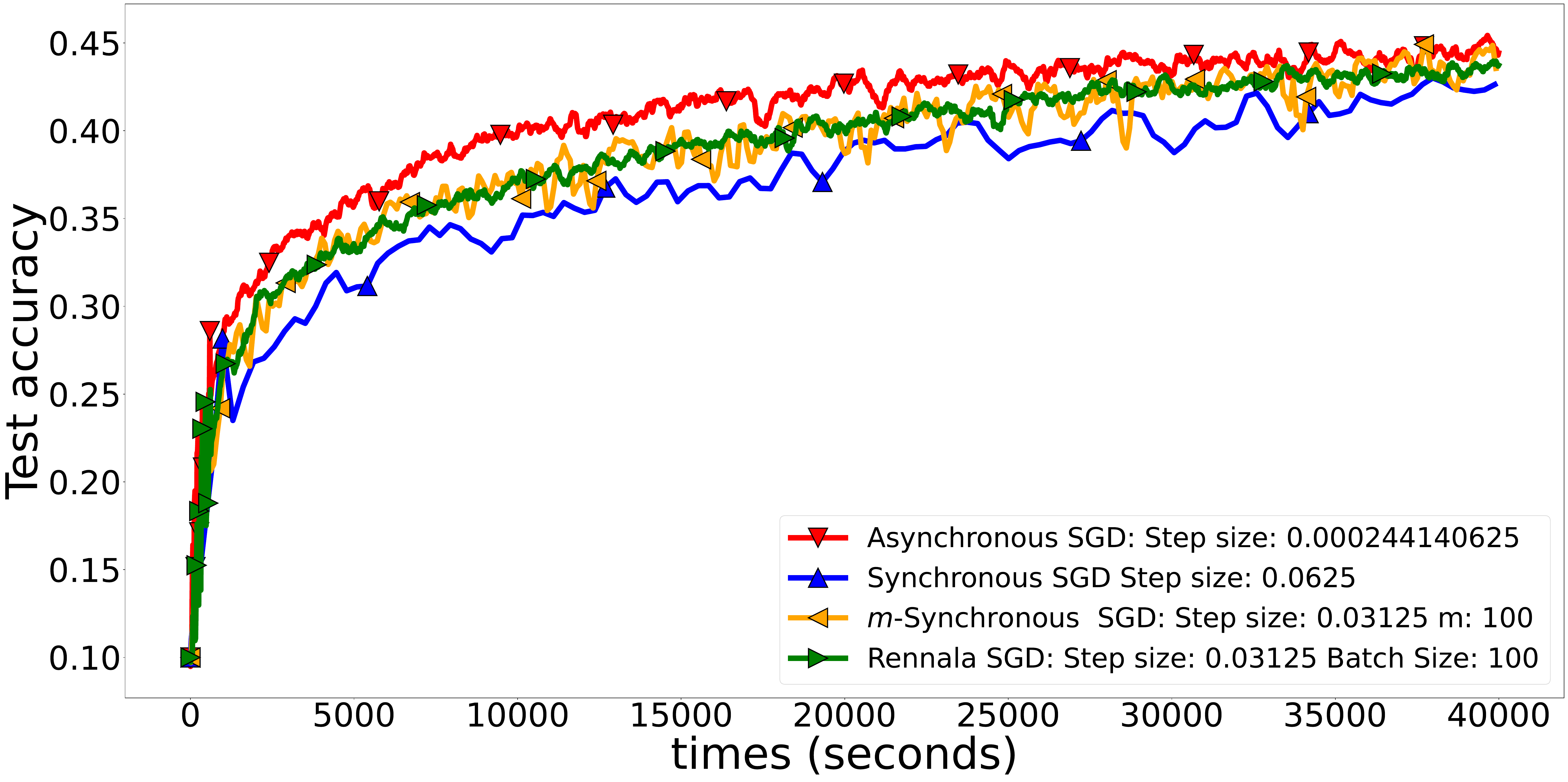}
    \caption*{$\bar{\tau}_i = \sqrt{i}$}
\end{subfigure}

\medskip

\begin{subfigure}[b]{0.4\textwidth}
    \includegraphics[width=\linewidth]{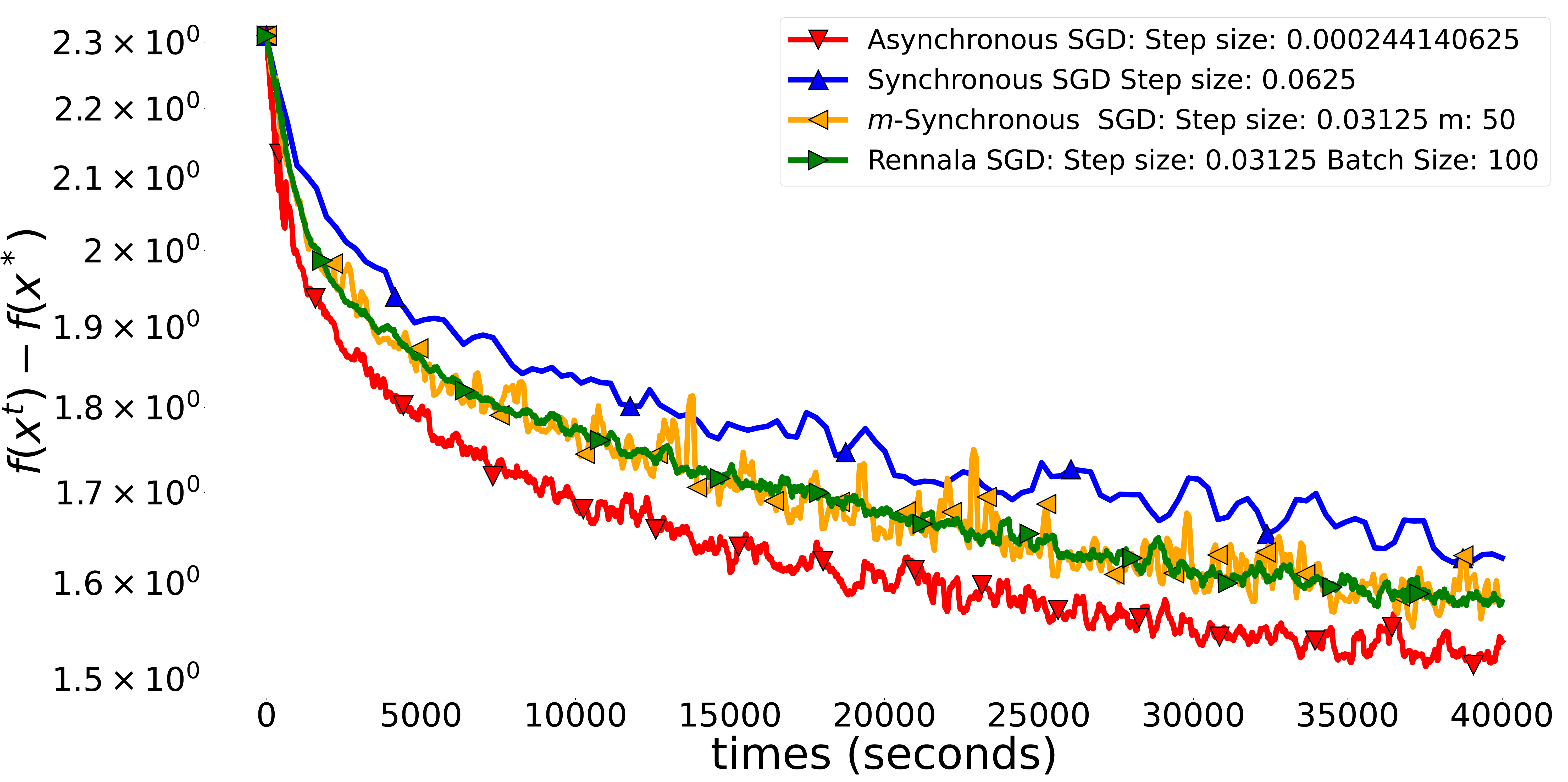}
    \caption*{$\bar{\tau}_i \sim \cN_{\geq 0}(\sqrt{i}, \sigma^2), \,\, \sigma = 0.1$}
\end{subfigure}
\begin{subfigure}[b]{0.4\textwidth}
    \includegraphics[width=\linewidth]{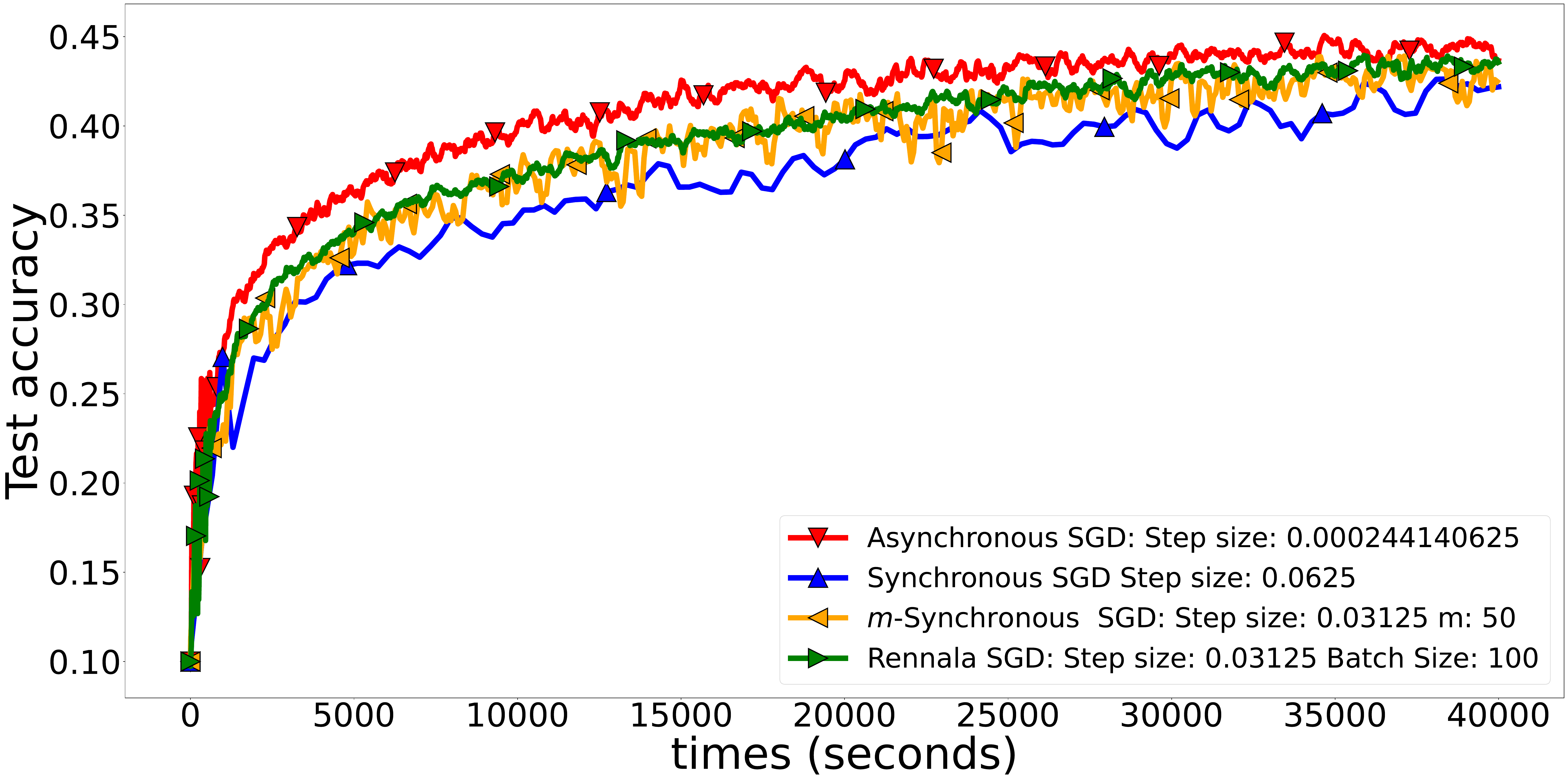}
    \caption*{$\bar{\tau}_i \sim \cN_{\geq 0}(\sqrt{i}, \sigma^2), \,\, \sigma = 0.1$}
\end{subfigure}

\medskip

\begin{subfigure}[b]{0.4\textwidth}
    \includegraphics[width=\linewidth]{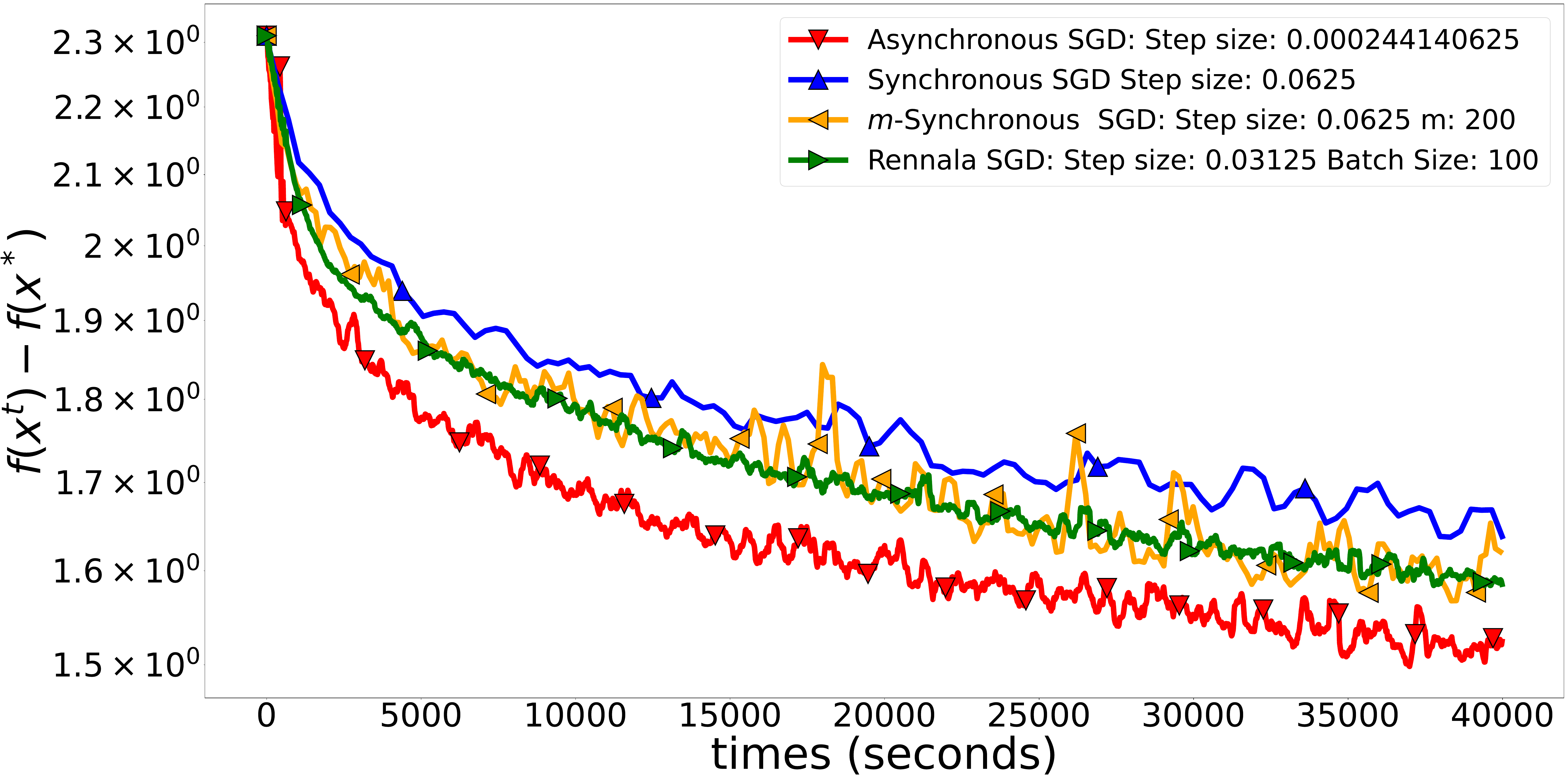}
    \caption*{$\bar{\tau}_i \sim \cN_{\geq 0}(\sqrt{i}, \sigma^2), \,\, \sigma = 1$}
\end{subfigure}
\begin{subfigure}[b]{0.4\textwidth}
    \includegraphics[width=\linewidth]{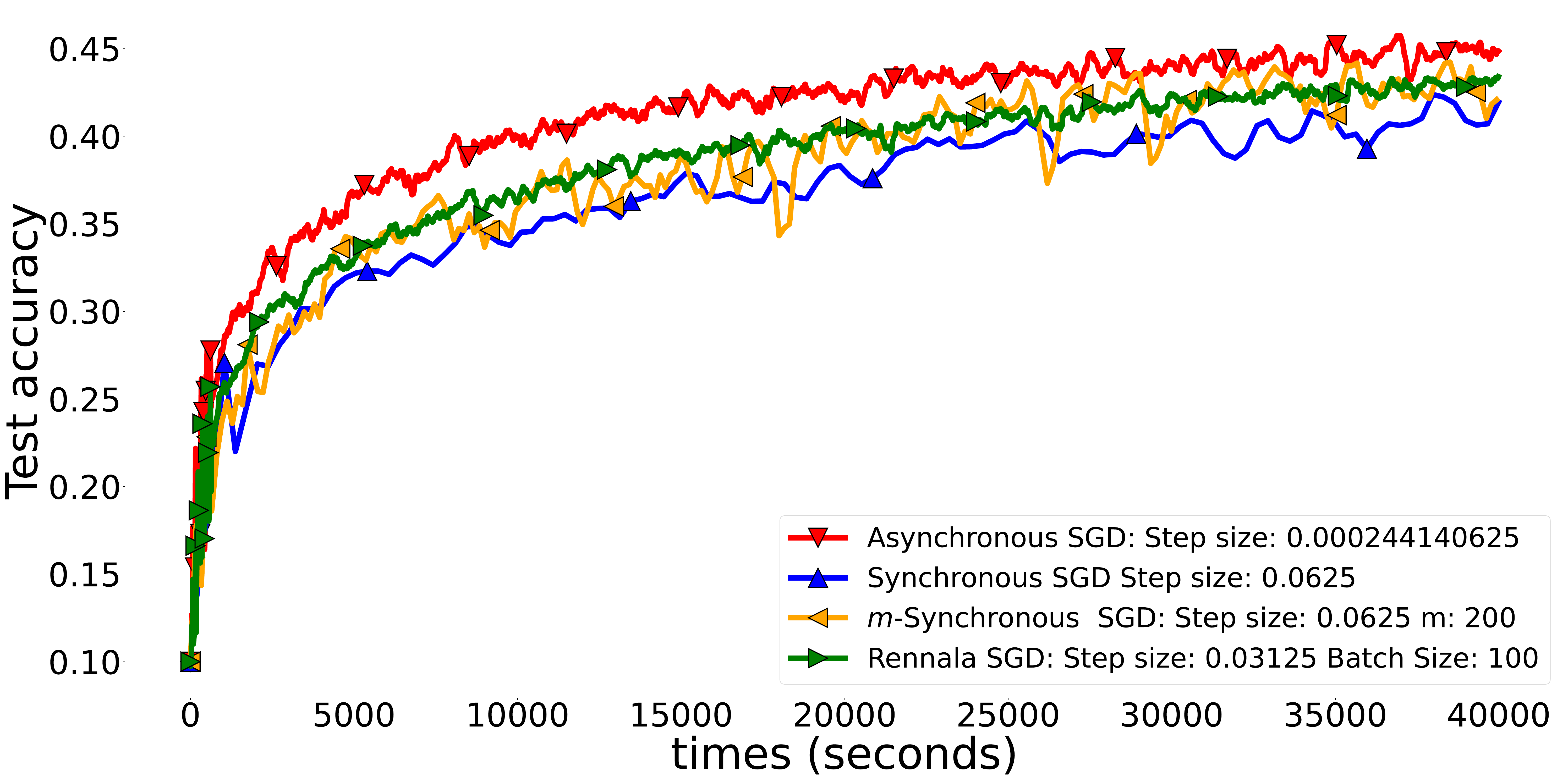}
    \caption*{$\bar{\tau}_i \sim \cN_{\geq 0}(\sqrt{i}, \sigma^2), \,\, \sigma = 1$}
\end{subfigure}

\caption{Experiments with the CIFAR-10 dataset ($n = 1000$). Each row corresponds to a fixed noise level. The first column corresponds to function values, while the second column corresponds to accuracies.}
\label{fig:cifar_comparison}
\end{figure}

\begin{figure}[h]
\centering

\begin{subfigure}[b]{0.4\textwidth}
    \includegraphics[width=\linewidth]{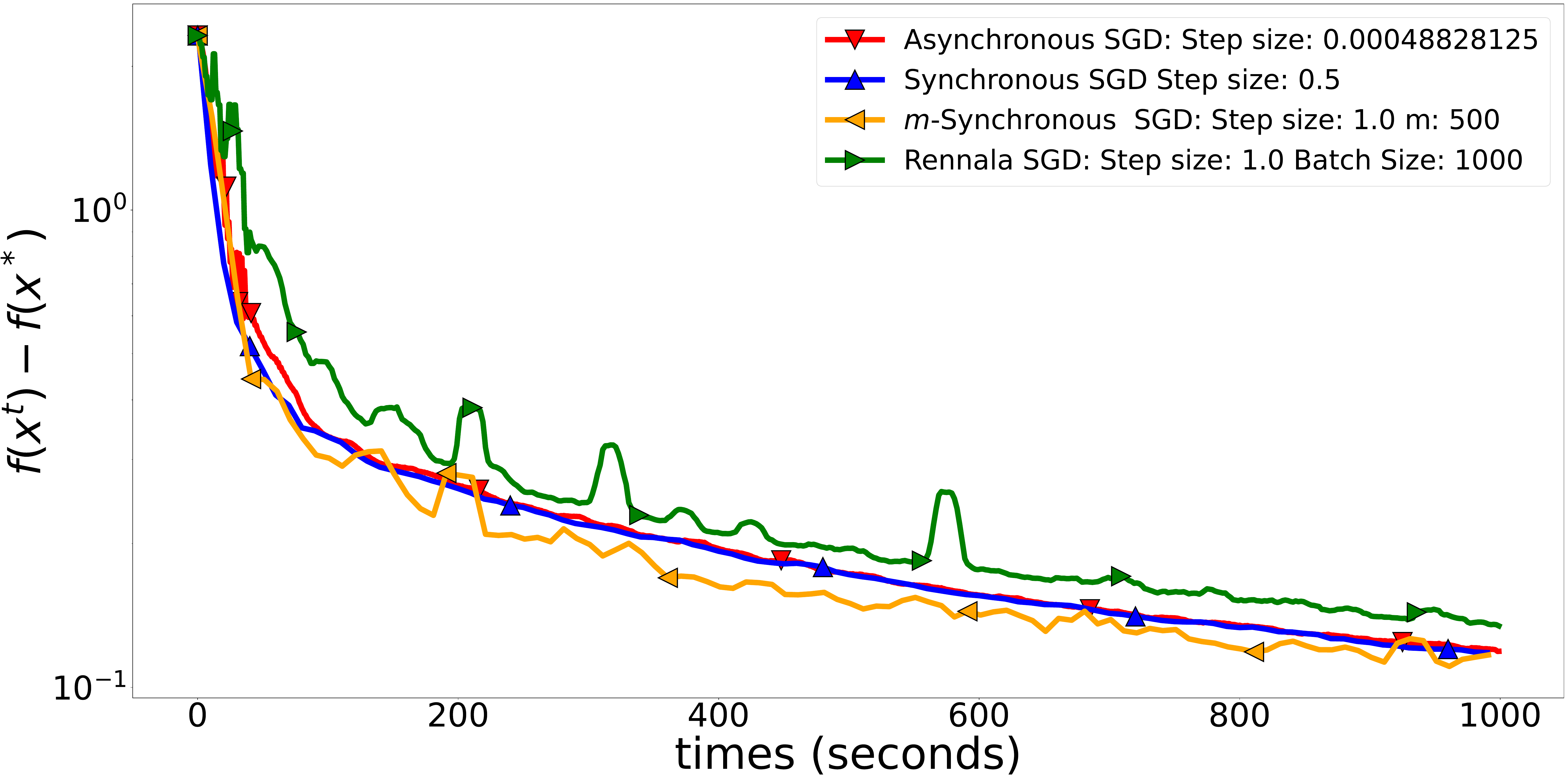}
    \caption*{$\bar{\tau}_i = 1$}
\end{subfigure}
\begin{subfigure}[b]{0.4\textwidth}
    \includegraphics[width=\linewidth]{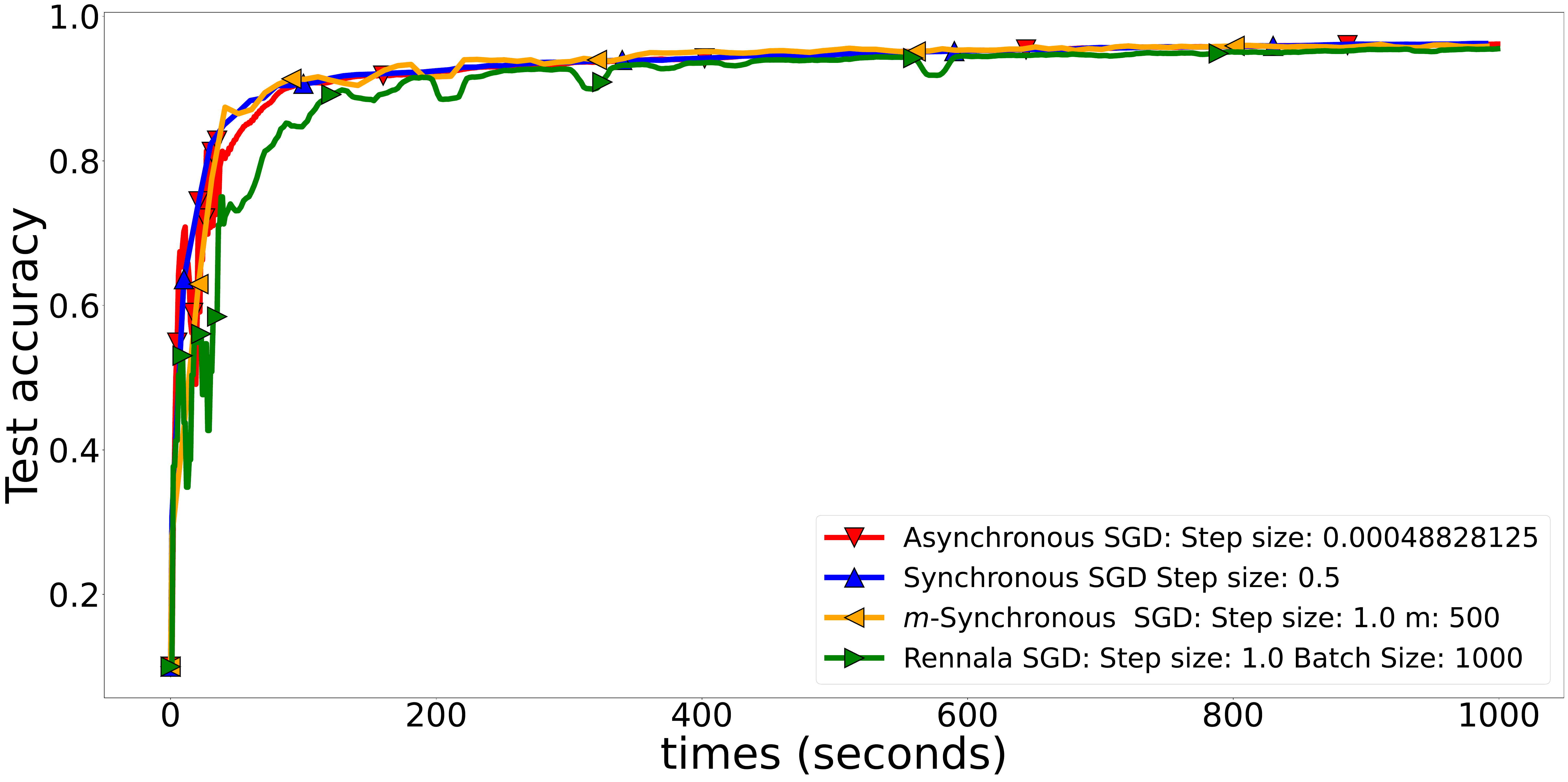}
    \caption*{$\bar{\tau}_i = 1$}
\end{subfigure}

\medskip

\begin{subfigure}[b]{0.4\textwidth}
    \includegraphics[width=\linewidth]{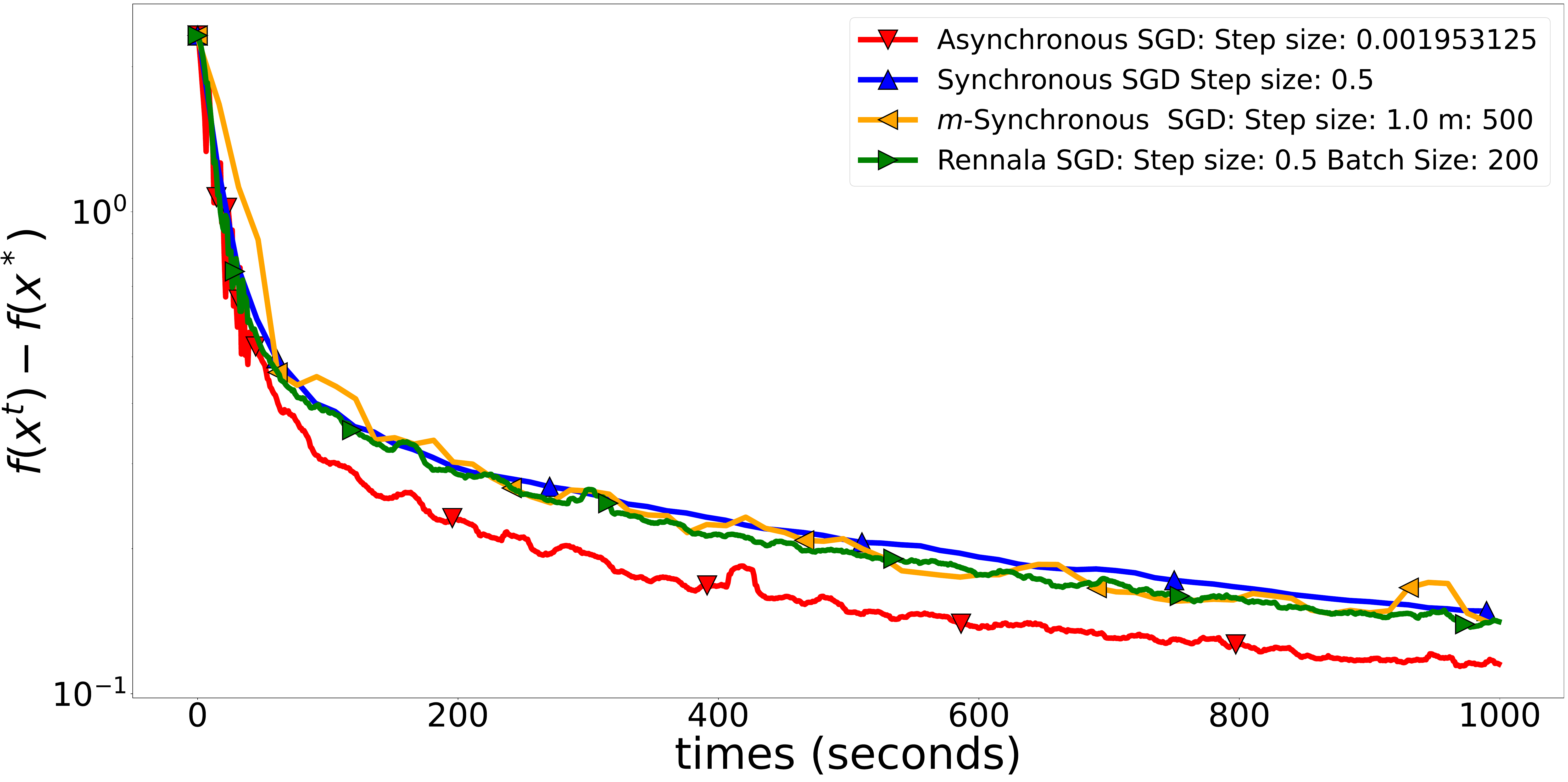}
    \caption*{$\bar{\tau}_i \sim 1 + \textnormal{Unif}([-0.5, 0.5])$}
\end{subfigure}
\begin{subfigure}[b]{0.4\textwidth}
    \includegraphics[width=\linewidth]{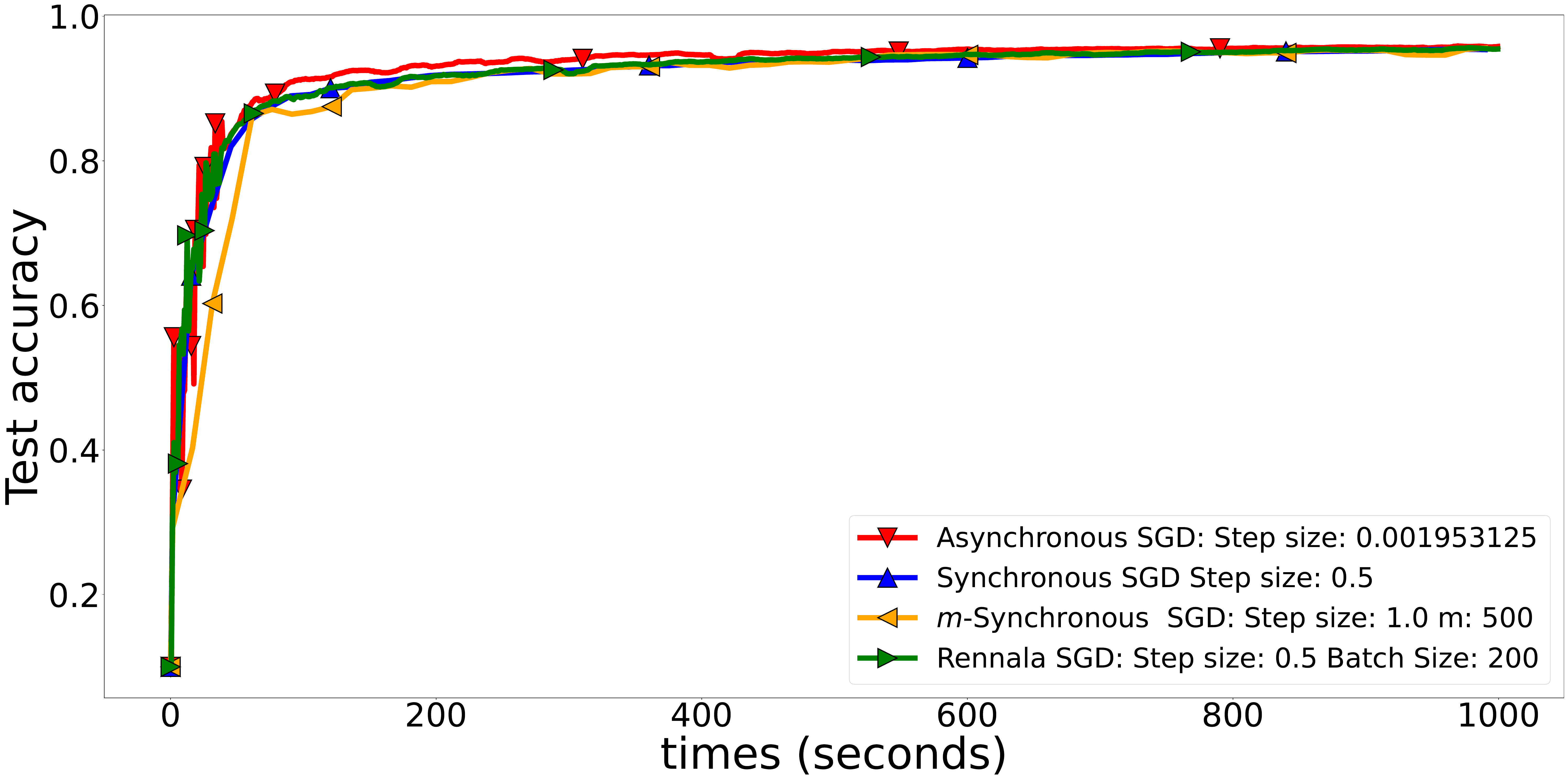}
    \caption*{$\bar{\tau}_i \sim 1 + \textnormal{Unif}([-0.5, 0.5])$}
\end{subfigure}

\medskip

\begin{subfigure}[b]{0.4\textwidth}
    \includegraphics[width=\linewidth]{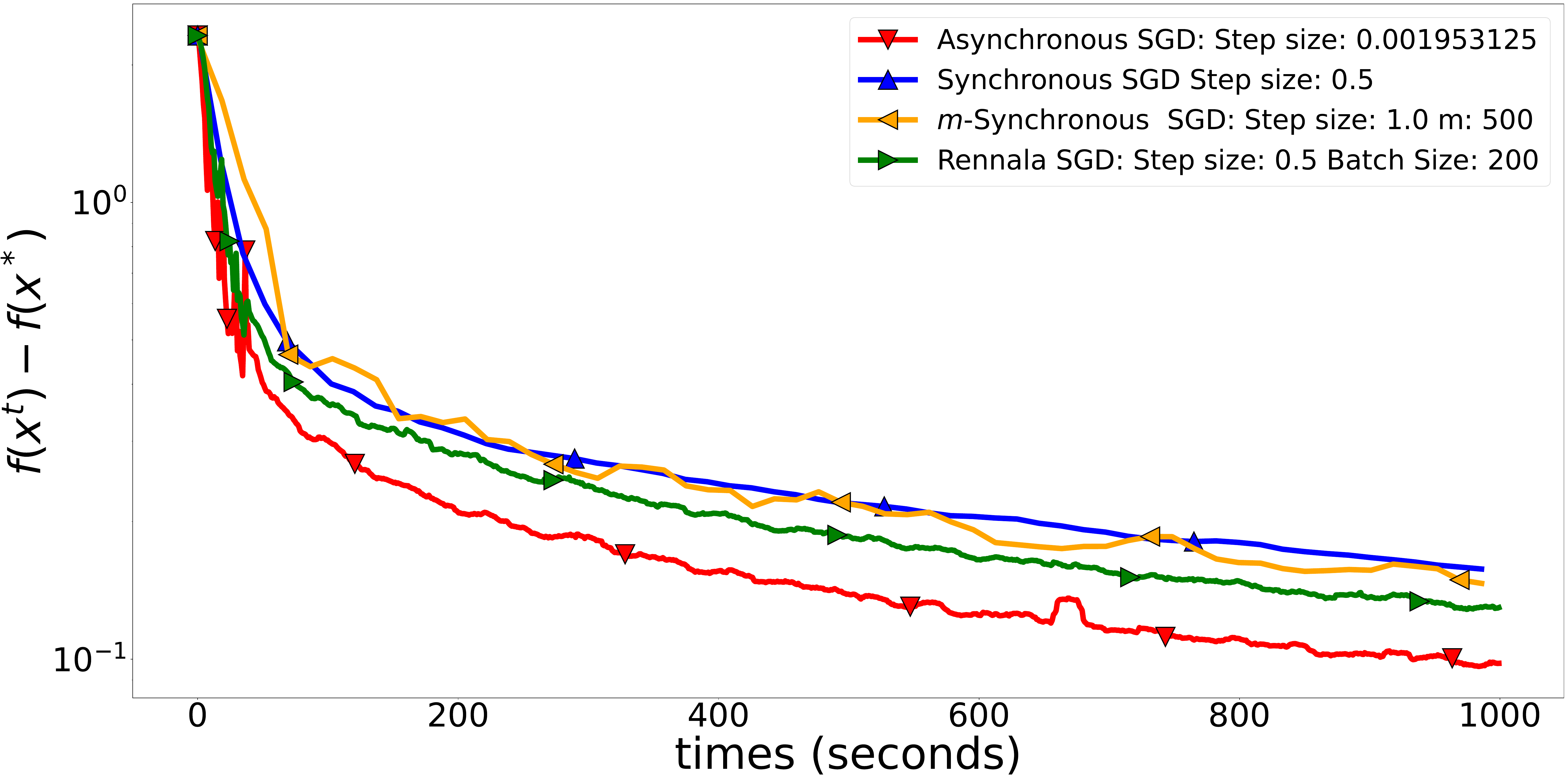}
    \caption*{$\bar{\tau}_i \sim 1 + \textnormal{Unif}([-0.7, 0.7])$}
\end{subfigure}
\begin{subfigure}[b]{0.4\textwidth}
    \includegraphics[width=\linewidth]{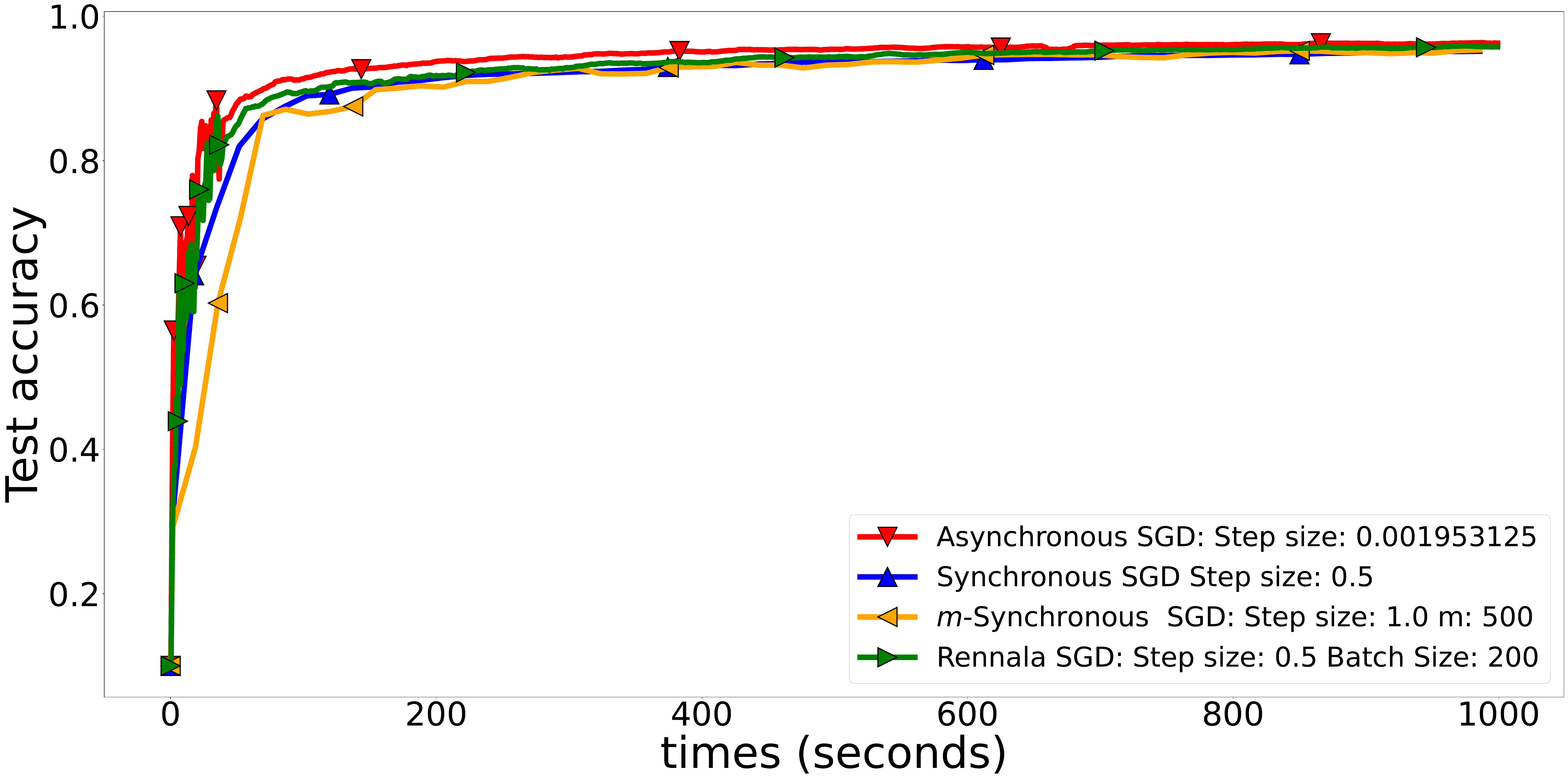}
    \caption*{$\bar{\tau}_i \sim 1 + \textnormal{Unif}([-0.7, 0.7])$}
\end{subfigure}

\caption{Experiments with the MNIST dataset ($n = 1000$). Each row corresponds to a fixed noise level.
The first column corresponds to function values, while the second column corresponds to accuracies.}
\label{fig:mnist_comparison_uniform}
\end{figure}

\begin{figure}[h]
\centering

\begin{subfigure}[b]{0.4\textwidth}
    \includegraphics[width=\linewidth]{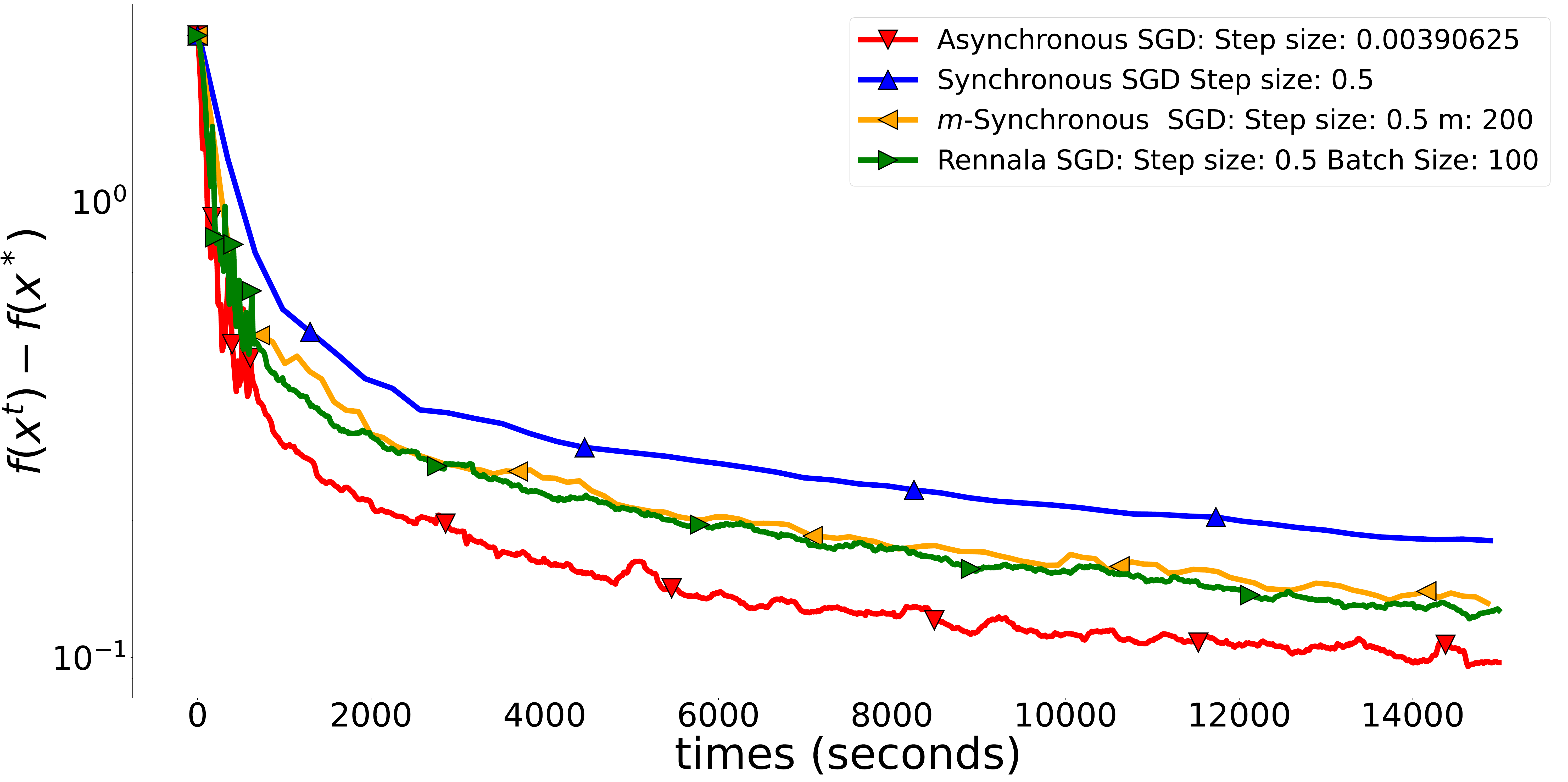}
    \caption*{$\bar{\tau}_i = \sqrt{i}$}
\end{subfigure}
\begin{subfigure}[b]{0.4\textwidth}
    \includegraphics[width=\linewidth]{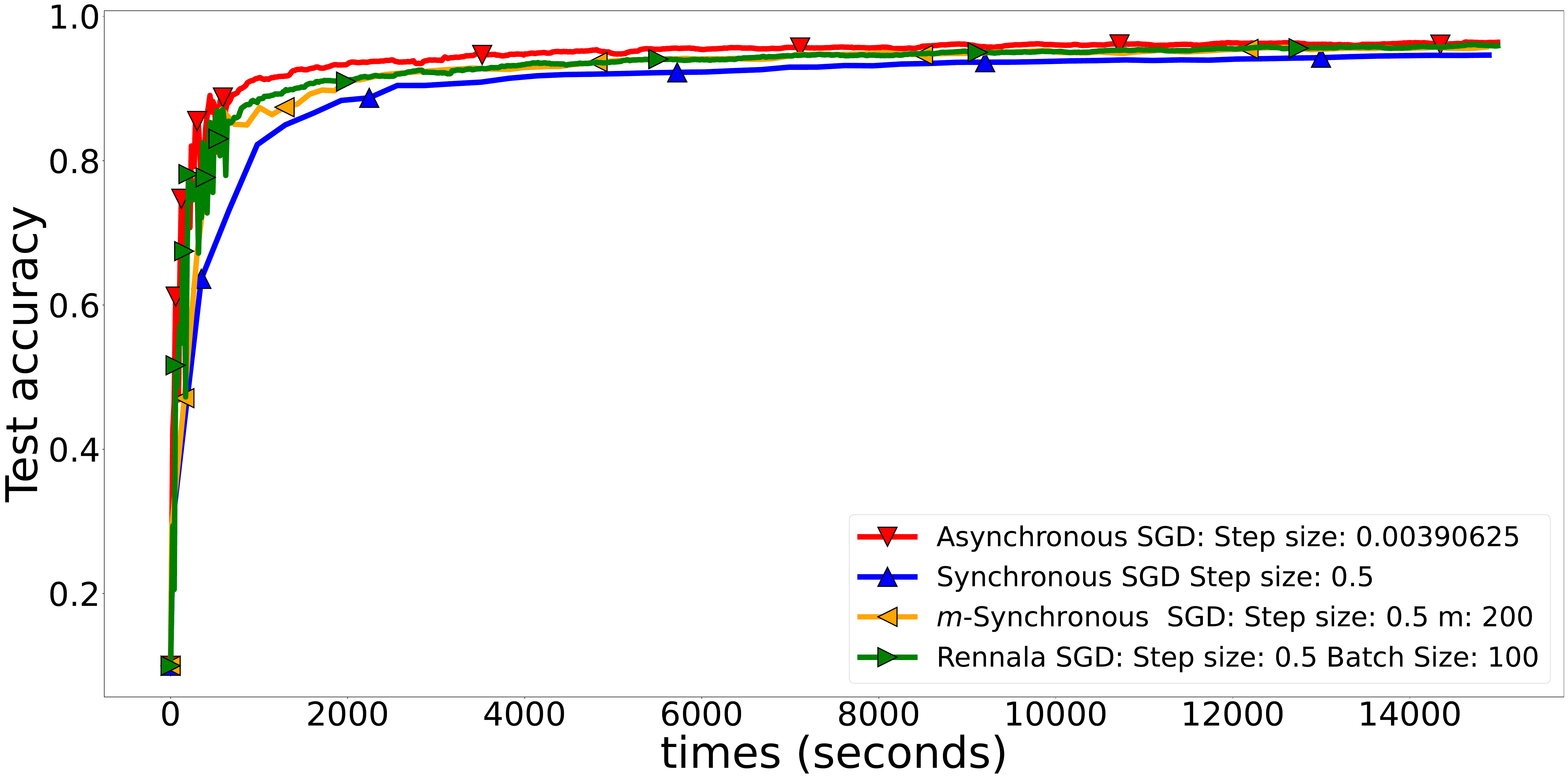}
    \caption*{$\bar{\tau}_i = \sqrt{i}$}
\end{subfigure}

\medskip

\begin{subfigure}[b]{0.4\textwidth}
    \includegraphics[width=\linewidth]{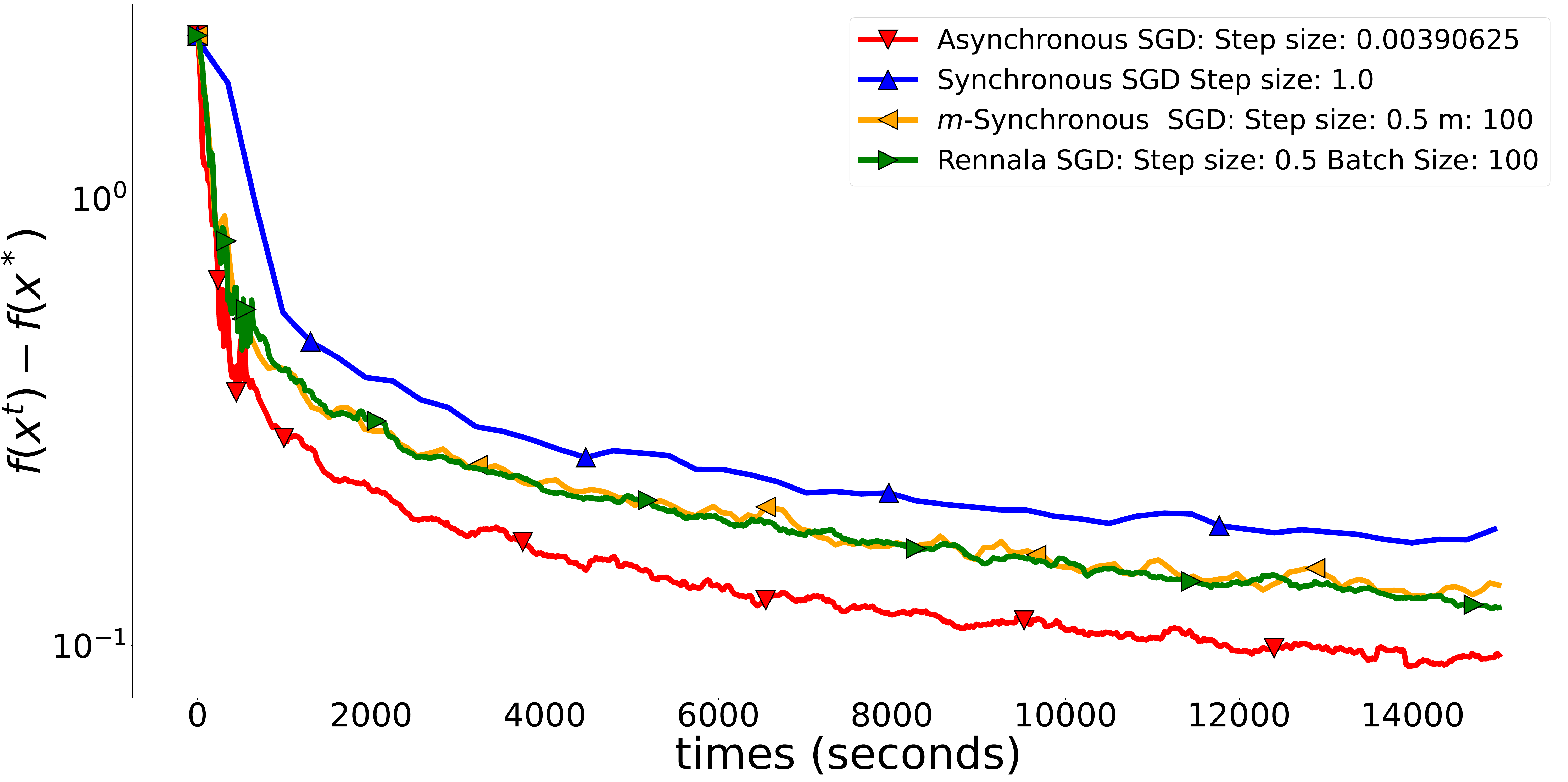}
    \caption*{$\bar{\tau}_i \sim \cN_{\geq 0}(\sqrt{i}, 0.1^2)$}
\end{subfigure}
\begin{subfigure}[b]{0.4\textwidth}
    \includegraphics[width=\linewidth]{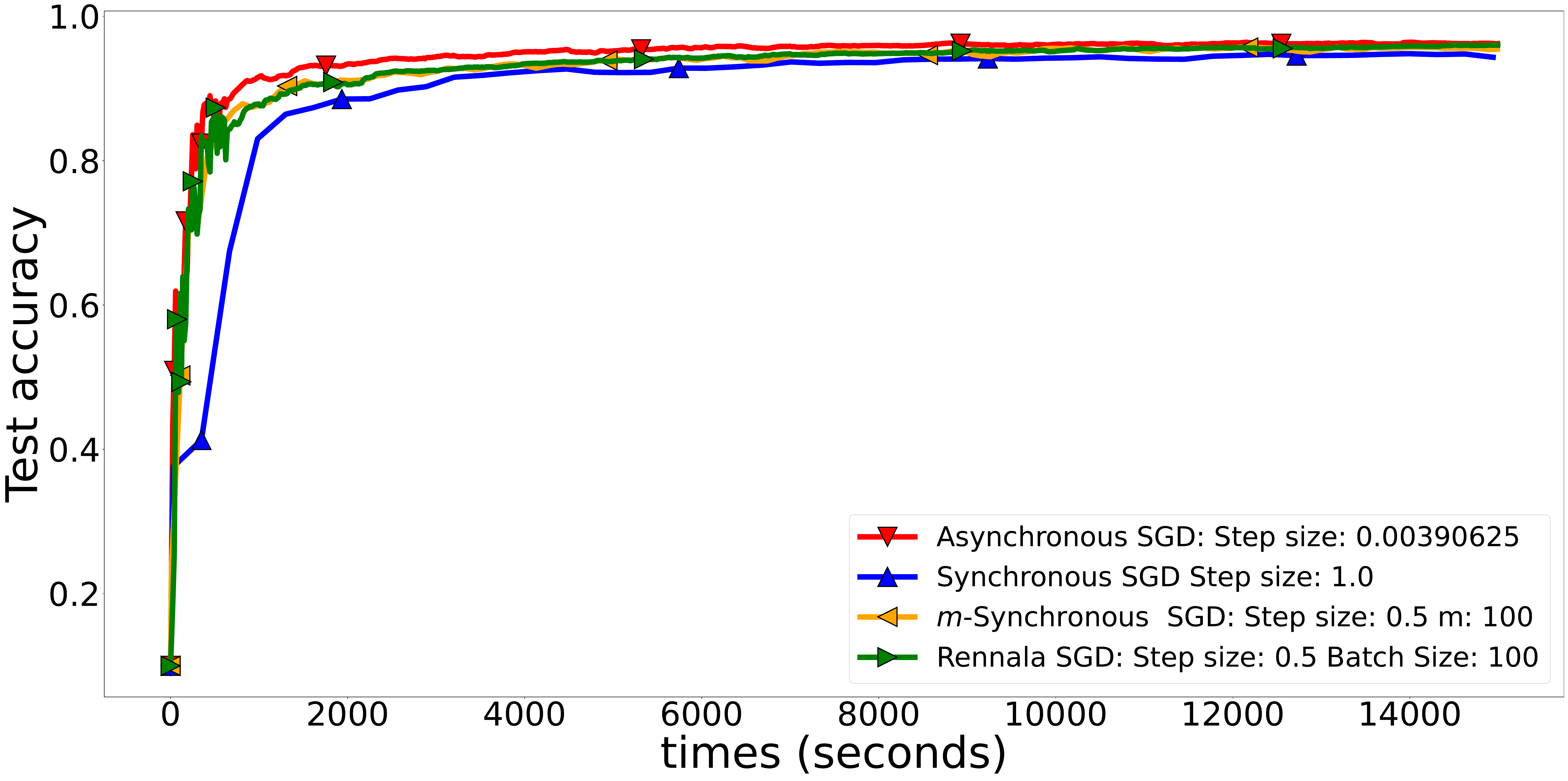}
    \caption*{$\bar{\tau}_i \sim \cN_{\geq 0}(\sqrt{i}, 0.1^2)$}
\end{subfigure}

\medskip

\begin{subfigure}[b]{0.4\textwidth}
    \includegraphics[width=\linewidth]{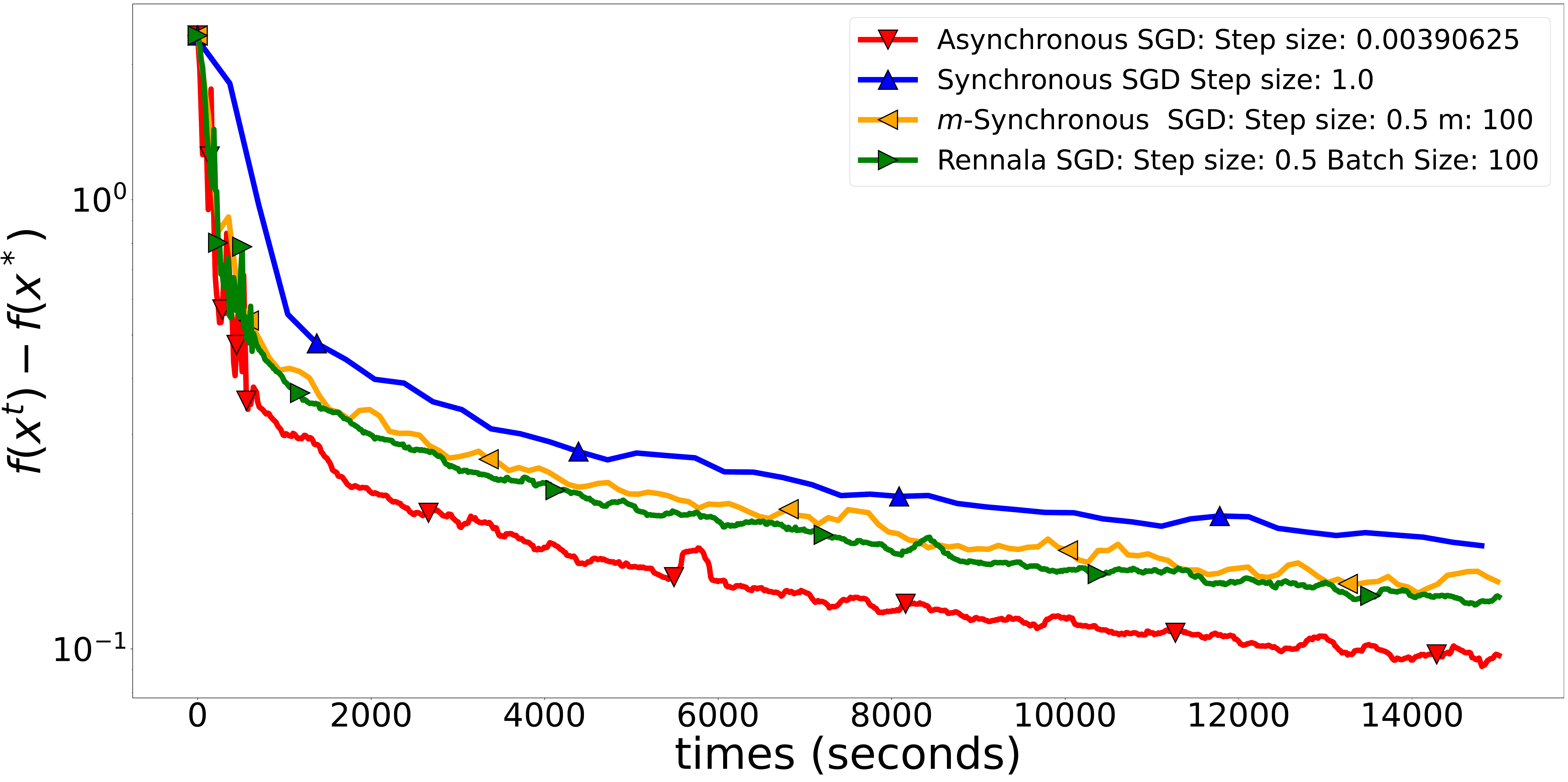}
    \caption*{$\bar{\tau}_i \sim \cN_{\geq 0}(\sqrt{i}, 1^2)$}
\end{subfigure}
\begin{subfigure}[b]{0.4\textwidth}
    \includegraphics[width=\linewidth]{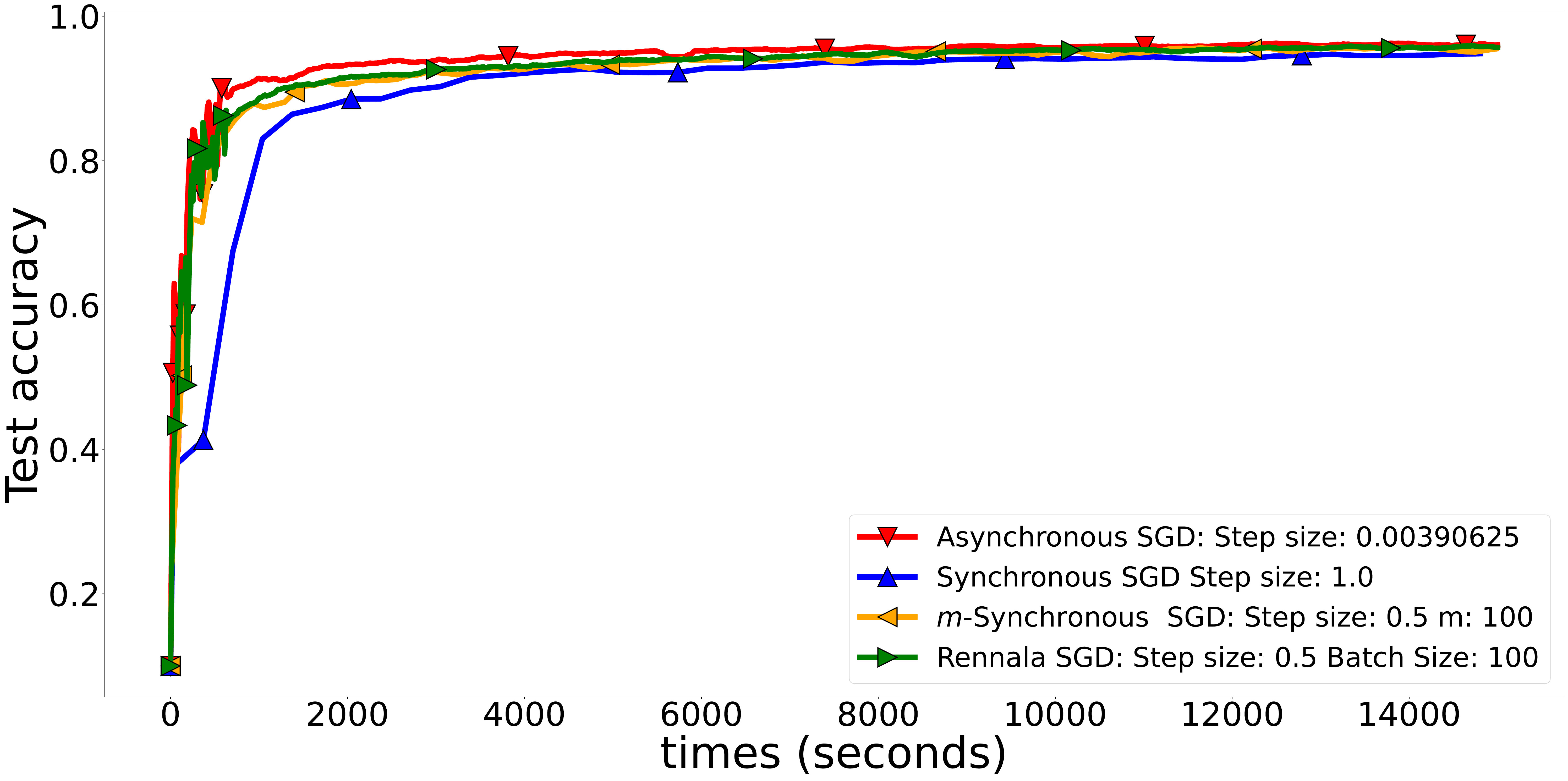}
    \caption*{$\bar{\tau}_i \sim \cN_{\geq 0}(\sqrt{i}, 1^2)$}
\end{subfigure}

\caption{Experiments with the MNIST dataset ($n = 1000$). Each row corresponds to a fixed noise level.
The first column corresponds to function values, while the second column corresponds to accuracies.}
\label{fig:mnist_comparison}
\end{figure}

\clearpage
\newpage
\subsection{NanoGPT experiments}
\label{sec:nanogpt_another}
In this section, we compare Synchronous SGD (Algorithm~\ref{alg:alg_orig}) and Asynchronous SGD (Algorithm~\ref{alg:asgd}) in the NanoGPT library by \citet{nanogpt}. We implement custom versions of the algorithms using \texttt{torch.distributed} from \citep{paszke2019pytorch}. We use the same parameters as in the \texttt{train\_shakespeare\_char.py} configuration of the library. We run the algorithms with $4$ workers operating in parallel. In Figure~\ref{fig:nanogpt_sync_vs_async}, we present the training and validation losses versus the real measured wall-clock time, and observe that the convergence rates of the algorithms are similar. For a fair comparison, we implemented custom code without the support of code-optimized libraries such as \texttt{Horovod} \citep{sergeev2018horovod} or \texttt{torch.nn.parallel.DistributedDataParallel}, which can potentially improve the performance of Synchronous SGD.
\begin{figure}[h]
    \centering
    \includegraphics[width=0.5\linewidth]{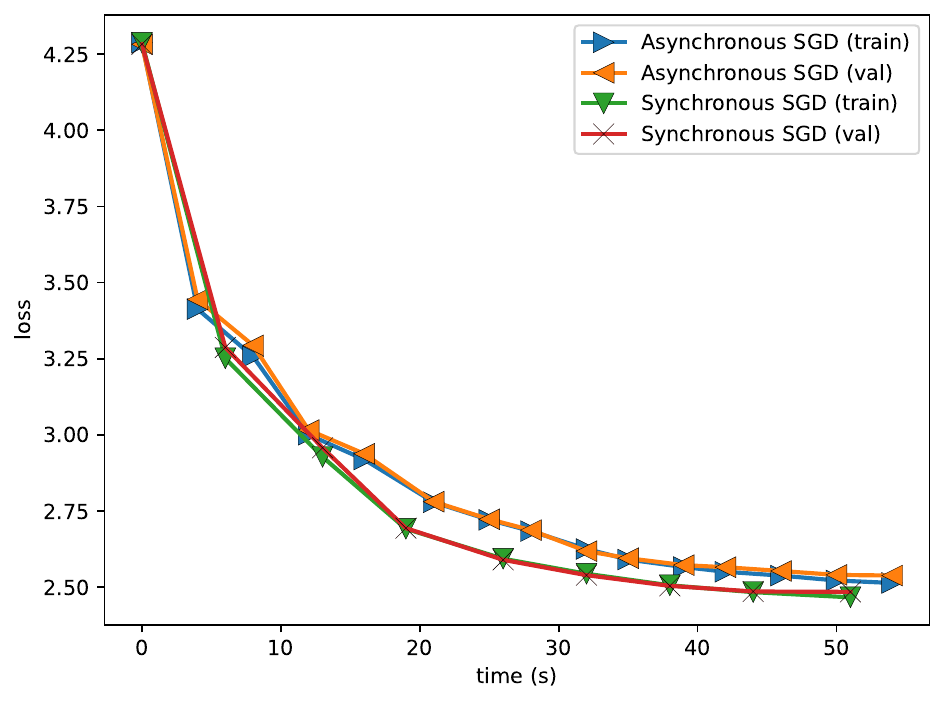}
    \caption{Training and validation losses versus measured wall-clock time for Synchronous SGD (Algorithm~\ref{alg:alg_orig}) and Asynchronous SGD (Algorithm~\ref{alg:asgd}) in NanoGPT, using $4$ workers.}
    \label{fig:nanogpt_sync_vs_async}
\end{figure}

\end{document}

%% file: macros.tex
\usepackage{graphicx}
\usepackage{apptools}
\usepackage[flushleft]{threeparttable}
\usepackage{array,booktabs,makecell}
\usepackage{multirow}
\usepackage{nicefrac}
\usepackage{amsthm}
\usepackage{amsmath,amsfonts,bm,amssymb}
\usepackage{wrapfig}
\usepackage{caption}
\usepackage{siunitx}
\usepackage{nccmath}
\usepackage{empheq}
\usepackage{bbm}
\usepackage{tabularx}

\usepackage{algorithmic}
\usepackage{algorithm}
\usepackage{filecontents}

\usepackage[capitalize,noabbrev]{cleveref}

\usepackage{color}
\usepackage{colortbl}
\definecolor{bgcolor}{rgb}{0.76,0.88,0.50}
\definecolor{bgcolor0}{rgb}{0.93,0.99,1}
\definecolor{bgcolor1}{rgb}{0.8,1,1}
\definecolor{bgcolor2}{rgb}{0.8,1,0.8}
\definecolor{bgcolor3}{rgb}{0.50,0.90,0.50}
\usepackage{tcolorbox}
\usepackage{pifont}

\definecolor{mydarkgreen}{RGB}{39,130,67}
\definecolor{mydarkorange}{RGB}{236,147,14}
\definecolor{mydarkred}{RGB}{192,47,25}
\definecolor{ruby}{RGB}{155,17,30}
\definecolor{chili}{RGB}{191,0,0}
\definecolor{sangria}{RGB}{146,0,10}
\definecolor{burgundy}{RGB}{128,0,32} 
\definecolor{darkred}{RGB}{132,0,0} 
\definecolor{cherry}{RGB}{192,0,0} 
\definecolor{grey}{RGB}{80,80,80} 

\definecolor{blue}{RGB}{0,0,255}

\usepackage[textsize=tiny]{todonotes}

\usepackage{xspace}

\newcommand{\algname}[1]{{{\sf\footnotesize #1}}\xspace}

\newcommand{\norm}[1]{\left\| #1 \right\|}
\newcommand{\sqnorm}[1]{\left\| #1 \right\|^2}
\newcommand{\abs}[1]{\left| #1 \right|}

\newcommand{\R}{\mathbb{R}} 
\newcommand{\N}{\mathbb{N}} 

\newcommand{\Exp}[1]{{\mathbb{E}}\left[#1\right]}
\newcommand{\ExpSub}[2]{{\mathbb{E}}_{#1}\left[#2\right]}
\newcommand{\ExpCond}[2]{{\mathbb{E}}\left[\left.#1\right\vert#2\right]}

\newcommand{\ceil}[1]{\left\lceil#1\right\rceil}

\newcommand{\Prob}[1]{\mathbb{P}\left(#1\right)} 

\newcommand{\cN}{\mathcal{N}}
\newcommand{\cO}{\mathcal{O}}

\newcommand{\cZ}{\mathcal{Z}}

\newcommand{\eqdef}{:=}

\makeatletter
\newcommand{\vast}{\bBigg@{4}}

\def\<{\left\langle}
\def\>{\right\rangle}
\def\[{\left[}
\def\]{\right]}
\def\({\left(}
\def\){\right)}

\newcommand{\flr}[1]{\left\lfloor #1\right\rfloor} 

\theoremstyle{plain}
\newtheorem{theorem}{Theorem}[section]
\newtheorem{proposition}[theorem]{Proposition}

\newtheorem{corollary}[theorem]{Corollary}

\theoremstyle{definition}

\newtheorem{assumption}[theorem]{Assumption}
\theoremstyle{remark}

\newcommand*{\sketchproofname}{Sketch of Proof}

\usepackage{longtable}